\newtheorem{theorem}{Theorem}
\newcommand\independent{\protect\mathpalette{\protect\independenT}{\perp}}
\def\independenT#1#2{\mathrel{\rlap{$#1#2$}\mkern2mu{#1#2}}}
\title{\textbf{A Framework for Understanding Selection Bias in Real-World  Healthcare Data}}
\title{\textbf{A Framework for Understanding Selection Bias in Real-World  Healthcare Data}}
\author{
  Ritoban Kundu \\
  PhD Student\\
  Department of Biostatistics \\ 
  University of Michigan, Ann Arbor, USA.\\
  \texttt{kundur@umich.edu} \\
  \And 
  \text{Xu Shi} \\
  Assistant Professor\\
  Department of Biostatistics \\ 
  University of Michigan, Ann Arbor, USA.\\
  \texttt{shixu@umich.edu} \\
  \And 
  Jean Morrison \\
  Assistant Professor\\
  Department of Biostatistics \\ 
  University of Michigan, Ann Arbor, USA.\\
  \texttt{jvmorr@umich.edu} \\
  \And
  Jessica Barrett\\
  MRC Investigator\\
  Biostatistics Unit, Medical Research Council\\
  University of Cambridge, UK.\\
  \texttt{jessica.barrett@mrc-bsu.cam.ac.uk}\\
 \And
  Bhramar Mukherjee \\
  Professor\\
  Department of Biostatistics and Epidemiology\\ 
  University of Michigan, Ann Arbor, USA.\\
  \texttt{bhramar@umich.edu} \\
}
\begin{document}
\maketitle

\centerline{\Large \bf Abstract}
\vspace{0.1in}
\noindent
Using administrative patient-care data such as Electronic Health Records (EHR) and medical/ pharmaceutical claims for population-based scientific research has become increasingly common. With vast sample sizes leading to very small standard errors, researchers need to pay more attention to potential biases in the estimates of association parameters of interest, specifically to biases that do not diminish with increasing sample size. Of these multiple sources of biases, in this paper, we focus on understanding selection bias. We present an analytic framework using directed acyclic graphs for guiding applied researchers to dissect how different sources of selection bias may affect  estimates of the association between a binary outcome and an exposure (continuous or categorical) of interest. We consider four easy-to-implement weighting approaches to reduce selection bias with accompanying variance formulae. We demonstrate through a simulation study when they can rescue us in practice with analysis of real world data. We compare these methods using a data example where our goal is to estimate the well-known association of cancer and biological sex, using EHR from a longitudinal biorepository at the University of Michigan Healthcare system. We provide annotated R codes to implement these weighted methods with associated inference.

\vspace{0.2in}

\noindent {\bf Keywords:} calibration, directed acyclic graphs, inverse probability weighting, non-probability sample, Michigan Genomics Initiative, post-stratification. 

\section{Introduction} Massive amounts of data are routinely collected in health care clinics for administrative and billing purposes. Longitudinally  varying time-stamped observational patient care data such as electronic health records (EHRs) allow researchers from various disciplines to run agnostic queries \citep{denny2013systematic,hoffmann2017genome} or validate hypothesis driven questions in large databases \citep{roberts2022estimating,shen2022efficacy}.
However these observational studies pose several practical challenges for health research which can negatively impact internal validity and external generalizability of the results \citep{beesley2020analytic}. Without properly accounting for potential sources of biases and study design issues, association analysis using these data can  result in spurious findings \citep{madigan2014systematic} and misguided policies \citep{wang2020characterizing}. One major challenge in removing or reducing bias  in these studies lies in the fact that there can be several potential causes of bias that may be simultaneously at play in an analysis done with a given real world dataset. With larger datasets at researchers' fingertips, the  impact of bias relative to variance becomes even more pronounced. This phenomenon has recently been termed as the ``curse of large n'' \citep{kaplan2014big,bradley2021unrepresentative}.  The common sources of biases related to EHR studies do not disappear with increased sample size and thus with increased precision comes the increased possibility of achieving incorrect inference. This is also termed as the ``big data paradox'' \citep{meng2018genome}. In studies with large  n, bias often dominates the mean squared error of an estimator and thus we need to update our statistical thinking to focus on  strategies for reducing bias  as opposed to the classical thinking around reducing variance.\\

\noindent
Given a scientific question and access to a potentially large and messy database, we first need to define a target population of inference. A careful investigator then needs to think about the possible sources of bias that are most critical for the underlying question at hand. Selection bias, missing data, clinically informative patient encounter process, confounding, lack of consistent data harmonization across cohorts, true heterogeneity of the studied populations, registration of start time or  definition of time zero, and misclassification bias due to imperfect phenotyping are some of the most common sources of bias in EHR. An overview of the different kinds of biases mentioned above with relevant references are given in Table \ref{tab:Table 1}.\\

\noindent
In this paper, we focus on understanding and tackling one major source of bias, namely selection bias in administrative healthcare data. The selection mechanism underlying the question ``Who is in my study sample?'' may vary widely across the different sources of real-world data. For example, in using EHRs in the United States, where there is no universal healthcare or nationally integrated clinical data warehouse, one challenge is understanding factors that influence selection into a given study such as health care seeking behavior and insurance coverage \citep{haneuse2016general,rexhepi2021cancer,heintzman2015supporting,heart2017review}.
Population-based biobanks such as the UK Biobank that are based on invitation to volunteers can lead to specific types of biases such as healthy control bias \citep{fry2017comparison,van2022reweighting}. Nationally representative studies such as the NIH All of Us often have a purposeful sampling strategy that leads to, say, oversampling certain underrepresented subgroups \citep{all2019all}. In contrast, medical center and health system based studies attempt to recruit patients meeting specific criteria within the health system, often through multiple disease/treatment clinics. This leads to enrichment of certain diseases in the study sample \citep{zawistowski2021michigan,pendergrass2011visual}.  In addition, there is non-response and consenting bias among those who are approached to participate in the study.  Since the process of selection into  each study is unique and often unknown, conventional survey sampling techniques to handle probability samples with known sampling/survey weights are not generally applicable for such type of observational data which can be predominantly considered as non probability samples (samples where selection probabilities are unknown) \citep{beesley2022statistical,chen2020doubly}.\\

\noindent
If the issue of selection bias is ignored, it can negatively impact downstream inference \citep{kleinbaum1981selection,christensen1992selection}. Due to unknown selection weights, naive inference from these non probability samples is generally not directly transportable to the target population. On the other hand, it is important to know when the selection process can be ignored and we can proceed with straightforward naive analysis. A structural framework to study selection bias using Directed Acyclic Graphs (DAG) was introduced in \citet{hernan2004structural}. We use this approach to study some common scenarios of selection mechanisms and their effects on estimates of association between a binary disease outcome and an exposure of interest (after adjusting for a set of confounders/covariates) specifically for real world data. We consider a logistic regression model as the underlying disease outcome model.\\

\noindent
After dissecting the selection mechanism to the best of our ability, we need to think about methods that are available to address/account for selection bias.  Some of these methods rely on having individual level data from an external probability-sample. \citet{chen2020doubly}, \citet{wang2021information} adopted the method of pseudolikelihood based estimating equations to account for selection bias in estimating population mean of a response variable in non probability samples using individual level data from an external probability sample. On the other hand, beta regression generalized linear model (glm)  \citep{ferrari2004beta} was used to estimate selection probabilities in \citet{elliot2009combining} and \citet{beesley2022statistical}. When only summary level information are available on an external probability sample, some methods in survey sampling, such as post stratification, raking and calibration techniques as in \citet{kim2010calibration,deville1992calibration,montanari2005nonparametric} can be modified to reduce selection bias in non probability samples \citep{beesley2022statistical,beesley2022case}. We consider simulation settings reflecting common selection mechanisms represented by the DAGs and assess the bias-reduction properties of four of these weighting methods using the general framework of inverse-probability weighted (IPW) logistic regression. The methods differ in how the weights are constructed and what type of external data are required. We also present variance formulae associated with each weighting method.\\

\noindent Using EHR data from a longitudinal biorepository at the University of Michigan Healthcare system, the Michigan Genomics Initiative (MGI) and auxiliary data from a nationally representative probability sample study to define the selection weights, we illustrate how and when the weighted methods enable us to get closer to the truth compared to naive unweighted logistic regression.\\

\noindent
The rest of the paper is organized as follows. In section \ref{sec:DAGs}, we describe the study setting and four common types of selection DAGs. The expected extent of biases under different selection DAGs in a logistic regression outcome-exposure association model is studied using an analytical expression that relates the parameters of the true association model in overall population to the model restricted to the selected sub-population (without any adjustment for selection). Four variants of weighted logistic regression methods with individual or summary level external data (targeted to reduce selection bias in association parameters of interest) are described in sections \ref{sec:weightedlogistic}- \ref{sec:summethods}. We also present variance formulae for each method in section \ref{sec:avar}. In section \ref{sec:onesimu}, we conduct a simulation study comparing the four methods under different selection DAGs. In section \ref{sec:realex}, we estimate the association between cancer and biological sex in the Michigan Genomics Initiative Data using the four IPW methods discussed in the previous sections with associated confidence intervals. We conclude with a brief discussion in section \ref{sec:discussion}.

\section{Methods}\label{sec:methods}
\subsection{Notation}\label{sec:DAGs}

Our main focus is on the relationship between a binary disease indicator $D$ and a set of covariates $\boldsymbol Z$ in a target population from which the internal non-probability sample is drawn. Selection is denoted by a binary indicator $S=1$ which is assumed to be driven by a set of covariates and may also depend on $D$. Figure \ref{fig:models} summarizes the structures of the disease and selection models. $\boldsymbol Z_1$ is the subset of $\boldsymbol Z$ present only in the disease model,  $\boldsymbol Z_2$ influences the disease indicator $D$ and may influence the selection indicator $S$. While $\boldsymbol W$ denotes covariates  present only in the selection model. The primary disease model of interest is :
\begin{equation}
    \text{logit}(P(D=1|\boldsymbol Z_1,\boldsymbol Z_2))=\theta_0+\underset{}{\boldsymbol\theta_1} \boldsymbol Z_1 +\boldsymbol \theta_2 \boldsymbol Z_2\cdot \label{eq:eq1}
\end{equation}
Selection into the internal sample is driven by a probability mechanism $P(S=1|\boldsymbol Z_2,\boldsymbol W,D)$ which is allowed to be completely nonparametric.\\

\noindent
 Our desired target is $D|\boldsymbol Z_1,\boldsymbol Z_2$ as in equation \eqref{eq:eq1}. However, we can only fit $D|\boldsymbol Z_1,\boldsymbol Z_2,S=1$.  As derived in Supplementary section \ref{sec:pr1}, if equation \eqref{eq:eq1} holds one can relate the true model parameters and the ones for  naively fitted model (conditional on $S=1$) by the following key relationship: 
\begin{equation}
\text{logit}(P(D=1|\boldsymbol Z_1,\boldsymbol Z_2,S=1))=\theta_0+\underset{}{\boldsymbol\theta_1} \boldsymbol Z_1 +\boldsymbol\theta_2\boldsymbol Z_2 + \text{log}(r(\boldsymbol Z_1,\boldsymbol Z_2))\cdot  \label{eq:eq2}
\end{equation}
where $$r(\boldsymbol Z_1,\boldsymbol Z_2)=\frac{P(S=1|D=1,\boldsymbol Z_1,\boldsymbol Z_2)}{P(S=1|D=0,\boldsymbol Z_1,\boldsymbol Z_2)}\cdot $$ describes how disease predictors $\boldsymbol Z_1,\boldsymbol Z_2$ modify the selection mechanism.  Unless $r(\boldsymbol Z_1,\boldsymbol Z_2)$ is a constant function of $\boldsymbol Z_1,\boldsymbol Z_2$ (like in a population-based case-control study), estimates obtained from the naive unweighted logistic regression model of $D$ on $\boldsymbol Z_1$ and $\boldsymbol Z_2$ based on just the internal data lead to biased estimates of $\boldsymbol\theta_1$, $\boldsymbol\theta_2$, or both. A common example of such predictor outcome-dependent selection bias is case control studies where factors like education could influence the likelihood of volunteering as controls \citep{geneletti2009adjusting,kleinbaum1981selection}.

\subsection{Selection DAGs}\label{sec:DAGselect}
We study the extent of bias introduced by the additional term $r(\boldsymbol Z_1,\boldsymbol Z_2)$ in equation \eqref{eq:eq2} when we use naive logistic regression on the selected sample, namely $D|\boldsymbol Z_1,\boldsymbol Z_2,S=1$. Note that, 
\begin{align}
   r(\boldsymbol Z_1,\boldsymbol Z_2) &=\frac{P(S=1|D=1,\boldsymbol Z_1,\boldsymbol Z_2)}{P(S=1|D=0,\boldsymbol Z_1,\boldsymbol Z_2)}\nonumber\\
   & =\frac{\int P(S=1|D=1,\boldsymbol Z_1,\boldsymbol Z_2,\boldsymbol W)dP(\boldsymbol W|D=1,\boldsymbol Z_1,\boldsymbol Z_2)}{\int P(S=1|D=0,\boldsymbol Z_1,\boldsymbol Z_2,\boldsymbol W)dP(\boldsymbol W|D=0,\boldsymbol Z_1,\boldsymbol Z_2)}\cdot \label{eq:rexp}
\end{align}
We study the bias in the naive approach under some plausible DAGs with increasing levels of complexity in dependencies among $D$, $S$, $\boldsymbol Z_1$, $\boldsymbol Z_2$ and $\boldsymbol W$. We simplify the expression of $r(\boldsymbol Z_1,\boldsymbol Z_2)$ in equation \eqref{eq:rexp} under the different DAGs introduced in Figure \ref{fig:Fig2}.

\subsubsection*{Example DAG 1: unbiased case}\label{sec:setup1method}
  Under DAG 1 in Figure \ref{fig:Fig2} the arrows from ($\boldsymbol Z_1$, $\boldsymbol Z_2$) to $(\boldsymbol W,S)$  do not exist. In addition, $D$ does not directly affect $\boldsymbol W$. This implies, none of the disease model predictors $\boldsymbol Z_1$ and $\boldsymbol Z_2$ affect the selection mechanism; As shown in Supplementary Section \ref{sec:pr2.1}, the expression of $r(\boldsymbol Z_1,\boldsymbol Z_2)$ in equation \eqref{eq:rexp} simplifies to a constant denoted by $r$ :
\begin{align*}
   r(\boldsymbol Z_1,\boldsymbol Z_2) =r=\frac{P(S=1|D=1)}{P(S=1|D=0)}\cdot 
\end{align*}
 In this case, estimates obtained from an unweighted logistic regression of $D$ on $\boldsymbol Z_1$ and $\boldsymbol Z_2$ in the selected sample (conditional on $S=1$) are unbiased for $\boldsymbol\theta_1$ and $\boldsymbol\theta_2$ even without adjusting for any selection bias. However, the intercept term estimate is biased for $\theta_0$ with the bias being the offset term $\log(r)$. This is equivalent to the results that are well-known for a case-control study \citep{cornfield1959smoking}.

\subsubsection*{Example  DAG 2: $\boldsymbol Z_1\rightarrow \boldsymbol W$ arrow induced bias for coefficient of $\boldsymbol Z_1$ }\label{sec:setup2method}
Under DAG 2 in Figure \ref{fig:Fig2} we observe that there is an additional direct dependence from $\boldsymbol Z_1$ to $\boldsymbol W$ compared to DAG 1.  As shown in Supplementary Section \ref{sec:pr2.2}, the expression for $r(\boldsymbol Z_1,\boldsymbol Z_2)$ under this scenario reduces to, 
\begin{align*}
   r(\boldsymbol Z_1,\boldsymbol Z_2)=\frac{\int P(S=1|D=1,\boldsymbol W)dP(\boldsymbol W|\boldsymbol Z_1)}{\int P(S=1|D=0,\boldsymbol W)dP(\boldsymbol W|\boldsymbol Z_1)}\cdot 
\end{align*}
With the introduction of additional arrow between $\boldsymbol Z_1$ and $\boldsymbol W$, the function  $r(\boldsymbol Z_1,\boldsymbol Z_2)$ depends on $\boldsymbol Z_1$ through $P(\boldsymbol W|\boldsymbol Z_1)$ but does not depend on $\boldsymbol Z_2$. Consequently estimates from a naive logistic regression of  $D|\boldsymbol Z_1,\boldsymbol Z_2,S=1$ leads to biased estimates of $\boldsymbol \theta_1$ but not of $\boldsymbol \theta_2$. Similarly if $\boldsymbol Z_2 \rightarrow \boldsymbol W$ is present and $\boldsymbol Z_1 \rightarrow \boldsymbol W$ is absent, then using identical arguments, we obtain unbiased estimates for $\boldsymbol\theta_1$ and biased for $\boldsymbol\theta_2$.

\subsubsection*{Example DAG 3: $\boldsymbol Z_1\rightarrow \boldsymbol W$ and $\boldsymbol Z_2\rightarrow S$ induced bias for coefficients of $\boldsymbol Z_1$ and $\boldsymbol Z_2$ }\label{sec:setup3method}
Under DAG 3 in Figure \ref{fig:Fig2}, ${\boldsymbol Z_2}$ has a direct causal pathway to $S$ which leads to increase in the strength of dependence between the selection and disease models when compared to DAG 2. As shown in Supplementary section \ref{sec:pr2.3}, the expression for $r(\boldsymbol Z_1,\boldsymbol Z_2)$ in this case is,
\begin{align*}
   r(\boldsymbol Z_1,\boldsymbol Z_2)=\frac{\int P(S=1|D=1,\boldsymbol W,\boldsymbol Z_2)dP(\boldsymbol W|\boldsymbol Z_1)}{\int P(S=1|D=0,\boldsymbol W,\boldsymbol Z_2)dP(\boldsymbol W|\boldsymbol Z_1)}\cdot 
\end{align*}
Therefore, $r(\boldsymbol Z_1,\boldsymbol Z_2)$ is a function of both $\boldsymbol Z_1$ and $\boldsymbol Z_2$. The dependence on $\boldsymbol Z_1$ is through $P(\boldsymbol W|\boldsymbol Z_1)$. The dependence on $\boldsymbol Z_2$ is through $P(S|D,\boldsymbol W,\boldsymbol Z_2)$. The naive unweighted logistic regression method fails to provide unbiased estimates of both $\boldsymbol\theta_1$ and $\boldsymbol\theta_2$. In case where $\boldsymbol Z_2\rightarrow S$ exists  but there is no arrow $\boldsymbol Z_1\rightarrow \boldsymbol W$ then estimates for $\boldsymbol\theta_2$ will be biased and unbiased for $\boldsymbol\theta_1$.
\subsubsection*{Example DAG 4: strong dependence, increased bias for coefficients of $\boldsymbol Z_1$ and $\boldsymbol Z_2$ }\label{sec:setup4method}
DAG 4 in Figure \ref{fig:Fig2} corresponds to a situation where the dependence between the selection and disease model is the most complex among the four selection DAGs we considered. As shown in Supplementary section \ref{sec:pr2.4}, the expression for $r(\boldsymbol Z_1,\boldsymbol Z_2)$ in this case is given by,
\begin{align*}
   r(\boldsymbol Z_1,\boldsymbol Z_2)=\frac{\int P(S=1|D=1,\boldsymbol W,\boldsymbol Z_2)dP(\boldsymbol W|\boldsymbol Z_1,\boldsymbol Z_2,D=1)}{\int P(S=1|D=0,\boldsymbol W,\boldsymbol Z_2)dP(\boldsymbol W|\boldsymbol Z_1,\boldsymbol Z_2,D=0)}\cdot 
\end{align*}
Here, $r(\boldsymbol Z_1,\boldsymbol Z_2)$ depends on $\boldsymbol Z_2$ via $P(S|D,W,\boldsymbol Z_1,\boldsymbol Z_2)$ and $P(\boldsymbol W|\boldsymbol Z_1,\boldsymbol Z_2,D)$. Consequently, the estimated coefficient of $\boldsymbol Z_2$ from a naive unweighted logistic regression of $D|\boldsymbol Z_1,\boldsymbol Z_2,S=1$ potentially becomes more biased for $\boldsymbol\theta_2$ compared to the other DAGs conditioned on the fact that the strength of associations among the variables remain same across the different DAGs. However, the dependence of $r(\boldsymbol Z_1,\boldsymbol Z_2)$ on $\boldsymbol Z_1$ is only through $P(\boldsymbol W|\boldsymbol Z_1,\boldsymbol Z_2,D)$ potentially leading to less bias in estimate of $\boldsymbol \theta_1$ compared to estimate of $\boldsymbol\theta_2$.\\

\noindent
\textbf{Remark:} The issue of correcting for selection bias becomes more challenging in our setting due to the joint dependence of $S$ on both the disease indicator $D$ and other covariates $(\boldsymbol Z_2,\boldsymbol W)$. If in fact there was no arrow $D\rightarrow S$, then conditioned on $(\boldsymbol Z_1,\boldsymbol Z_2)$, all paths between $D$ and $S$ are blocked leading to d-separation of $D$ and $S$ in DAGs 1, 2 and 3, implying $D\independent S|(\boldsymbol Z_1,\boldsymbol Z_2)$. This also implies $P(D|\boldsymbol Z_1,\boldsymbol Z_2,S=1)=P(D|\boldsymbol Z_1,\boldsymbol Z_2)$. Thus, estimates from fitting the naive unweighted model in equation \eqref{eq:eq2} are consistent for the true parameters, $\boldsymbol \theta_1$ and $\boldsymbol \theta_2$. On the other hand for DAG 4, the estimates are biased since conditioned on $(\boldsymbol Z_1,\boldsymbol Z_2)$ the path $D\rightarrow\boldsymbol W\rightarrow \boldsymbol S$ is still unblocked.\\

\noindent
Now that we have established that fitting a model on the selected sample namely, $D|\boldsymbol Z_1,\boldsymbol Z_2,S=1$ can generally (for example in DAGs 2, 3 and 4) lead to biased estimates of the true model parameters $\boldsymbol\theta_1$ and $\boldsymbol\theta_2$  in the target population, we consider four easy-to-use weighted logistic regression methods that address selection bias. The methods differ in terms of their construction of weights and the type of external data required.
\subsection{Weighted Logistic Regression}
\label{sec:weightedlogistic}
In this section, we use the following notation. We assume that we have an internal nonprobability sample with selection indicator $S$ and an external probability sample with selection indicator $S_{\text{ext}}$ drawn from the same target population. Figure \ref{fig:population} is a schematic representation of the assumed scenario. The internal and external samples may or may not have overlap ($S_{\text{both}}=1$ or 0 respectively as in Figure \ref{fig:population}).\\

\noindent
Inverse probability weighted (IPW) regression is a potential remedy to adjust for selection bias and obtain less biased estimates of parameters in the disease model \citep{beesley2022statistical,haneuse2016general}. 
Let $N$ be the size of the target population.  Let $\boldsymbol X=(D,\boldsymbol Z_2,\boldsymbol W)$ denote the selection model covariates and $\pi(\boldsymbol X)=P(S=1|\boldsymbol X)$ denote probability of selection into internal sample. Therefore the size of the internal non-probability sample is given by $\sum_{i=1}^N S_i$. Let $\boldsymbol Z=(1,\boldsymbol Z_1,\boldsymbol Z_2)$ denotes the disease model covariates in equation \eqref{eq:eq1} with parameters denoted by $\boldsymbol\theta$. Thus $\text{dim}(\boldsymbol \theta)=\text{dim}(\boldsymbol Z_1)+\text{dim}(\boldsymbol Z_2)+1$.
In IPW logistic regression, the estimating equations are given by
\begin{equation}
     \frac{1}{N}\sum_{i=1}^{N}\frac{S_i}{\pi(\boldsymbol X_i)}\left\{D_i\boldsymbol Z_i-\frac{e^{\boldsymbol\theta'\boldsymbol Z_i}}{(1+e^{\boldsymbol\theta'\boldsymbol Z_i})}\cdot \boldsymbol Z_i\right\}=\mathbf{0}.\label{eq:eq6}
\end{equation}
where $i$ corresponds to the $i^{\text{th}}$ individual in the target population. The consistency of the estimate $\widehat{\boldsymbol \theta}$ obtained as a solution to equation \eqref{eq:eq6} is presented in Supplementary Section \ref{sec:ipw}.\\

\noindent
For non probability samples, the selection probability of individual $i$, given by $\pi(\boldsymbol X_i)$ is unknown. Since there is no information available on participants who are not selected into the internal study $(S=0)$, the estimation of $\pi(\boldsymbol X)$ requires some form of external information. Auxiliary external data are typically available in two forms, either individual level data or summary level statistics.  Moreover, since the external sample is a probability sample drawn from the target population, we assume that we have access to the known sampling probabilities, say $\pi_{\text{ext}}$.\\

\noindent
In Sections \ref{sec:indimethods} and \ref{sec:summethods}, we describe four methods to estimate the selection probabilities $\pi(\boldsymbol X)$  depending on the nature of available external information. All four methods adopt a two step process: the first step involves obtaining estimates of the selection probabilities, $\widehat{\pi}(\boldsymbol X)$; the second step is estimation of disease model parameters $\boldsymbol\theta$ using the weighted score equation \eqref{eq:eq6} with $\pi(\boldsymbol X)$ replaced by $\widehat{\pi}(\boldsymbol X)$. A summary of all the methods including the unweighted and the four weighted ones are given in Table \ref{tab:Table2}.

\subsection{Estimation of Weights Using Individual Level External Data}\label{sec:indimethods}
In this subsection, we consider two methods to account for selection bias in the internal sample using individual level data from an external probability sample. The first one is adaptation of a pseudolikelihood based estimating equation approach originally proposed in \citet{chen2020doubly} for estimation of population mean of a response variable. We modified this technique to our context. The second one is based on simplex regression method \citep{barndorff1991some}, as an improvement over beta regression that has been previously used in this problem \citep{elliot2009combining,beesley2022statistical}.

\subsubsection{Pseudolikelihood-based estimating equation (PL)}\label{sec:chen}
The selection indicator variable into the internal sample for the $i^{\text{th}}$ individual in the population, $S_i|\boldsymbol X_i$ is a bernoulli random variable with success probability $\pi(\boldsymbol X_i)$. In this method, we assume a parametric model for $\pi(\boldsymbol X)$ indexed by parameters $\boldsymbol\alpha$, specified by $\pi(\boldsymbol X)=\pi(\boldsymbol X,\boldsymbol \alpha)=\frac{e^{\boldsymbol \alpha'\boldsymbol X}}{1+e^{\boldsymbol \alpha'\boldsymbol X}}$.\\

\noindent
The likelihood function of $S_i|\boldsymbol X_i$ $\forall i \in \{1,2,..,N\}$ is given by
\begin{equation}
\mathcal{L}(\boldsymbol \alpha|\{S\}_{i=1}^N,\{\boldsymbol X\}_{i=1}^N)=\prod_{i=1}^{N}\{\pi(\boldsymbol X_i,\boldsymbol\alpha)\}^{S_i}\cdot \{1-\pi(\boldsymbol X_i,\boldsymbol\alpha)\}^{1-S_i}.\label{eq:eq8}
\end{equation}
Equivalently, the log likelihood is
\begin{equation}
l(\boldsymbol \alpha|\{S\}_{i=1}^N,\{\boldsymbol X\}_{i=1}^N)=\sum_{i=1}^{N}S_i \cdot \text{log}\left\{\frac{\pi(\boldsymbol X_i,\boldsymbol \alpha)}{1-\pi(\boldsymbol X_i,\boldsymbol \alpha)}\right\}+\sum_{i=1}^N \text{log}\{1-\pi(\boldsymbol X_i,\boldsymbol \alpha)\}.\label{eq:eq9}
\end{equation}
The first term of the above equation only involves values of $\boldsymbol X$ from the internal non-probability sample. Ideally, the selection parameters $\boldsymbol \alpha$ would have been obtained by maximizing the above log likelihood in equation  \eqref{eq:eq9}, however the second part of the log likelihood $\sum_{i=1}^N \text{log}(1-\pi(\boldsymbol X_i,\boldsymbol \alpha))$ cannot be calculated solely based on the available data from the internal sample. This term require the values of $\boldsymbol X$ from $S=0$ sample.
\citet{chen2020doubly} provide an approximation to the log likelihood using the following expression,
\begin{equation}
\sum_{i=1}^{N}S_i\cdot \text{log}\left\{\frac{\pi(\boldsymbol X_i,\boldsymbol \alpha)}{1-\pi(\boldsymbol X_i,\boldsymbol \alpha)}\right\}+\sum_{i=1}^N \left[\left(\textcolor{red}{\frac{S_{\text{ext},i}}{\pi_{\text{ext},i}}}\right)\text{log}\{1-\pi(\boldsymbol X_i,\boldsymbol \alpha)\}\right].
\label{eq:eq10}
\end{equation}
Since the exact sampling weights of the external probability sample are known, the second term of equation \eqref{eq:eq10} is an unbiased estimator of the second term in equation \eqref{eq:eq9}. Using the logistic form of the internal selection model $\pi(\boldsymbol X,\boldsymbol \alpha)$ and differentiating equation \eqref{eq:eq10} with respect to $\boldsymbol \alpha$, we obtain the following estimating equation
\begin{equation}
        \frac{1}{N}\sum_{i=1}^N S_i\boldsymbol X_i- \frac{1}{N}\cdot \sum_{i=1}^ N \left(\frac{S_{\text{ext},i}}{\pi_{\text{ext},i}}\right)\cdot \pi(\boldsymbol X_i,\boldsymbol\alpha)\cdot \boldsymbol X_i=\underset{}{\mathbf{0}}.\label{eq:eq7}
\end{equation}
Newton-Raphson method is used to estimate $\boldsymbol \alpha$ from the above equation. We obtain the estimates of internal selection probabilities,
$\pi(\boldsymbol X_i,\boldsymbol\alpha)$ by plugging the estimates of $\boldsymbol \alpha$ in the logistic functional form of the selection model.
\subsubsection{Simplex Regression (SR)}
The main idea underlying this method is based on the identity
\begin{equation}
       \pi(\boldsymbol X)=P(S=1|\boldsymbol X)=P(S_{\text{ext}}=1|\boldsymbol X)\cdot \left(\frac{p_{11}(\boldsymbol X)+p_{10}(\boldsymbol X)}{p_{11}(\boldsymbol X)+p_{01}(\boldsymbol X)}\right).\label{eq:eq11}
 \end{equation}
where, $p_{jk}(\boldsymbol X)=P(S=j,S_{\text{ext}}=k|\boldsymbol X,S=1\text{  or   }S_{\text{ext}}=1)$. The proof of this above identity \eqref{eq:eq11}, is provided in Supplementary Section \ref{sec:simproof}. From equation \eqref{eq:eq11}, we observe that we need to estimate $P(S_{\text{ext}}=1|\boldsymbol X)$ and $p_{11}(\boldsymbol X), p_{10}(\boldsymbol X), p_{01}(\boldsymbol X)$ for each internal sample individual to calculate the internal selection probabilities $\pi(\boldsymbol X)$. We adopt two separate regression frameworks to model  the dependencies of $P(S_{\text{ext}}=1|\boldsymbol X)$ and $p_{jk}(\boldsymbol X)$ on $\boldsymbol X$ respectively.\\

\noindent
\textbf{Estimation of $P(S_{\text{ext}}=1|\boldsymbol X)$ :} We used simplex regression \citep{barndorff1991some} to model dependence of  $P(S_{\text{ext}}=1|\boldsymbol X)$ on $\boldsymbol X$. Simplex regression is one of the glm regression methods with proportions as the response.  The main idea is to fit the best possible model in the external sample to the known design probabilities $\pi_{\text{ext}}$ as a function $\boldsymbol X$, say, $\text{logit}(\mathbb{E}(\pi_{\text{ext}}|\boldsymbol X))=\boldsymbol \delta' \boldsymbol X$. The parameter $\boldsymbol\delta$ is estimated by maximizing the following likelihood function based on the external probability sample obtained using the simplex distribution
\begin{equation}
    \prod_{i=1}^N S_{\text{ext},i}\cdot[2\pi\sigma^2\{\pi_{\text{ext},i}(1-\pi_{\text{ext},i})\}^3]^{-\frac{1}{2}}\cdot e^{-\frac{1}{2\sigma^2}d(\pi_{\text{ext},i},\mathbb{E}(\pi_{\text{ext},i}|\boldsymbol X_i))}\cdot \label{eq:eq12}
\end{equation}
with the unit deviance function, $$d(\pi_{\text{ext},i},\mathbb{E}(\pi_{\text{ext},i}|\boldsymbol X_i))=\frac{(\pi_{\text{ext},i}-\mathbb{E}(\pi_{\text{ext},i}|\boldsymbol X_i))^2}{\pi_{\text{ext},i}(1-\pi_{\text{ext},i})\mathbb{E}(\pi_{\text{ext},i}|\boldsymbol X_i)^2(1-\mathbb{E}(\pi_{\text{ext},i}|\boldsymbol X_i))^2}\cdot$$ 
In \texttt{R}, the \texttt{simplexreg} package \citep{simplexreg} provides estimates of $\widehat{\boldsymbol\delta}$ by maximizing the likelihood in \eqref{eq:eq12}. 
$P(S_{\text{ext}}=1|\boldsymbol X)$ for individuals in the internal non-probability sample are then estimated by the plug-in  estimate
$(\text{exp}(\widehat{{\boldsymbol\delta}}'\boldsymbol X)/{1+\text{exp}({\widehat{\boldsymbol\delta}'\boldsymbol X}}))$.\\

\noindent
\textbf{Estimation of $p_{jk}(\boldsymbol X)$:} On the other hand, $p_{jk}(\boldsymbol X)$ is estimated based on the combined data (external union internal) sample. We define a nominal categorical variable with three levels corresponding to different values of $(j,k)$ pairs ($(j,k)=(1,1),(1,0),(0,1)$). An individual with level (1,1) is a member of both samples; (0,1) indicates a member of the exterior sample only, whereas (1,0) corresponds to the internal sample only. The multicategory response is again regressed on the internal selection model variables, $\boldsymbol X$ using a multinomial regression model and we obtain estimates of $p_{jk}(\boldsymbol X)$. \\

\noindent
Using the estimates of $p_{jk}(\boldsymbol X)$ from multinomial regression and $P(S_\text{ext}=1|\boldsymbol X)$ from simplex regression respectively, the selection probabilities for the internal sample,  $P(S=1|\boldsymbol X)$, were estimated from equation \eqref{eq:eq11} which serves as $\pi(\boldsymbol X)$ in equation \eqref{eq:eq6}. 
\subsection{Estimation of Weights Using Summary Level Statistics} \label{sec:summethods}
In this section, we discuss two methods to account for selection bias using summary level information that correspond to the target population. These summary information may be obtained directly from the target population (such as from census data) or  from summary data that has been made available by applying known survey design weights to an external sample drawn from the target population. We consider two types of summary level statistics namely, joint and marginal probabilities of $\boldsymbol X=(D,\boldsymbol Z_2,\boldsymbol W)$. Similar to \citet{beesley2022statistical}, we adopt post-stratification methods \citep{holt1979post} when joint probabilities of the selection variables $\boldsymbol X$ are available to us. On the other hand when only marginal probabilities are available, we modify the calibration method used in \citet{wu2003optimal} originally proposed for obtaining modified sampling probabilities from survey data.

\subsubsection{Poststratification (PS)}
We assume the joint distribution of the selection variables in the target population, namely $P(\boldsymbol X)$ are available to us. In case of continuous selection variables, we can at best expect to have access to joint probabilities of discretized versions of those variables.  Beyond this coarsening, obtaining joint probabilities of a large multivariate set of predictors become extremely challenging. In such cases, several conditional independence assumptions will be needed to specify a joint distribution from sub-conditionals.\\

\noindent
We consider the scenario where both $\boldsymbol Z_2$ and $\boldsymbol W$ are continuous variables. Let $\boldsymbol Z_2'$ and $W'$ be the discretized versions of $\boldsymbol Z_2$ and $\boldsymbol W$ respectively. We assume that the joint distributions for $\boldsymbol X'=(D,\boldsymbol Z_2',\boldsymbol W')$ in the target population are available from external sources. The post stratification method estimates the selection weights (inverse of selection probabilities into the internal sample) for the $i^{\text{th}}$ individual belonging to the internal sample by,
$$w_i\propto \frac{P(\boldsymbol X_i')}{P(\boldsymbol X_i'|S_i=1)}\cdot $$
The numerator of the above expression is the known population level joint distribution for the discretized selection variables $\boldsymbol X_i'$  obtained from external sources. On the other hand, the denominator, is the same probability empirically estimated from the internal sample. The inverses of the weights, $w_i$ are normalized  to obtain estimates of $\pi(\boldsymbol X_i)$ in equation \eqref{eq:eq6}. 

\subsubsection{Calibration (CL)}\label{sec:cal}
Calibration methods are often used in survey sampling to obtain corrected sampling weights in probability samples \citep{wu2003optimal}. We borrowed this idea to estimate internal selection probabilities $\pi(\boldsymbol X)$ by a model, $\pi(\boldsymbol X,\boldsymbol\alpha)$ indexed by parameters $\boldsymbol\alpha$, when marginal population means of the selection variables $\boldsymbol X$ are available from external sources. 
Using target population size $N$ and the given marginal population means of the selection variables $\boldsymbol X$, we derive the population totals, namely $\sum_{i=1}^N \boldsymbol X_i$. We obtain the estimate of $\boldsymbol\alpha$ by solving the following calibration equation,
\begin{equation}
    \sum_{i=1}^N\frac{S_i\boldsymbol X_i}{\pi(\boldsymbol X_i,\boldsymbol\alpha)}=\sum_{i=1}^ N\boldsymbol X_i.  \label{eq:eq13}
\end{equation}
\noindent
In this approach, we match the sum of each selection variable in the internal sample (as estimated by inverse probability weighted sum on the LHS in equation \eqref{eq:eq13}) with the available total from the target population (RHS of equation \eqref{eq:eq13}), analogous to the method of estimation by first moment matching. Similar to Section \ref{sec:chen}, Newton-Raphson method is used to solve equation \eqref{eq:eq13} to estimate $\boldsymbol \alpha$ and henceforth obtain $\widehat{\pi}(\boldsymbol X_i,\boldsymbol \alpha)$ for each individual in the internal sample. We used a logistic specification of $\pi(\boldsymbol X,\boldsymbol\alpha)$ in our numerical work, but any selection model consistent for $\pi(\boldsymbol X)$ will lead to consistent estimates of $\boldsymbol \theta$ in equation \eqref{eq:eq6}.

\subsection{Asymptotic Distribution and Variance Estimation}\label{sec:avar}
We study the asymptotic distribution of the IPW estimator $\widehat{\boldsymbol \theta}$ under each of the four weighting methods.  We consider infinite population inference with population size $N$ going to infinity. We assume that all the variables, including $S$, $S_{\text{ext}}$, $\boldsymbol Z_1$ and $\boldsymbol X=(D,\boldsymbol Z_2,\boldsymbol W$) are random. This asymptotic setting is intrinsically different than finite population asymptotics, often followed in the survey literature where all the variables other than the selection indicators are considered to be non-random. This asymptotic analysis allows us to derive consistent estimators of the variance of $\widehat{\boldsymbol\theta}$ to be used in subsequent inference.\\

\noindent
\textbf{PL:} For PL, we derive the consistency, asymptotic normality and asymptotic variance estimator of $\widehat{\boldsymbol\theta}$ in Supplementary Section \ref{sec:proofPL}. The two-step variance estimation procedure incorporates uncertainty associated with estimates of the selection model parameter $\widehat{\boldsymbol \alpha}$ that are obtained by solving equation \eqref{eq:eq7}.\\
\noindent
\textbf{SR:} For SR, due to composite nature of the selection model, we use an approximation of the variance ignoring uncertainty in the estimates of the selection model parameters. The details of this approach are provided in Supplementary Section \ref{sec:ipw}.\\
\noindent
\textbf{PS:} For PS, the weights are known from summary statistics of the target population and the variance formula is provided in Supplementary Section \ref{sec:ipw}.\\
\noindent
\textbf{CL:} Similar to PL, for CL  we considered the uncertainty associated with estimates of the selection model parameter $\widehat{\boldsymbol \alpha}$ that are obtained by solving equation \eqref{eq:eq13} while deriving the estimated asymptotic variance of $\widehat{\boldsymbol\theta}$. Supplementary Section \ref{sec:proofcl} contains the details.\\

\noindent
We compared the average of the variance estimators proposed above across simulated datasets with the empirical Monte Carlo variances of the obtained parameter estimates. In particular, we quantify the potential inconsistency of our variance estimator for the SR method due to omission or ignorance of the uncertainty associated with estimation of the parameters of the selection model.

\section{Simulation Study}\label{sec:onesimu}
In this section, we present three simulation scenarios for each of the four DAGs introduced in Figure \ref{fig:Fig2}. The three setups differed in the assumption of the functional form of the selection model of the internal sample, namely $\pi({\boldsymbol X})$. For all three setups we consider the following generative distributions
\singlespacing
\begin{itemize}
    \item Disease model covariates \textbf{$\boldsymbol Z_1$ and $\boldsymbol Z_2$:} The joint distribution of $(Z_1,Z_2)$ is specified as,
     \begin{align*}
    \begin{pmatrix}
           Z_1 \\
           Z_2
         \end{pmatrix} \sim \mathcal{N}_2\left(\begin{pmatrix}
           0\\
           0
         \end{pmatrix},\begin{pmatrix}
           1 & 0.5\\
           0.5 & 1
         \end{pmatrix}\right)\cdot
  \end{align*}
    \item Disease outcome $D$: $D$ is simulated from the conditional distribution specified by,
    $$D|Z_1,Z_2\sim\text{Ber}\left(\frac{e^{\theta_0+\theta_1Z_1+\theta_2Z_2}}{1+e^{\theta_0+\theta_1Z_1+\theta_2Z_2}}\right)\cdot$$
    where, $\theta_0=-2$, $\theta_1=0.5$ and $\theta_2=0.5$.
    \item Selection model covariate $\boldsymbol W$\textbf{:} $W$ is an univariate random variable simulated from the conditional distribution of 
    $W|Z_1,Z_2,D$, specified by,
     $$W|Z_1,Z_2,D \sim  \mathcal{N}(\gamma_1\cdot D + \gamma_2\cdot Z_1 + \gamma_3\cdot Z_2,1)\cdot$$ 
    to incorporate the dependencies of $W$ on $D$, $Z_1$ and $Z_2$ respectively. We set $(\gamma_1,\gamma_2,\gamma_3)=(0,0,0),(0,1,0),(0,1,0),(1,1,1)$ for the four DAGs respectively.
   \item The internal sample selection models for the three setups are specified as follows.
   \begin{itemize}
       \item \textbf{Setup 1:} We set target population size to $N=50,000$. The functional form of the selection model is given by,
    \begin{equation}
    \text{logit}(P(S=1|Z_2,W,D))=\alpha_{0}+\alpha_{1}\cdot Z_2+\alpha_{2}\cdot W + \alpha_{3}\cdot D.
    \end{equation}
     We set $\alpha_0=-0.8$, $\alpha_1=0$ for DAGs 1 and 2 and $\alpha_1=0.7$ for DAGs 3 and 4, $\alpha_2=0.3$, $\alpha_3=1$.
     \item \textbf{Setup 2:}  The internal selection model in Setup 1 is perturbed by a constant multiplication given by,
$$P(S=1|Z_2,W,D)=0.4\cdot\left(\frac{e^{\alpha_{0}+\alpha_{1}\cdot Z_2+\alpha_{2}\cdot W + \alpha_{3}\cdot D}}{1+e^{\alpha_{0}+\alpha_{1}\cdot Z_2+\alpha_{2}\cdot W + \alpha_{3}\cdot D}}\right)\cdot$$
We set the exact same values for $\alpha_0,\alpha_1,\alpha_2,\alpha_3$ as in Setup 1. This pertubation of the selection model leads to a misspecification issue for pseudolikelihood and calibration methods, when we fit the two methods using a logistic form.
In order to ensure comparable sample size of internal data for both the simulation scenarios, we increased the target population size to 125,000 which is 2.5 times the previous population size, 50000. 
\item \textbf{Setup 3:} In this setup, we incorporate interaction terms of $(D,Z_2)$ and $(D,W)$ in the selection model. The new selection model is given by,
\begin{align*}
    \text{logit}(P(S=1|Z_2,W,D))=\alpha_{0}+\alpha_{1} Z_2+\alpha_{2}W + \alpha_{3} D + \alpha_4  D Z_2 + \alpha_5 D  W.
\end{align*}
    The  values of $\alpha_0,\alpha_1,\alpha_2,\alpha_3$ are identical to Setup 1. We set $(\alpha_4,\alpha_5)=(0,0.4)$ for DAGs 1 and 2 and $(0.5,0.4)$ for DAGs 3 and 4. Therefore this setup leads to a misspecification issue in pseudolikelihood, simplex regression and calibration methods when we fit these models without considering the interaction terms.
   \end{itemize}
    \item \textbf{External Selection Model:} For external data, the selection model can take any functional form and the selection probabilities are known to us. In our case, we assumed that the functional form of the external selection model is given by,
    \begin{align*}
    \text{logit}(P(S_{\text{ext}}=1|Z_2,W,D))=\nu_{0}+\nu_{1}\cdot Z_2+\nu_{2}\cdot W + \nu_{3}\cdot D. \label{eq:eqsimuext}
    \end{align*}
    The values of $(\nu_0,\nu_1,\nu_2,\nu_3)$ are given by $(-0.6,1.2,0.4,0.5)$. The probabilities $P(S_{\text{ext}}=1|Z_2,W,D)$ from the above equation  were multiplied by a factor of 0.75.
\end{itemize}

\noindent

\noindent
For the PS method, the joint distribution of $(D,Z_2',W^{'})$ are available from external sources, where both $Z_2'$ and $W^{'}$ are the coarsened versions of $Z_2$ and $W$. The criteria that we used to discretize these variables in the simulations is described in Supplementary Section \ref{sec:coarse}. All simulation results are summarized over 1000 replications.\\

\noindent
\textbf{Evaluation Metrics for Comparing Methods}\\

\noindent
In all the simulation setups, we compared the bias, relative bias and relative mean squared error (RMSE) relative to the unweighted method for both $\theta_1$ and $\theta_2$ across the four different weighted methods introduced in the previous section. The bias and relative bias \% in estimation for a parameter $\theta$ using $\widehat{\theta}$ are given by,
\begin{align*}
    & \text{Bias}(\theta,\widehat{\theta})=\frac{1}{R}\sum_{r=1}^R(\widehat{\theta_r}-\theta)\hspace{0.2cm}\text{and}\hspace{0.2cm}\text{Relative Bias \%}(\theta,\widehat{\theta})=\frac{|\text{Bias}(\theta,\widehat{\theta})|}{|\theta|}*100.
\end{align*}
where, $\widehat{\theta_r}$ is the estimate of $\theta$ in the $r^{\text{th}}$ simulated dataset,  $r=1,2,..R$ and $R=1000$.\\

\noindent
RMSE of $\widehat{\theta}$ with respect to the unweighted estimator $\widehat{\theta}_{\text{naive}}$ is defined as the ratio of the two MSEs given by,
$$\text{RMSE}(\widehat{\theta},\widehat{\theta}_{\text{naive}})=\frac{\frac{1}{R}\sum_{r=1}^R(\widehat{\theta_r}-\theta)^2}{\frac{1}{R}\sum_{r=1}^R(\widehat{\theta}_{\text{naive,r}}-\theta)^2}\cdot$$

\subsection{Results from the Simulation Study}\label{sec:resultsimu}
\subsubsection{Example DAG 1: unbiased case}
Under DAG 1, as described in Section \ref{sec:setup1method}, the selection bias-inducing term in the observed disease model $(D|Z_1,Z_2,S=1)$, namely $r(Z_1,Z_2)$ is a constant function in $Z_1,Z_2$.  We proved in Supplementary Section \ref{sec:pr2.1} that the unweighted method produces unbiased estimates of $\theta_1$ and $\theta_2$ for this DAG.  This theoretical result is evident from the simulation results in Table \ref{tab:Table 3} and Figure \ref{fig:resultstheta} under all the three setups. All five methods including the unweighted approach estimate both the disease model parameters with high accuracy. The highest relative bias  among all the three setups is 0.82\%, which implies that all the methods accurately estimate both the disease model parameters. Therefore, the results show that the different specifications of the functional form of the selection model do not affect the performances of any of the models significantly other than minor inflation in variance of the parameter estimates in some cases. The RMSE of all four weighted methods are close to 1.

\subsubsection{Example  DAG 2: $Z_1\rightarrow W$ arrow induced bias for coefficient of $Z_1$}
Under DAG 2, we showed in Section \ref{sec:setup2method} that $r(Z_1,Z_2)$ is a function of $Z_1$ only.  With introduction of the dependence, ($Z_1\rightarrow W$), the relative bias  in estimation of $\theta_1$ using the unweighted method increases to at least 9.6\% compared to 0.30\% in DAG 1, under all the three simulation setups. Under setup 1, the selection model for PL and CL are correctly specified.
In this setup, PL and CL perform best in terms of both bias and RMSE in estimating $\theta_1$, whereas SR estimates $\theta_1$ with a higher bias (1.72\%). Due to loss of information in discretizing selection variables, the relative bias and RMSE of PS is the highest among all the five methods (20.78\% and 2.07 respectively) under setup 1. However under simulation setups 2 and 3, the biases and RMSEs of both PL and CL increase significantly due to misspecification of selection models. For PS, the functional form of the selection model affect neither bias
nor RMSE. The estimate of $\theta_1$ using SR under setup 2 is close to the estimate in setup 1 since the estimation procedure of SR do not depend on the logistic form of the selection model. Therefore the effect of perturbation of the selection probabilities by a constant in setup 2 for SR is inconsequential. However, the introduction of interaction term in setup 3 increases the relative bias and RMSE for SR to 25.96\% and 6.33 respectively since it assumes no interaction in the estimation method. In Setup 3, the RMSE of all the four unweighted methods are remarkably high (atleast 6) due to severe misspecification.\\

\noindent
On the other hand, due to lack of dependence of $r(Z_1,Z_2)$ on $Z_2$, all the methods produce accurate estimate of $\theta_2$ in terms of both bias and RMSE under all the three simulation setups. 

\subsubsection{Example DAG 3: $Z_1\rightarrow W$ and $Z_2\rightarrow S$ induced bias for coefficients of $Z_1$ and $Z_2$ }
Under DAG 3, $r(Z_1,Z_2)$ is a function of both $Z_1$ and $Z_2$. Consequently under all the setups, the relative biases in estimation of $\theta_2$ increased to at least 16\% using the unweighted logistic method. Due to correct specification of the selection model for PL and CL in Setup 1, we observe that these two methods accurately estimate both $\theta_1$ and $\theta_2$. However, under both setups 2 and 3, the relative bias of estimates of $\theta_1$ and $\theta_2$ using PL and CL increases by a large amount. The relative bias in estimation of $\theta_2$ using SR increase to 16.74 in DAG 3 \% from 0.28\% in DAG 2 under setup 1 due to incorrect model specification. The bias in  estimation of $\theta_2$ using SR did not change much in Setups 2 and 3 from Setup 1. For $\theta_1$, we observe a big increase in bias in Setup 3 using SR. On the other hand in terms of RMSE, PL, CL and SR perform better than the unweighted logistic regression except for estimation of $\theta_1$ in Setup 3, where RMSE increased to atleast 15. All the four weighted  estimates being highly biased in compared to the naive estimator lead to this abrupt hike in RMSE. In all the other cases, the RMSE of these methods are below 1. The estimate of $\theta_1$ using PS in all the three setups performs poorly with high relative bias (at least 24\%) and RMSE (at least 3.61). On the other hand, both relative bias and RMSE in estimation of $\theta_2$ using PS is fairly low (at most 1.95\% and 0.09 respectively).
\subsubsection{Example DAG 4: strong dependence, increased bias for coefficients of $Z_1$ and $Z_2$}
Due to increase in dependence of $r(Z_1,Z_2)$ on $Z_2$, the bias in estimation of $\theta_2$ is the highest among all the DAGs for the unweighted method. The relative bias in estimation of $\theta_2$ increases to at least 40.34\% in all the three setups using the unweighted method. Similar to the previous DAGs, under Setup 1, PL and CL perform best in terms of both RMSE and bias among all the methods in estimation of the disease parameters due to correct specification of the selection model. Under Setups 2 and 3, these two methods perform poorly in terms of model misspecification. For SR, we observe an increase in relative bias to 29\% in estimation of $\theta_2$ compared to DAG 3 in setup 1. The bias in estimation of both $\theta_1$ and $\theta_2$ decrease compared to other DAGs using PS. For all the methods in most scenarios, the RMSE is less than 1, which implies better performance of the weighted methods compared to the unweighted logistic method.

\subsubsection{Summary Takeaways}
The comparative performances of the different methods under all varying simulation scenarios are  summarized in Figure \ref{fig:modelresults}.\\

\noindent
\textbf{Setup 1- Correctly Specified Individual Selection Model:}  As expected PL and CL estimate both the disease model parameters accurately when the selection model is correctly specified under all the four DAGs. They offer better solutions than using the naive logistic regression across all scenarios. It is not fair to compare PS and SR since they use different types of external data. Still, between PS and SR, there is no clear winner. While PS does well in DAG 4, SR has better performance in simpler DAGs. However SR is also always better than naive logistic regression in all simulations.  While for PS there could be very large RMSEs as we noticed in DAGs 2 and 3 (Table \ref{tab:Table 1}) due to high bias. The loss in information in discretizing the selection variables leads to incorrect selection weights estimation using PS. As a result, we observe that even with help of only marginal means of the selection variables from target population, under correct specification of selection model, PL works better than PS. However in DAG 4 due to high dependence among the different variables, the information contained in the discretized versions become adequate to estimate accurate weights for PS. \\

\noindent
\textbf{Variance Estimation/Uncertainty Quantification:} Supplementary Figures \ref{fig:Var} and \ref{fig:cp} assess the performances of the proposed variance estimators for the weighted methods under all the DAGs in Setup 1. Supplementary Figure \ref{fig:Var} shows the deviation of the estimated variance of $\widehat{\boldsymbol \theta}$ using the variance estimators discussed in Section \ref{sec:avar} from the Monte Carlo variance under all the four DAGs. We observe that the variance estimators for the four methods estimate accurately the Monte Carlo variance except for SR variance estimator in case of DAG 4. Supplementary Figure \ref{fig:cp} shows the coverage probabilities of the 95\% confidence intervals constructed using the proposed variance estimators. The coverage probabilities of PL and CL are close to 0.95 for all the four DAGs in Setup 1. The coverage probabilities are conservative for PS in DAGs 1,2 and 3. In DAG 4 the coverage probability is less than 0.5 for PS. The coverage probabilities of SR in DAGs 1 and 2 are comparable to the other methods. On the other hand in DAGs 3 and 4 the coverage probability of SR is close to 0. The main reason behind the low coverage probability is due to high bias of SR in DAGs 3 and 4 observed from subfigure (B) of Figure \ref{fig:resultstheta}. \\

\noindent
\textbf{Setup 2- Incorrectly Specified Selection Model 1:} In Setup 2, our results indicate that all methods performed remarkably well in DAG 1, similar to the previous setup since the bias term $r(Z_1,Z_2)$ is constant in $(Z_1,Z_2)$. SR and PS did not show major changes from the previous setup and the performance in terms of relative bias \% and RMSE  are better than PL and CL. In DAG 4, PS estimate  both the disease model parameters with low relative bias \% and RMSE. On the other hand, in DAGs 3 and 4, we observed highly inaccurate estimates for PL and CL in terms of relative bias (\%). 
Our findings suggest that these models are highly sensitive to selection model misspecification. \\

\noindent
\textbf{Setup 3- Incorrectly Specified Selection Model 2:} The key takeaways in this setup are similar to the previous one. However, the RMSE of the all four weighted methods are extremely high in DAGs 2 and 3 in estimation of $\theta_1$. Due to high degree of selection model misspecification with introduction of interaction among the selection variables, the selection weights estimates of the four unweighted methods are extremely inaccurate which leads to a huge increase in RMSE. However in DAG 4, the performance of the unweighted method degraded by a huge extent and as a result, the RMSEs of the weighted methods are much less compared to DAGs 2 and 3.

\section{Data Application: The Michigan Genomics Initiative}\label{sec:realex}

\subsection{Introduction}
The Michigan Genomics Initiative (MGI) is a rolling enrollment health EHR-linked biorepository within the University of Michigan Healthcare System consisting of over 93,000 participants primarily recruited through surgical encounters at Michigan Medicine. Due to the perioperative recruitment strategy, participants in MGI exhibit a lower overall health status and higher prevalence of cancer compared to the general population \citep{zawistowski2021michigan}. Time-stamped ICD (International Classification of Disease) diagnosis data are available for each patient. A rich ecosystem of additional information is available, including lifestyle and behavioral risk factors, lab and medication data, geo-coded residential information, socioeconomic metrics, and other patient-level, census tract-level, and provider-level characteristics.\\

\noindent
In this section we use the MGI data to study the association between cancer ($D$) and biological sex ($Z_1$) in the target US adult population. The direction of association in this case is well known from national SEER  (Surveillance, Epidemiology, and End Results) registry estimates. SEER data indicates lower lifetime cancer risk among women relative to men, with corresponding marginal log-odds ratios of -0.24 (2008-2010), -0.19 (2010-2012), -0.08 (2012-2014), and -0.07 (2014-2016) respectively (\url{seer.cancer.gov}). This known target national-level true association presents us with an opportunity to assess and compare the methods when applied to MGI in terms of bias in $\hat{\theta}_1$. In this analysis we investigate the marginal/unadjusted and age ($Z_2$) adjusted association between cancer and biological sex. For all the methods, we divided age into three categories, namely (18-39) (reference level), (40-59) and ($\geq 60$). 
For the selection model we use  diabetes, race, smoking currently, BMI (body mass index) and CHD (coronary heart disease) as $\boldsymbol W$. BMI has four categories, namely (0-18) (reference level), [18.5-25), [25-30) and ($\geq 30$).
For the individual level data methods (PL and SR), we use publicly available NHANES 2017-18 (National Health and Nutrition Examination Survey) data to construct IPW weights (\url{cdc.gov/nchs/nhanes}). NHANES is a complex multistage probability sampling design used to select participants representative of the civilian, non-institutionalized US population.
On the other hand, we use age specific and marginal summary statistics from SEER, the US Census and the US CDC (Centers for Disease Control and Prevention) to construct post-stratification and calibration weights respectively.

\subsection{Descriptive Summaries}
We select adult participants in NHANES since MGI consists of participants with age 18 years or older. After removing observations with incomplete data on the variables of interest, we are left with 80947 and 5153 participants in MGI and NHANES respectively. Table \ref{tab:TableMGI} presents a comprehensive summary of the variables of interest in both the MGI and NHANES datasets. The reported statistics for the NHANES dataset in this table are unweighted. As expected, MGI is enriched with cancer patients, with 48.7\% participants having a past or current cancer diagnosis ($D$).  The NHANES dataset demonstrates a prevalence of cancer at 10.3\%. The two studies differ in terms of the distribution of sex ($Z_1$), age ($Z_2$) and other selection covariates ($\boldsymbol W$).


\subsection{Analyses of MGI Data} \label{sec:realana}
In this data example in the disease model we consider cancer, sex and age as $D$, $Z_1$ and $Z_2$ respectively. The sex variable is coded as 1 for female participants.
We are primarily interested in estimation of the marginal and age adjusted association parameters between cancer and sex, $\theta_1$, which is defined by the following equation.
\begin{equation*}
    \text{logit}(P(D=1|Z_1,\color{red}Z_2 \color{black}))=\theta_0+\theta_1\cdot  Z_1 +\color{red}\theta_2 \cdot \mathbb{I}(40\leq Z_2 \leq 59) +  \theta_3 \cdot \mathbb{I}(60\leq Z_2 \leq 150)\color{black}.
\end{equation*}
In the marginal association model, we did not adjust for age. The additional terms in the adjusted model are displayed in red. Note that the reference data from SEER corresponds to the marginal association model of cancer on sex without adjusting for age.\\

\noindent
For all the four weighting methods, we first estimated the IPW weights without including cancer ($D)$ as a selection variable (defined as $w_0$, inverse of $P(S=1|\boldsymbol{W},\color{red}Z_2 \color{black})$). This is due to the small number of cancer cases in NHANES compared to MGI as displayed in Table \ref{tab:TableMGI}. Then we modified the weights $w_0$ for the two individual level methods (PL and SR) using the following expression to incorporate cancer into the selection model,
\begin{align}
    w=\frac{1}{P(S=1|D,\boldsymbol W,\color{red}Z_2 \color{black})}&=\frac{P(D|\boldsymbol W,\color{red}Z_2 \color{black})}{P(D|S=1,\boldsymbol W,\color{red}Z_2 \color{black})}\cdot \frac{1}{P(S=1|\boldsymbol W,\color{red}Z_2 \color{black})}\label{eq:eqcwt1}\\
    &=  w_{0}\cdot \frac{P(D=1|\boldsymbol W,\color{red}Z_2 \color{black})^{D}\cdot(1-P(D=1|\boldsymbol W,\color{red}Z_2 \color{black}))^{(1-D)}}{P(D=1|\boldsymbol W, \color{red}Z_2 \color{black},S=1)^{D}\cdot (1-P(D=1|\boldsymbol W,\color{red}Z_2 \color{black},S=1))^{(1-D)}}\cdot\label{eq:eqcwt}
\end{align}
\noindent
where $P(D=1|\boldsymbol W,\color{red}Z_2 \color{black},S=1)$ is obtained from fitting a logistic regression model of $D$ on $\color{red}Z_2 \color{black},\boldsymbol W$ in MGI. On the other hand, we fit a weighted logistic regression in the NHANES data with the given sampling weights to obtain $P(D=1|\color{red}Z_2 \color{black},\boldsymbol W)$. The details of deriving equation \eqref{eq:eqcwt1} is provided in Supplementary Section \ref{sec:realcancerweights}. In case of the summary level methods PS and CL, estimation of $P(D=1|\boldsymbol W,\color{red}Z_2 \color{black})$ in equation \eqref{eq:eqcwt} is not possible due to limited availability of joint summary statistics from population. Therefore we approximate $P(D=1|\boldsymbol W,\color{red}Z_2 \color{black})$ in equation \eqref{eq:eqcwt} by $P(D=1|\color{red}Z_2 \color{black})$  using SEER estimate of age specific cancer SEER estimate. Similarly $P(D=1|\boldsymbol W,\color{red}Z_2 \color{black},S=1)$ in the denominator is obtained using a logistic regression $D$ on age in MGI. We still need the joint distribution of $\boldsymbol W,\color{red}Z_2$ to estimation of $w_0$ to implement PS. Due to limited availability of joint and conditional summary data on $\boldsymbol W,\color{red}Z_2$ from the US target population we made an assumption that given $\color{red}Z_2$ all the other selection variables are independent of each other. For all the weighted methods, we winsorized the selection weights by replacing the extreme 2.5\% and 97.5\% intervals by their respective quantiles to stabilise the methods. 

\subsection{Results}
We present the estimates of marginal and age adjusted association parameters between cancer and sex in Subfigures (A) and (B) of Figure \ref{fig:realdata} respectively using all the four weighted methods and unweighted logistic regression.\\

\noindent
\textbf{Marginal/Unadjusted Association:} 
We consider the SEER estimates of cancer-sex association to be the target truth (-0.24, -0.07). The estimate using the naive unweighted logistic regression method is -0.05 [95\% Confidence Interval (C.I) (-0.08,-0.03)]. The corresponding estimates obtained using the four IPW weighted methods namely PL, SR, PS, CL and without including cancer as a selection variable are 0.08 [95\% C.I (0.04,0.12)], 0.12 [95\% C.I (0.06,0.18)], 0.19 [95\% C.I (0.15,0.23)], 0.22 [95\% C.I (0.15,0.23)] respectively, showing that misspecified weights can sway the OR estimates in the wrong direction further away from the truth than the unweighted estimator. On the other hand, the estimates obtained using the four IPW weighted methods namely PL, SR, PS, CL and including cancer as a selection variable are -0.13 [95\% C.I (-0.16,-0.09)], -0.11 [95\% C.I (-0.17,-0.06)], -0.11 [95\% C.I (-0.15,-0.07)], -0.12 [95\% C.I (-0.15,-0.08)] respectively. The 95\% C.I of $\theta_1$ using all the four weighted methods largely lie within the SEER confidence estimate (-0.24, -0.07).\\

\noindent
\textbf{Age-adjusted Association:}  The age-adjusted estimate using the unweighted logistic method  is 0.10 [95\% C.I (0.07,0.13)] which lies in the opposite direction of the SEER confidence estimate. The estimates obtained using the four IPW weighted methods namely PL, SR, PS, CL and without including cancer as a selection variable skew the OR estimates in the opposite direction. In contrast, the estimates obtained using the four IPW weighted methods namely PL, SR, PS, CL  and including cancer as a selection variable are -0.07 [95\% C.I (-0.10,-0.03)], -0.09 [95\% C.I (-0.15,-0.02)], -0.07 [95\% C.I (-0.12,-0.02)], -0.05 [95\% C.I (-0.15,-0.08)] respectively. We observe that all the four weighted methods have reduced the bias of the estimated association parameter.

\subsection{Effects of different sub-sampling strategies within MGI}

In this section we carry out an idealized experiment using the MGI data. In real data, we do not know the actual variables $\boldsymbol W$ that are driving the selection mechanism. However, when we subsample data intentionally based on certain variables from MGI, the selection model and variables are known to us. This intentional and known subsampling strategy provide a framework to study the extent of selection bias introduced due to different choices of selection variablesand allow us to study the performance of different methods in recovering the truth in a more realistic situation. Let $S_{\text{sub}}$ denotes the selection indicator of being included into the subsample of MGI. We incorporate four subsampling strategies using a logistic selection model with varying parameter values. The first one is a random sample, the second depends on only cancer ($D$), third on cancer ($D$) and sex ($D$) and finally the fourth on cancer ($D$), sex ($Z$) and diabetes ($W$). In this exercise we do not include age in the disease model. The details of the subsampling strategies are given in Supplementary Section \ref{sec:subdet}. Using the above four subsamples of the MGI data, we evaluate the performances of the different methods in estimating the association parameter between cancer and biological sex. We consider two scenarios with two target population (MGI and US populations respectively) as we develop the weights. In both the scenarios, we assume that the true subsampling strategy is known.\\

\noindent
\textbf{First Scenario:} In the first scenario, we assume that the MGI cohort is the target population. Therefore in this case, the unweighted estimate obtained from MGI [-0.05, 95\% C.I (-0.08,-0.03)] is assumed to be the truth and we compare the estimates of the different methods under varying subsamples. The different subsamples serve as the non-probability samples of interest drawn from the target MGI population. For the individual level methods, in this scenario external data and target are same which is MGI and hence $\pi_{\text{ext}}=1$ for each participant. Therefore it does not make sense to apply SR since the response variable for Simplex Regression step is 1 for all datapoints. For PS and CL, we constructed joint probabilities and marginal means from the MGI data. The performances of three weighted and the unweighted logistic method are presented in subfigure (A) of Figure \ref{fig:subsamp}. Under random sampling, all the four methods accurately estimate $\theta_1$ in terms of bias as expected. In case of only cancer affecting subsampling, all the methods including the unweighted logistic are unbiased. This case is exactly same as DAG 1 which justifies the accurate performances of all the methods. However when sex ($Z$) and cancer ($D$) impacts selection, the estimate using the unweighted logistic method is severely biased. The association changes to an entirely wrong direction [0.20, 95\% C.I (0.15, 0.25)]. All three weighted methods, namely PL [-0.05, 95\% C.I (-0.10, 0)], PS [-0.05, 95\% C.I (-0.1, 0)] and CL [-0.05, 95\% C.I (-0.10, 0)] estimate the association parameter with negligible bias. We observe similar results in the fourth case where diabetes ($W$) affects selection along with cancer and sex. In all the cases, we observe that the variances of the methods increase in comparison to the true MGI C.I due to smaller sample size of the subsamples.\\

\noindent
\textbf{Second Scenario:} In this scenario, we assume that the US adult population is the target population, not MGI. Therefore in this case, the SEER estimates are assumed to be the truth and we compare the estimates of the different methods under varying subsampling schemes. For each of the three weighted methods, we apply a two stage weighting approach to obtain the final weights for the IPW regression. The first and second step of weights transport the subsample estimates to the MGI and then the US adult population respectively. In the second weighting step we use all the variables in $\boldsymbol W$ in Section \ref{sec:realana} including age. All the three weighted methods have reduced the bias in estimating the association parameter compared to the estimate of the unweighted method. We observe from subfigure (B) of Figure \ref{fig:subsamp} that under the first two subsampling strategies, all the three weighted methods perform well in terms of bias. For the last two subsampling strategies, CL and PL perform have a large overlap with the SEER band. For example when subsampling is based on both cancer and sex, majority portion of the 95\% C.I bands of PL [-0.14, 95\% C.I (-0.21,-0.08)] and CL [-0.14, 95\% C.I (-0.22,-0.07)] are with the SEER band. Compared to PL and CL, PS on the other hand did not perform well since a large portion of PS is outside the SEER band. Again in all the cases, we observe that the variances of the methods increase due to smaller sample size of the subsamples.\\

\noindent
Similar to the simulation results obtained in Section \ref{sec:resultsimu}, we observe when either $Z$ or $W$ or both affect selection along with $D$, the unweighted estimate is highly biased. The IPW methods help in reducing the bias of the parameter of interest.

\section{Discussion and Conclusion} \label{sec:discussion}
Selection bias is a major concern in EHR studies since it is extremely difficult to ascertain the process through which a patient from the target population enters the analytic sample or why a particular observation or lab result appears in the health record of a patient. The mechanism of patients' interactions with the healthcare system may be influenced by a variety of patient characteristics such as age, sex, race, healthcare access and other health related co-morbidities. If the issue of selection bias is overlooked, association analyses are generally biased because unadjusted inference from these non-probability samples from EHR data is generally not transportable to the target population. Therefore, there is a pressing need to understand the structure of selection bias and correct for it when needed, in order to draw valid inferences for the target population.\\

\noindent
Hospital-based biobanks are enriched with specific diseases. For example, the dataset we used, MGI, \citep{zawistowski2021michigan} recruits patients while they are waiting for surgery. Consequently it is enriched for many diseases including skin cancer \citep{fritsche2019exploring}. Thus the results from MGI are not directly generalizable to the Michigan or US population which is evident from the results shown in Section \ref{sec:realex} on the cancer sex association.  On the other hand, population based biobanks such as the UK Biobank, Estonian Genome Center Biobank, and Taiwan Biobank attempt to recruit participants nationally by inviting volunteers. Even these large population-based biobanks like the UK Biobank suffer from healthy control bias \citep{fry2017comparison,van2022reweighting}.  Nationally representative studies such as the NIH All of Us often have a purposeful sampling strategy that leads to, say, oversampling certain underrepresented subgroups \citep{all2019all}. The problem of selection bias may be maginfied when multiple biobanks all over the world are being harmonized together for massive meta-analysis. For example, the Global Biobank Meta-analysis Initiative (GBMI) \citep{vogan2022global} has linked 24 biobanks with more than 2.2 million genotyped samples linked with health records. For turning such big data into meaningful knowledge, one needs to characterize the different sampling mechanisms underlying the recruitment strategies of these diverse biobanks. We hope this paper provides a conceptual and analytic framework towards understanding selection bias and a set of the tools that are available to us. \\

\noindent
In this work, we introduce a framework to assess selection bias using DAGs in case of estimating association of a binary response variable with other independent variables of interest. We develop four IPW methods and present using a simulation study the extent by which these methods are able to reduce selection bias across a diverse set of simulation settings. Finally we discuss a data example of estimating the association of biological sex with cancer in a hospital based biobank namely, MGI and compare the results obtained from different methods to the population based SEER estimate.\\

\noindent
This work has several limitations. All the methods we consider suffer when the selection probability model is misspecified. We only considered functional misspecification of the selection model in our simulation studies but there will likely be many omitted covariates. It is nearly impossible to  measure all the variables driving selection. Gathering more data on a representative sub-sample of the population embedded within EHR may also lead to more substantial reduction of bias. Chart review \citep{yin2022cost}, multi-wave sampling \citep{liu2022sat}, double sampling approaches \citep{chen2000unified} should also be considered as possible avenues.  We also ignored selection model uncertainty in the simplex regression method. Bootstrap can offer a potential solution to consistent variance estimation. Finally, as described in Table \ref{tab:Table 1}, selection bias occurs not in isolation but in conjunction with several other sources of bias, for example with outcome misclassification \citep{beesley2020analytic}. We need sensitivity analysis tools and source of bias diagnostics for EHR data to identify a hierarchy of the different sources of bias for a given problem. In this analysis we did not consider the time stamps of the observations in longitudinal EHR data. The relationships between covariates and outcomes in the DAGs are highly dependent on the relative ordering. Extension of the discussed methods to longitudinal data may address this issue.\\

\noindent
Finally, creation of nationally integrated databases, where all health encounters for everyone are recorded in the same data system will enable researchers to harness the full potential of real-world healthcare data for everyone, not just for some selected (often historically privileged) sub-populations. Use of exclusionary cohorts and data disparity is at the heart of fairness in modern machine learning methods \citep{mhasawade2021machine,parikh2019addressing}. In that sense, equal probability sample selection method (EPSEM) is a tool to ensure equity and fairness in data science. In absence of EPSEM in real world data, thinking about selection bias is at the heart of doing inclusive science with data. Our hope is that our paper will contribute to that important discourse.

\section{Acknowledgements}
This research is supported by NSF DMS 1712933, NIH/NCI  CA267907 and NIH R01GM139926. The authors would like to thank Professor Ruth Keogh and organizers and attendees of the symposium on 50 years of the Cox Model for including this work in the program and providing feedback during the presentation.

\section{Competing interests}
Nothing to declare. 

\section{Materials and Correspondence}
All correspondence should be directed to Bhramar Mukherjee (\href{mailto:bhramar@umich.edu}{bhramar@umich.edu}). All codes are available in \url{https://github.com/Ritoban1/Short-Note-Selection-Bias.git}. Michigan Genomics Initiative Data are available after institutional review board approval to select researchers. See \url{https://precisionhealth.umich.edu/our-research/michigangenomics/}.

\bibliographystyle{abbrvnat}
\bibliography{references.bib}

\begin{thebibliography}{66}
\providecommand{\natexlab}[1]{#1}
\providecommand{\url}[1]{\texttt{#1}}
\expandafter\ifx\csname urlstyle\endcsname\relax
  \providecommand{\doi}[1]{doi: #1}\else
  \providecommand{\doi}{doi: \begingroup \urlstyle{rm}\Url}\fi

\bibitem[Abbasizanjani et~al.(2023)Abbasizanjani, Torabi, Bedston, Bolton,
  Davies, Denaxas, Griffiths, Herbert, Hollings, Keene,
  et~al.]{abbasizanjani2023harmonising}
H.~Abbasizanjani, F.~Torabi, S.~Bedston, T.~Bolton, G.~Davies, S.~Denaxas,
  R.~Griffiths, L.~Herbert, S.~Hollings, S.~Keene, et~al.
\newblock Harmonising electronic health records for reproducible research:
  challenges, solutions and recommendations from a {UK}-wide {COVID}-19
  research collaboration.
\newblock \emph{BMC Medical Informatics and Decision Making}, 23\penalty0
  (1):\penalty0 1--15, 2023.

\bibitem[{All Of Us Research Programs Investigators}(2019)]{all2019all}
{All Of Us Research Programs Investigators}.
\newblock The “all of us” research program.
\newblock \emph{New England Journal of Medicine}, 381\penalty0 (7):\penalty0
  668--676, 2019.

\bibitem[Almeida et~al.(2021)Almeida, Silva, Bos, Visser, and
  Oliveira]{almeida2021methodology}
J.~R. Almeida, L.~B. Silva, I.~Bos, P.~J. Visser, and J.~L. Oliveira.
\newblock A methodology for cohort harmonisation in multicentre clinical
  research.
\newblock \emph{Informatics in Medicine Unlocked}, 27:\penalty0 100760, 2021.

\bibitem[Barndorff-Nielsen and J{\o}rgensen(1991)]{barndorff1991some}
O.~E. Barndorff-Nielsen and B.~J{\o}rgensen.
\newblock Some parametric models on the simplex.
\newblock \emph{Journal of multivariate analysis}, 39\penalty0 (1):\penalty0
  106--116, 1991.

\bibitem[Beesley and Mukherjee(2022{\natexlab{a}})]{beesley2022case}
L.~J. Beesley and B.~Mukherjee.
\newblock Case studies in bias reduction and inference for electronic health
  record data with selection bias and phenotype misclassification.
\newblock \emph{Statistics in Medicine}, 2022{\natexlab{a}}.

\bibitem[Beesley and Mukherjee(2022{\natexlab{b}})]{beesley2022statistical}
L.~J. Beesley and B.~Mukherjee.
\newblock Statistical inference for association studies using electronic health
  records: handling both selection bias and outcome misclassification.
\newblock \emph{Biometrics}, 78\penalty0 (1):\penalty0 214--226,
  2022{\natexlab{b}}.

\bibitem[Beesley et~al.(2020{\natexlab{a}})Beesley, Fritsche, and
  Mukherjee]{beesley2020analytic}
L.~J. Beesley, L.~G. Fritsche, and B.~Mukherjee.
\newblock An analytic framework for exploring sampling and observation process
  biases in genome and phenome-wide association studies using electronic health
  records.
\newblock \emph{Statistics in Medicine}, 39\penalty0 (14):\penalty0 1965--1979,
  2020{\natexlab{a}}.

\bibitem[Beesley et~al.(2020{\natexlab{b}})Beesley, Salvatore, Fritsche,
  Pandit, Rao, Brummett, Willer, Lisabeth, and Mukherjee]{beesley2020emerging}
L.~J. Beesley, M.~Salvatore, L.~G. Fritsche, A.~Pandit, A.~Rao, C.~Brummett,
  C.~J. Willer, L.~D. Lisabeth, and B.~Mukherjee.
\newblock The emerging landscape of health research based on biobanks linked to
  electronic health records: {E}xisting resources, statistical challenges, and
  potential opportunities.
\newblock \emph{Statistics in medicine}, 39\penalty0 (6):\penalty0 773--800,
  2020{\natexlab{b}}.

\bibitem[Bradley et~al.(2021)Bradley, Kuriwaki, Isakov, Sejdinovic, Meng, and
  Flaxman]{bradley2021unrepresentative}
V.~C. Bradley, S.~Kuriwaki, M.~Isakov, D.~Sejdinovic, X.-L. Meng, and
  S.~Flaxman.
\newblock Unrepresentative big surveys significantly overestimated {US} vaccine
  uptake.
\newblock \emph{Nature}, 600\penalty0 (7890):\penalty0 695--700, 2021.

\bibitem[Chen et~al.(2019)Chen, Wang, Chubak, and Hubbard]{chen2019inflation}
Y.~Chen, J.~Wang, J.~Chubak, and R.~A. Hubbard.
\newblock Inflation of type i error rates due to differential misclassification
  in {EHR}-derived outcomes: empirical illustration using breast cancer
  recurrence.
\newblock \emph{Pharmacoepidemiology and drug safety}, 28\penalty0
  (2):\penalty0 264--268, 2019.

\bibitem[Chen et~al.(2020)Chen, Li, and Wu]{chen2020doubly}
Y.~Chen, P.~Li, and C.~Wu.
\newblock Doubly robust inference with nonprobability survey samples.
\newblock \emph{Journal of the American Statistical Association}, 115\penalty0
  (532):\penalty0 2011--2021, 2020.

\bibitem[Chen and Chen(2000)]{chen2000unified}
Y.-H. Chen and H.~Chen.
\newblock A unified approach to regression analysis under double-sampling
  designs.
\newblock \emph{Journal of the Royal Statistical Society: Series B (Statistical
  Methodology)}, 62\penalty0 (3):\penalty0 449--460, 2000.

\bibitem[Christensen et~al.(1992)Christensen, Holm, Olsen, Kock, and
  Fogh-Andersen]{christensen1992selection}
K.~Christensen, N.~Holm, J.~Olsen, K.~Kock, and P.~Fogh-Andersen.
\newblock Selection bias in genetic-epidemiological studies of cleft lip and
  palate.
\newblock \emph{American journal of human genetics}, 51\penalty0 (3):\penalty0
  654, 1992.

\bibitem[Cornfield et~al.(1959)Cornfield, Haenszel, Hammond, Lilienfeld,
  Shimkin, and Wynder]{cornfield1959smoking}
J.~Cornfield, W.~Haenszel, E.~C. Hammond, A.~M. Lilienfeld, M.~B. Shimkin, and
  E.~L. Wynder.
\newblock Smoking and lung cancer: recent evidence and a discussion of some
  questions.
\newblock \emph{Journal of the National Cancer institute}, 22\penalty0
  (1):\penalty0 173--203, 1959.

\bibitem[Dempster et~al.(1977)Dempster, Laird, and Rubin]{dempster1977maximum}
A.~P. Dempster, N.~M. Laird, and D.~B. Rubin.
\newblock Maximum likelihood from incomplete data via the {EM} algorithm.
\newblock \emph{Journal of the Royal Statistical Society: Series B
  (Methodological)}, 39\penalty0 (1):\penalty0 1--22, 1977.

\bibitem[Denny et~al.(2013)Denny, Bastarache, Ritchie, Carroll, Zink, Mosley,
  Field, Pulley, Ramirez, Bowton, et~al.]{denny2013systematic}
J.~C. Denny, L.~Bastarache, M.~D. Ritchie, R.~J. Carroll, R.~Zink, J.~D.
  Mosley, J.~R. Field, J.~M. Pulley, A.~H. Ramirez, E.~Bowton, et~al.
\newblock Systematic comparison of phenome-wide association study of electronic
  medical record data and genome-wide association study data.
\newblock \emph{Nature biotechnology}, 31\penalty0 (12):\penalty0 1102--1111,
  2013.

\bibitem[Deville and S{\"a}rndal(1992)]{deville1992calibration}
J.-C. Deville and C.-E. S{\"a}rndal.
\newblock Calibration estimators in survey sampling.
\newblock \emph{Journal of the American statistical Association}, 87\penalty0
  (418):\penalty0 376--382, 1992.

\bibitem[Doove et~al.(2014)Doove, Van~Buuren, and
  Dusseldorp]{doove2014recursive}
L.~L. Doove, S.~Van~Buuren, and E.~Dusseldorp.
\newblock Recursive partitioning for missing data imputation in the presence of
  interaction effects.
\newblock \emph{Computational statistics \& data analysis}, 72:\penalty0
  92--104, 2014.

\bibitem[Elliot(2009)]{elliot2009combining}
M.~R. Elliot.
\newblock Combining data from probability and non-probability samples using
  pseudo-weights.
\newblock \emph{Survey Practice}, 2\penalty0 (6):\penalty0 2982, 2009.

\bibitem[Ferrari and Cribari-Neto(2004)]{ferrari2004beta}
S.~Ferrari and F.~Cribari-Neto.
\newblock Beta regression for modelling rates and proportions.
\newblock \emph{Journal of applied statistics}, 31\penalty0 (7):\penalty0
  799--815, 2004.

\bibitem[Fritsche et~al.(2019)Fritsche, Beesley, VandeHaar, Peng, Salvatore,
  Zawistowski, Gagliano~Taliun, Das, LeFaive, Kaleba,
  et~al.]{fritsche2019exploring}
L.~G. Fritsche, L.~J. Beesley, P.~VandeHaar, R.~B. Peng, M.~Salvatore,
  M.~Zawistowski, S.~A. Gagliano~Taliun, S.~Das, J.~LeFaive, E.~O. Kaleba,
  et~al.
\newblock Exploring various polygenic risk scores for skin cancer in the
  phenomes of the {M}ichigan genomics initiative and the {UK} {B}iobank with a
  visual catalog: {PRSWeb}.
\newblock \emph{PLoS genetics}, 15\penalty0 (6):\penalty0 e1008202, 2019.

\bibitem[Fry et~al.(2017)Fry, Littlejohns, Sudlow, Doherty, Adamska, Sprosen,
  Collins, and Allen]{fry2017comparison}
A.~Fry, T.~J. Littlejohns, C.~Sudlow, N.~Doherty, L.~Adamska, T.~Sprosen,
  R.~Collins, and N.~E. Allen.
\newblock Comparison of sociodemographic and health-related characteristics of
  {UK} {B}iobank participants with those of the general population.
\newblock \emph{American journal of epidemiology}, 186\penalty0 (9):\penalty0
  1026--1034, 2017.

\bibitem[Fu et~al.(2020)Fu, Leung, Raulli, Kallmes, Kinsman, Nelson, Clark,
  Luetmer, Kingsbury, Kent, et~al.]{fu2020assessment}
S.~Fu, L.~Y. Leung, A.-O. Raulli, D.~F. Kallmes, K.~A. Kinsman, K.~B. Nelson,
  M.~S. Clark, P.~H. Luetmer, P.~R. Kingsbury, D.~M. Kent, et~al.
\newblock Assessment of the impact of {EHR} heterogeneity for clinical research
  through a case study of silent brain infarction.
\newblock \emph{BMC medical informatics and decision making}, 20\penalty0
  (1):\penalty0 1--12, 2020.

\bibitem[Galimard et~al.(2016)Galimard, Chevret, Protopopescu, and
  Resche-Rigon]{galimard2016multiple}
J.-E. Galimard, S.~Chevret, C.~Protopopescu, and M.~Resche-Rigon.
\newblock A multiple imputation approach for {MNAR} mechanisms compatible with
  heckman's model.
\newblock \emph{Statistics in medicine}, 35\penalty0 (17):\penalty0 2907--2920,
  2016.

\bibitem[Geneletti et~al.(2009)Geneletti, Richardson, and
  Best]{geneletti2009adjusting}
S.~Geneletti, S.~Richardson, and N.~Best.
\newblock Adjusting for selection bias in retrospective, case--control studies.
\newblock \emph{Biostatistics}, 10\penalty0 (1):\penalty0 17--31, 2009.

\bibitem[Glynn and Hoffman(2019)]{glynn2019heterogeneity}
E.~F. Glynn and M.~A. Hoffman.
\newblock Heterogeneity introduced by {EHR} system implementation in a
  de-identified data resource from 100 non-affiliated organizations.
\newblock \emph{JAMIA open}, 2\penalty0 (4):\penalty0 554--561, 2019.

\bibitem[Haneuse and Daniels(2016)]{haneuse2016general}
S.~Haneuse and M.~Daniels.
\newblock A general framework for considering selection bias in {EHR}-based
  studies: what data are observed and why?
\newblock \emph{eGEMs}, 4\penalty0 (1), 2016.

\bibitem[Heart et~al.(2017)Heart, Ben-Assuli, and Shabtai]{heart2017review}
T.~Heart, O.~Ben-Assuli, and I.~Shabtai.
\newblock A review of {PHR}, {EMR} and {EHR} integration: A more personalized
  healthcare and public health policy.
\newblock \emph{Health Policy and Technology}, 6\penalty0 (1):\penalty0 20--25,
  2017.

\bibitem[Heintzman et~al.(2015)Heintzman, Marino, Hoopes, Bailey, Gold,
  O’Malley, Angier, Nelson, Cottrell, and Devoe]{heintzman2015supporting}
J.~Heintzman, M.~Marino, M.~Hoopes, S.~R. Bailey, R.~Gold, J.~O’Malley,
  H.~Angier, C.~Nelson, E.~Cottrell, and J.~Devoe.
\newblock Supporting health insurance expansion: do electronic health records
  have valid insurance verification and enrollment data?
\newblock \emph{Journal of the American Medical Informatics Association},
  22\penalty0 (4):\penalty0 909--913, 2015.

\bibitem[Hern{\'a}n et~al.(2004)Hern{\'a}n, Hern{\'a}ndez-D{\'\i}az, and
  Robins]{hernan2004structural}
M.~A. Hern{\'a}n, S.~Hern{\'a}ndez-D{\'\i}az, and J.~M. Robins.
\newblock A structural approach to selection bias.
\newblock \emph{Epidemiology}, pages 615--625, 2004.

\bibitem[Hoffmann et~al.(2017)Hoffmann, Ehret, Nandakumar, Ranatunga, Schaefer,
  Kwok, Iribarren, Chakravarti, and Risch]{hoffmann2017genome}
T.~J. Hoffmann, G.~B. Ehret, P.~Nandakumar, D.~Ranatunga, C.~Schaefer, P.-Y.
  Kwok, C.~Iribarren, A.~Chakravarti, and N.~Risch.
\newblock Genome-wide association analyses using electronic health records
  identify new loci influencing blood pressure variation.
\newblock \emph{Nature genetics}, 49\penalty0 (1):\penalty0 54--64, 2017.

\bibitem[Holt and Smith(1979)]{holt1979post}
D.~Holt and T.~F. Smith.
\newblock Post stratification.
\newblock \emph{Journal of the Royal Statistical Society: Series A (General)},
  142\penalty0 (1):\penalty0 33--46, 1979.

\bibitem[Huang et~al.(2018)Huang, Duan, Hubbard, Wu, Moore, Xu, and
  Chen]{huang2018pie}
J.~Huang, R.~Duan, R.~A. Hubbard, Y.~Wu, J.~H. Moore, H.~Xu, and Y.~Chen.
\newblock {PIE}: A prior knowledge guided integrated likelihood estimation
  method for bias reduction in association studies using electronic health
  records data.
\newblock \emph{Journal of the American Medical Informatics Association},
  25\penalty0 (3):\penalty0 345--352, 2018.

\bibitem[Kaplan et~al.(2014)Kaplan, Chambers, and Glasgow]{kaplan2014big}
R.~M. Kaplan, D.~A. Chambers, and R.~E. Glasgow.
\newblock Big data and large sample size: a cautionary note on the potential
  for bias.
\newblock \emph{Clinical and translational science}, 7\penalty0 (4):\penalty0
  342--346, 2014.

\bibitem[Kim and Park(2010)]{kim2010calibration}
J.~K. Kim and M.~Park.
\newblock Calibration estimation in survey sampling.
\newblock \emph{International Statistical Review}, 78\penalty0 (1):\penalty0
  21--39, 2010.

\bibitem[Kleinbaum et~al.(1981)Kleinbaum, Morgenstern, and
  Kupper]{kleinbaum1981selection}
D.~G. Kleinbaum, H.~Morgenstern, and L.~L. Kupper.
\newblock Selection bias in epidemiologic studies.
\newblock \emph{American journal of epidemiology}, 113\penalty0 (4):\penalty0
  452--463, 1981.

\bibitem[Lipsitch et~al.(2010)Lipsitch, Tchetgen, and
  Cohen]{lipsitch2010negative}
M.~Lipsitch, E.~T. Tchetgen, and T.~Cohen.
\newblock Negative controls: a tool for detecting confounding and bias in
  observational studies.
\newblock \emph{Epidemiology (Cambridge, Mass.)}, 21\penalty0 (3):\penalty0
  383, 2010.

\bibitem[Little(1993)]{little1993pattern}
R.~J. Little.
\newblock Pattern-mixture models for multivariate incomplete data.
\newblock \emph{Journal of the American Statistical Association}, 88\penalty0
  (421):\penalty0 125--134, 1993.

\bibitem[Liu et~al.(2022)Liu, Chubak, Hubbard, and Chen]{liu2022sat}
X.~Liu, J.~Chubak, R.~A. Hubbard, and Y.~Chen.
\newblock {SAT}: a {S}urrogate-{A}ssisted {T}wo-wave case boosting sampling
  method, with application to {EHR}-based association studies.
\newblock \emph{Journal of the American Medical Informatics Association},
  29\penalty0 (5):\penalty0 918--927, 2022.

\bibitem[Madigan et~al.(2014)Madigan, Stang, Berlin, Schuemie, Overhage,
  Suchard, Dumouchel, Hartzema, and Ryan]{madigan2014systematic}
D.~Madigan, P.~E. Stang, J.~A. Berlin, M.~Schuemie, J.~M. Overhage, M.~A.
  Suchard, B.~Dumouchel, A.~G. Hartzema, and P.~B. Ryan.
\newblock A systematic statistical approach to evaluating evidence from
  observational studies.
\newblock \emph{Annual Review of Statistics and Its Application}, 1:\penalty0
  11--39, 2014.

\bibitem[Madow et~al.(1983)Madow, Nisselson, Olkin, and
  Rubin]{madow1983incomplete}
W.~G. Madow, H.~Nisselson, I.~Olkin, and D.~B. Rubin.
\newblock \emph{Incomplete {D}ata in {S}ample {S}urveys: {T}heory and
  {B}ibliographies}, volume~2.
\newblock Academic Press, 1983.

\bibitem[Marcoulides and Schumacker(2013)]{marcoulides2013advanced}
G.~A. Marcoulides and R.~E. Schumacker.
\newblock \emph{Advanced structural equation modeling: {I}ssues and
  techniques}.
\newblock Psychology Press, 2013.

\bibitem[Meng et~al.(2018)Meng, Adams, Hebert, Deary, McIntosh, and
  Smith]{meng2018genome}
W.~Meng, M.~J. Adams, H.~L. Hebert, I.~J. Deary, A.~M. McIntosh, and B.~H.
  Smith.
\newblock A genome-wide association study finds genetic associations with
  broadly-defined headache in {UK} {B}iobank (n= 223,773).
\newblock \emph{EBioMedicine}, 28:\penalty0 180--186, 2018.

\bibitem[Mhasawade et~al.(2021)Mhasawade, Zhao, and
  Chunara]{mhasawade2021machine}
V.~Mhasawade, Y.~Zhao, and R.~Chunara.
\newblock Machine learning and algorithmic fairness in public and population
  health.
\newblock \emph{Nature Machine Intelligence}, 3\penalty0 (8):\penalty0
  659--666, 2021.

\bibitem[Montanari and Ranalli(2005)]{montanari2005nonparametric}
G.~E. Montanari and M.~G. Ranalli.
\newblock Nonparametric model calibration estimation in survey sampling.
\newblock \emph{Journal of the American Statistical Association}, 100\penalty0
  (472):\penalty0 1429--1442, 2005.

\bibitem[Neuhaus(1999)]{neuhaus1999bias}
J.~M. Neuhaus.
\newblock Bias and efficiency loss due to misclassified responses in binary
  regression.
\newblock \emph{Biometrika}, 86\penalty0 (4):\penalty0 843--855, 1999.

\bibitem[Parikh et~al.(2019)Parikh, Teeple, and Navathe]{parikh2019addressing}
R.~B. Parikh, S.~Teeple, and A.~S. Navathe.
\newblock Addressing bias in artificial intelligence in health care.
\newblock \emph{Jama}, 322\penalty0 (24):\penalty0 2377--2378, 2019.

\bibitem[Pendergrass et~al.(2011)Pendergrass, Dudek, Roden, Crawford, and
  Ritchie]{pendergrass2011visual}
S.~Pendergrass, S.~M. Dudek, D.~M. Roden, D.~C. Crawford, and M.~D. Ritchie.
\newblock Visual integration of results from a large {DNA} biobank ({BioVU})
  using synthesis-view.
\newblock In \emph{Biocomputing 2011}, pages 265--275. World Scientific, 2011.

\bibitem[Rexhepi et~al.(2021)Rexhepi, Huvila, {\AA}hlfeldt, and
  Cajander]{rexhepi2021cancer}
H.~Rexhepi, I.~Huvila, R.-M. {\AA}hlfeldt, and {\AA}.~Cajander.
\newblock Cancer patients’ information seeking behavior related to online
  electronic healthcare records.
\newblock \emph{Health Informatics Journal}, 27\penalty0 (3):\penalty0
  14604582211024708, 2021.

\bibitem[Roberts et~al.(2022)Roberts, Gu, Mukherjee, and
  Fritsche]{roberts2022estimating}
E.~K. Roberts, T.~Gu, B.~Mukherjee, and L.~G. Fritsche.
\newblock Estimating {COVID}-19 vaccination effectiveness using electronic
  health records of an academic medical center in michigan.
\newblock \emph{medRxiv}, pages 2022--01, 2022.

\bibitem[Rubin(2004)]{rubin2004multiple}
D.~B. Rubin.
\newblock \emph{Multiple imputation for nonresponse in surveys}, volume~81.
\newblock John Wiley \& Sons, 2004.

\bibitem[Seaman and Vansteelandt(2018)]{seaman2018introduction}
S.~R. Seaman and S.~Vansteelandt.
\newblock Introduction to double robust methods for incomplete data.
\newblock \emph{Statistical science: a review journal of the Institute of
  Mathematical Statistics}, 33\penalty0 (2):\penalty0 184, 2018.

\bibitem[Shen et~al.(2022)Shen, Risk, Schiopu, Hayek, Xie, Holevinski, Akin,
  Freed, and Zhao]{shen2022efficacy}
C.~Shen, M.~Risk, E.~Schiopu, S.~S. Hayek, T.~Xie, L.~Holevinski, C.~Akin,
  G.~Freed, and L.~Zhao.
\newblock Efficacy of {COVID}-19 vaccines in patients taking
  immunosuppressants.
\newblock \emph{Annals of the rheumatic diseases}, 81\penalty0 (6):\penalty0
  875--880, 2022.

\bibitem[Shi et~al.(2020)Shi, Miao, Nelson, and Tchetgen]{shi2020multiply}
X.~Shi, W.~Miao, J.~C. Nelson, and E.~J.~T. Tchetgen.
\newblock Multiply robust causal inference with double-negative control
  adjustment for categorical unmeasured confounding.
\newblock \emph{Journal of the Royal Statistical Society. Series B, Statistical
  methodology}, 82\penalty0 (2):\penalty0 521, 2020.

\bibitem[Sun et~al.(2022)Sun, Wang, Li, and Toh]{sun2022use}
J.~W. Sun, R.~Wang, D.~Li, and S.~Toh.
\newblock Use of {L}inked {D}atabases for {I}mproved {C}onfounding {C}ontrol:
  {C}onsiderations for {P}otential {S}election {B}ias.
\newblock \emph{American Journal of Epidemiology}, 191\penalty0 (4):\penalty0
  711--723, 2022.

\bibitem[Toh et~al.(2011)Toh, Garc{\'\i}a~Rodr{\'\i}guez, and
  Hern{\'a}n]{toh2011confounding}
S.~Toh, L.~A. Garc{\'\i}a~Rodr{\'\i}guez, and M.~A. Hern{\'a}n.
\newblock Confounding adjustment via a semi-automated high-dimensional
  propensity score algorithm: an application to electronic medical records.
\newblock \emph{Pharmacoepidemiology and drug safety}, 20\penalty0
  (8):\penalty0 849--857, 2011.

\bibitem[Tong et~al.(2020)Tong, Huang, Chubak, Wang, Moore, Hubbard, and
  Chen]{tong2020augmented}
J.~Tong, J.~Huang, J.~Chubak, X.~Wang, J.~H. Moore, R.~A. Hubbard, and Y.~Chen.
\newblock An augmented estimation procedure for {EHR}-based association studies
  accounting for differential misclassification.
\newblock \emph{Journal of the American Medical Informatics Association},
  27\penalty0 (2):\penalty0 244--253, 2020.

\bibitem[Tsiatis(2006)]{tsiatis2006semiparametric}
A.~A. Tsiatis.
\newblock Semiparametric theory and missing data.
\newblock 2006.

\bibitem[van Alten et~al.(2022)van Alten, Domingue, Galama, and
  Marees]{van2022reweighting}
S.~van Alten, B.~W. Domingue, T.~J. Galama, and A.~T. Marees.
\newblock Reweighting the {UK} {B}iobank to reflect its underlying sampling
  population substantially reduces pervasive selection bias due to
  volunteering.
\newblock \emph{medRxiv}, pages 2022--05, 2022.

\bibitem[Vogan(2022)]{vogan2022global}
K.~Vogan.
\newblock Global biobank meta-analysis.
\newblock \emph{Nature Genetics}, pages 1--1, 2022.

\bibitem[Wang and Wright(2020)]{wang2020characterizing}
E.~C.-H. Wang and A.~Wright.
\newblock Characterizing outpatient problem list completeness and duplications
  in the electronic health record.
\newblock \emph{Journal of the American Medical Informatics Association},
  27\penalty0 (8):\penalty0 1190--1197, 2020.

\bibitem[Wang and Kim(2021)]{wang2021information}
H.~Wang and J.~K. Kim.
\newblock Information projection approach to propensity score estimation for
  handling selection bias under missing at random.
\newblock \emph{arXiv e-prints}, pages arXiv--2104, 2021.

\bibitem[Wu(2003)]{wu2003optimal}
C.~Wu.
\newblock Optimal calibration estimators in survey sampling.
\newblock \emph{Biometrika}, 90\penalty0 (4):\penalty0 937--951, 2003.

\bibitem[Yin et~al.(2022)Yin, Tong, Chen, Hubbard, and Tang]{yin2022cost}
Z.~Yin, J.~Tong, Y.~Chen, R.~A. Hubbard, and C.~Y. Tang.
\newblock A cost-effective chart review sampling design to account for
  phenotyping error in electronic health records ({EHR}) data.
\newblock \emph{Journal of the American Medical Informatics Association},
  29\penalty0 (1):\penalty0 52--61, 2022.

\bibitem[Zawistowski et~al.(2021)Zawistowski, Fritsche, Pandit, Vanderwerff,
  Patil, Scmidt, VanderHaar, Brummett, Keterpal, Zhou,
  et~al.]{zawistowski2021michigan}
M.~Zawistowski, L.~G. Fritsche, A.~Pandit, B.~Vanderwerff, S.~Patil, E.~M.
  Scmidt, P.~VanderHaar, C.~M. Brummett, S.~Keterpal, X.~Zhou, et~al.
\newblock The michigan genomics initiative: a biobank linking genotypes and
  electronic clinical records in {M}ichigan {M}edicine patients.
\newblock \emph{medRxiv}, 2021.

\bibitem[Zhang et~al.(2016)Zhang, Qiu, and Shi]{simplexreg}
P.~Zhang, Z.~Qiu, and C.~Shi.
\newblock {simplexreg}: An {R} {P}ackage for {R}egression {A}nalysis of
  {P}roportional {D}ata {U}sing the {S}implex {D}istribution.
\newblock \emph{Journal of Statistical Software}, 71\penalty0 (11):\penalty0
  1--21, 2016.
\newblock \doi{10.18637/jss.v071.i11}.

\end{thebibliography}
\newpage
\begin{figure}[H]
    \centering
    \includegraphics[width =\linewidth]{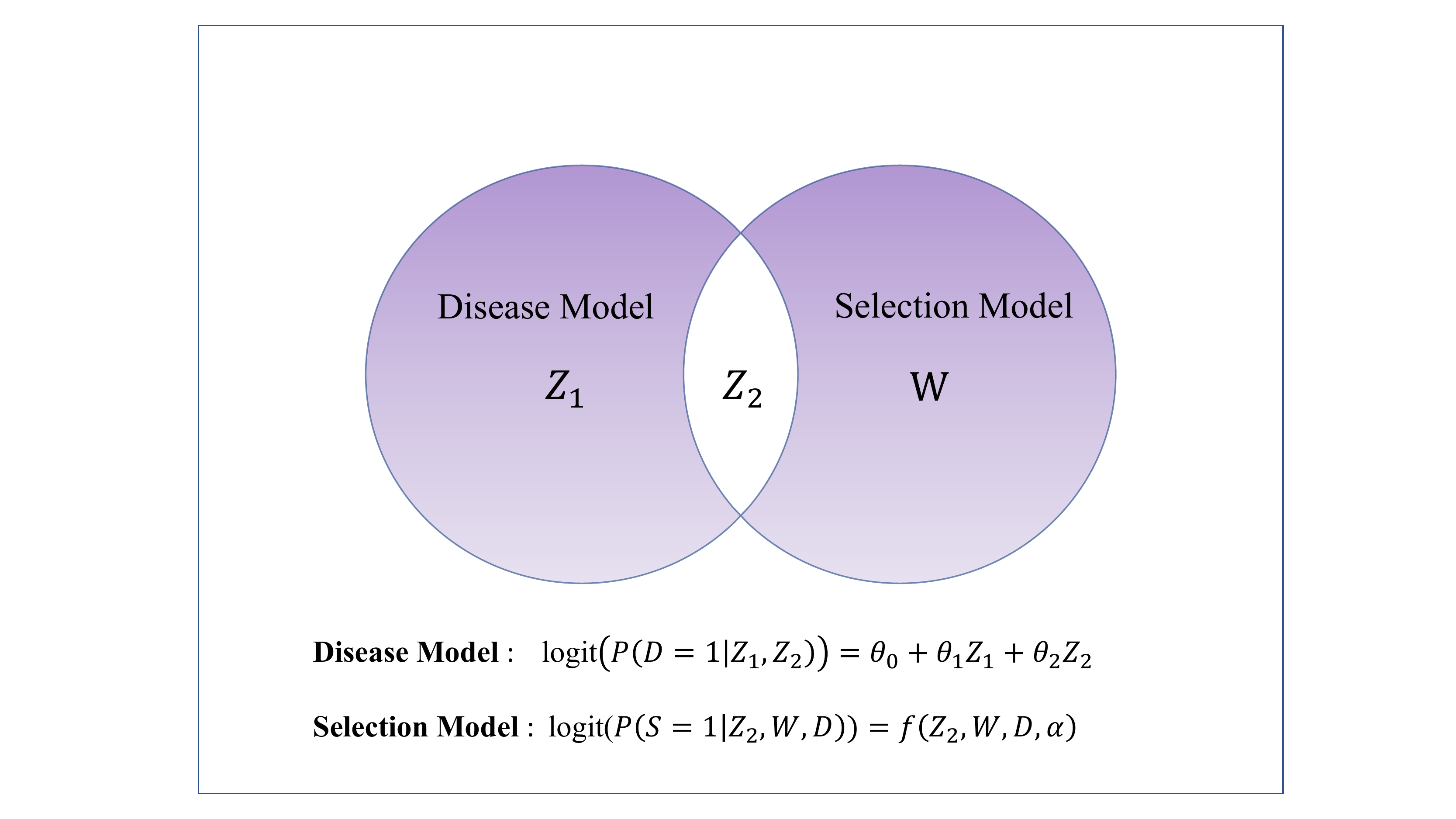}
    \caption{Figure depicting the disease and selection models along with the different variables present in both the models.}
    \label{fig:models}
\end{figure}

\begin{figure}[H]

\centering
\subfloat[DAG 1]{
  \centering
  \includegraphics[width=0.51\textwidth]{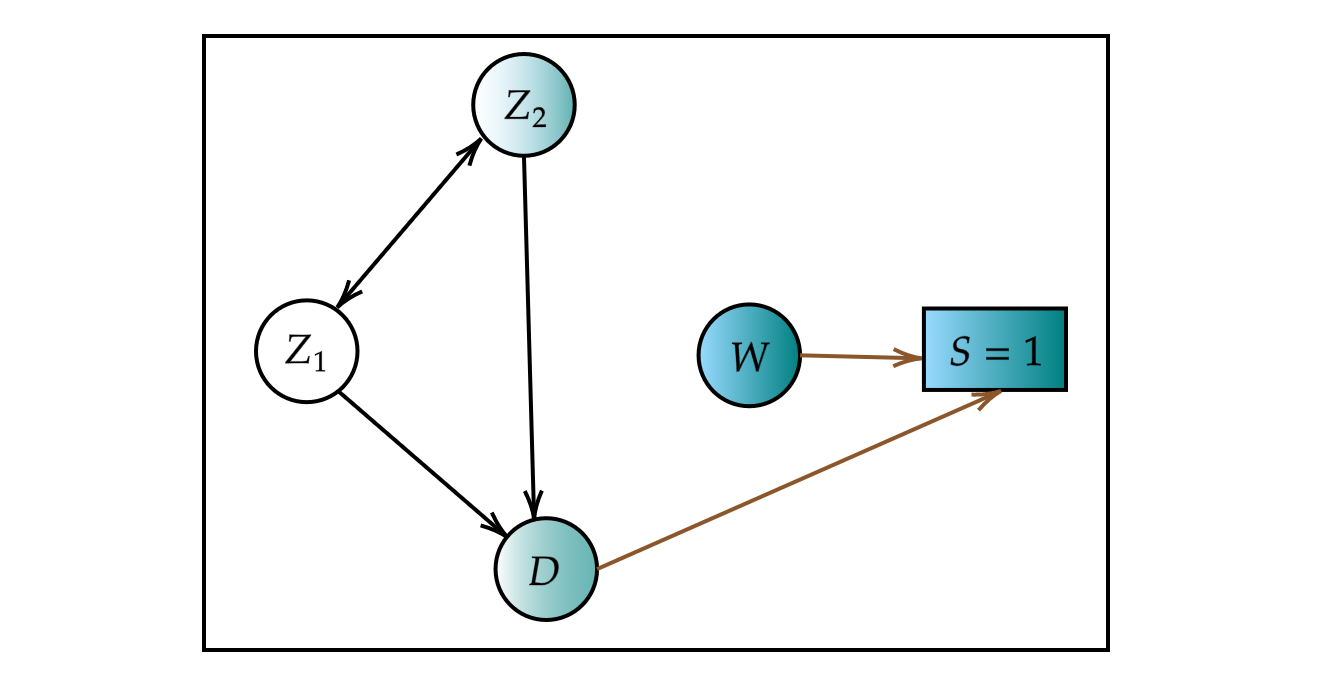}
}
\subfloat[DAG 2]{
  \centering
  \includegraphics[width=0.51\textwidth]{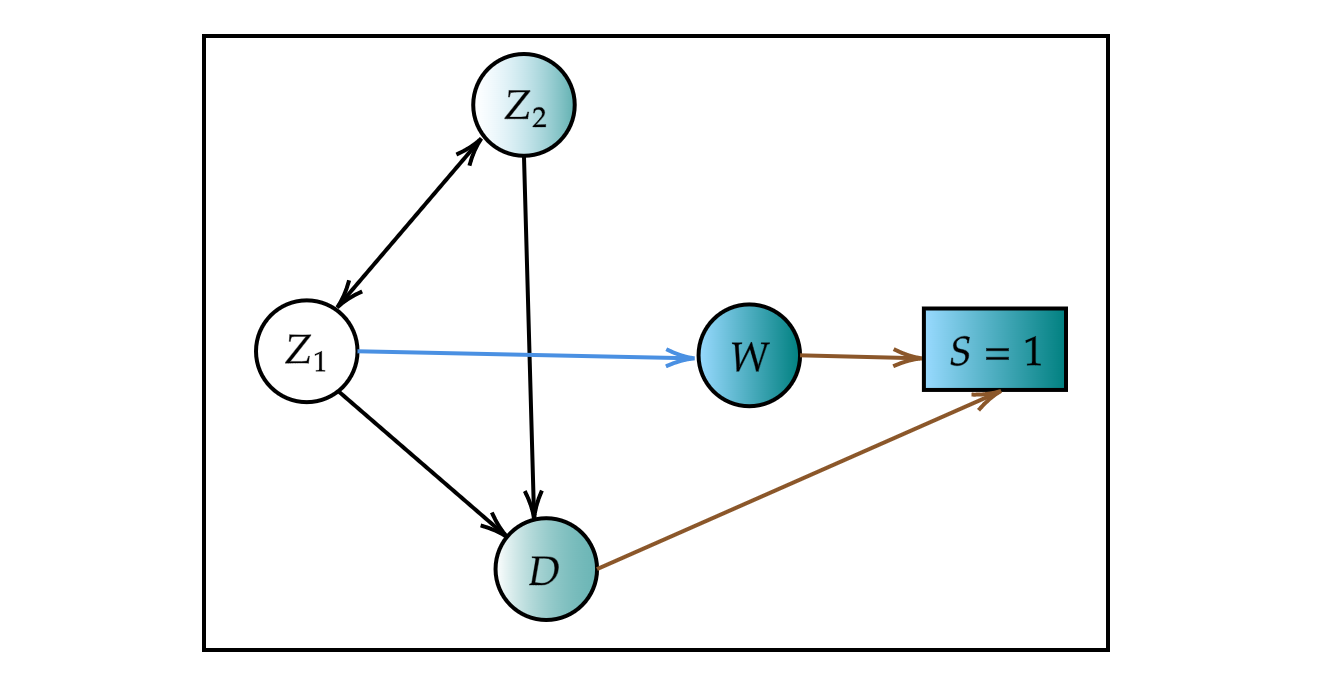}
}
\newline
\subfloat[DAG 3]{
\centering
  \includegraphics[width=0.51\textwidth]{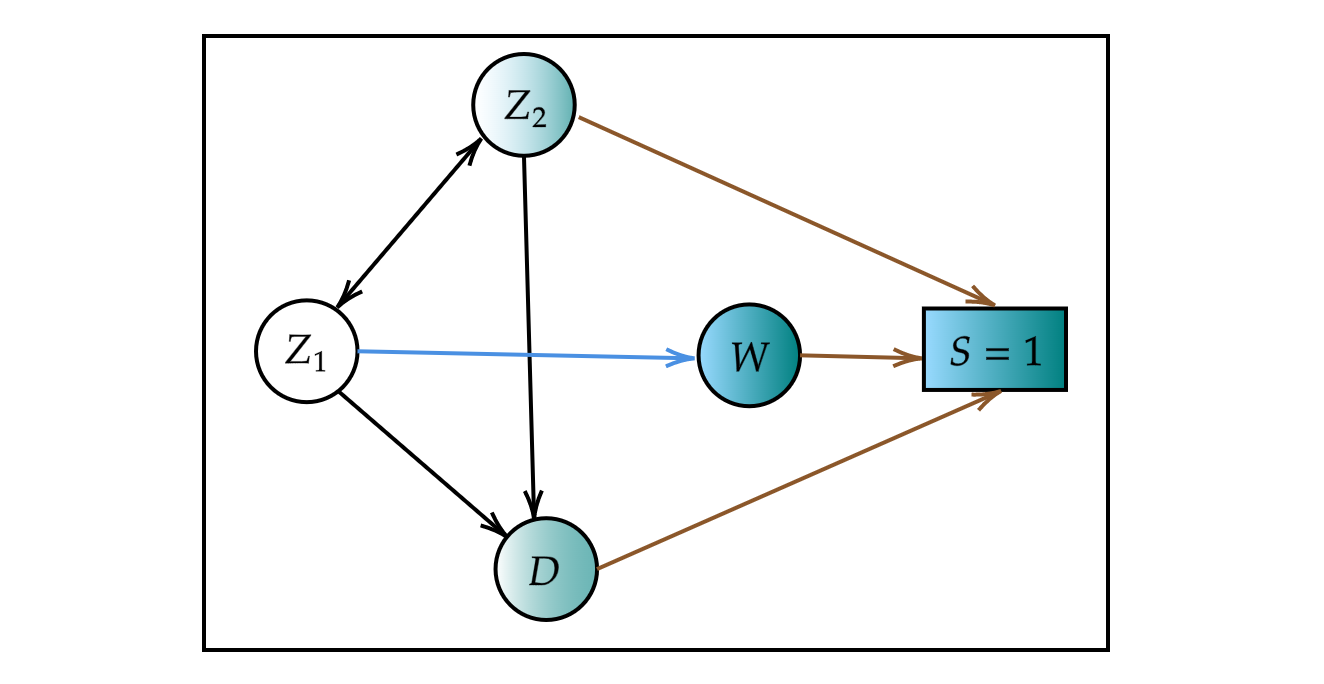}
}
\subfloat[DAG 4]{
\centering
  \includegraphics[width=0.51\textwidth]{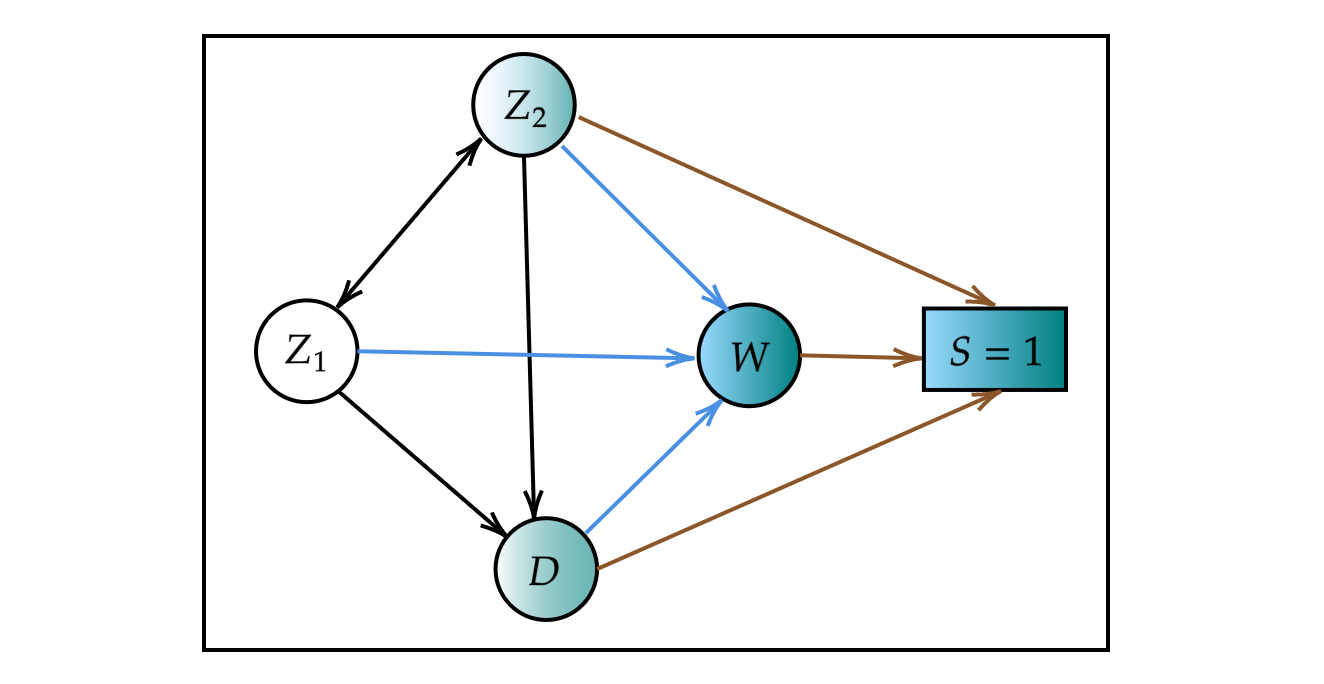}
}\\
\subfloat{
\centering
  \includegraphics[width=0.9\textwidth]{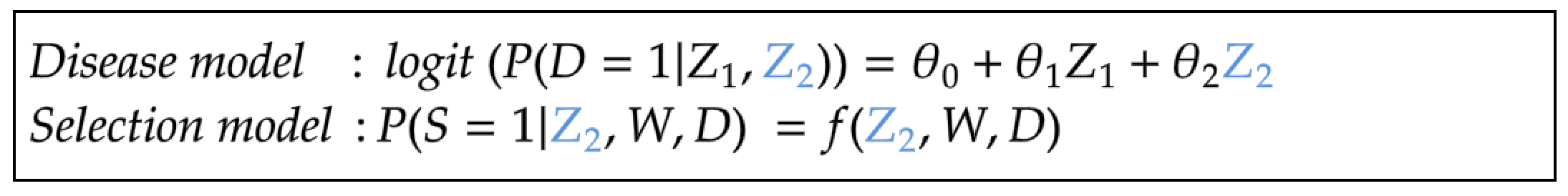}
}
\caption{Selection Directed Acyclic Graphs (DAGs)  representing some plausible relationships between different variables of interest: $D$ (Disease Indicator), $S$ (Selection Indicator into the internal sample) $\boldsymbol Z_1$ (Predictors in the disease model only (White), $\boldsymbol Z_2$ (Predictors both in disease and selection models (Mixture of Blue and White)) and  $\boldsymbol W$ (Predictors in the selection model only (Blue)).}\label{fig:Fig2}
\end{figure}
\begin{figure}[H]
    \centering
    \includegraphics[width =\linewidth]{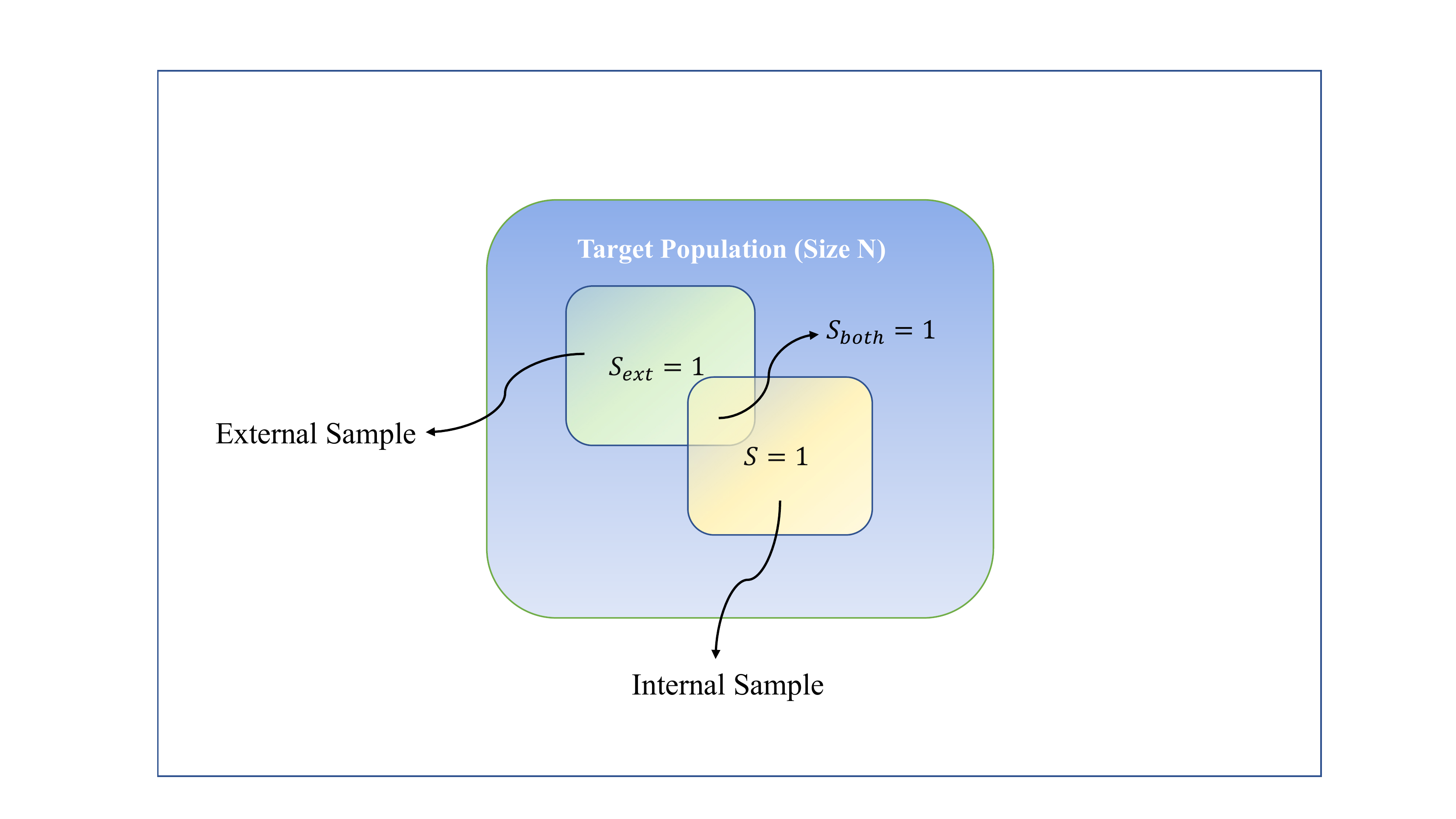}
    \caption{Figure depicting the relationship between the target population, internal non-probability and external probability samples. $S$ and $S_{\text{ext}}$ are the selection indicator variables of internal and external samples respectively. $S_{\text{both}}$ is the selection indicator variable for a person present in both internal and external samples.}
    \label{fig:population}
\end{figure}
\begin{figure}[H]
    \centering
    \includegraphics[width =\linewidth]{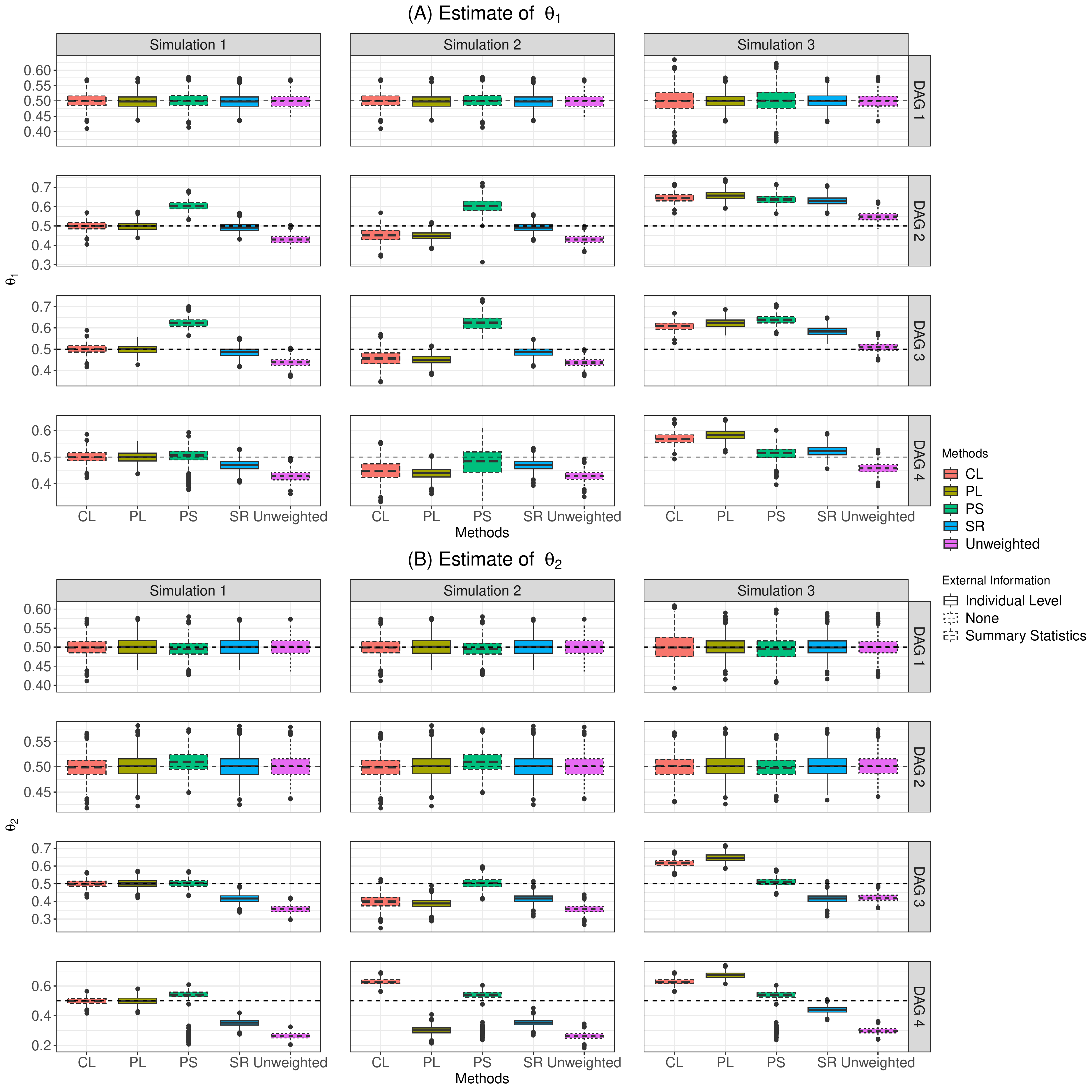}
    \caption{(A) Estimates of $\theta_1$ in, coefficient of $Z_1$ in the disease model along with 95\% C.I using the unweighted and the four weighted methods under the three simulation setups for each of the four DAGs. 
    (B) Estimates of $\theta_2$ in, coefficient of $Z_2$ in the disease model along with 95\% C.I using the unweighted and the four weighted methods under the three simulation setups for each of the four DAGs. 
    Unweighted : unweighted logistic regression, SR : simplex regression, PL : pseudolikelihood, PS : poststratification and CL : calibration.}
    \label{fig:resultstheta}
\end{figure}

\begin{figure}[H]
    \centering
    \includegraphics[width =\linewidth]{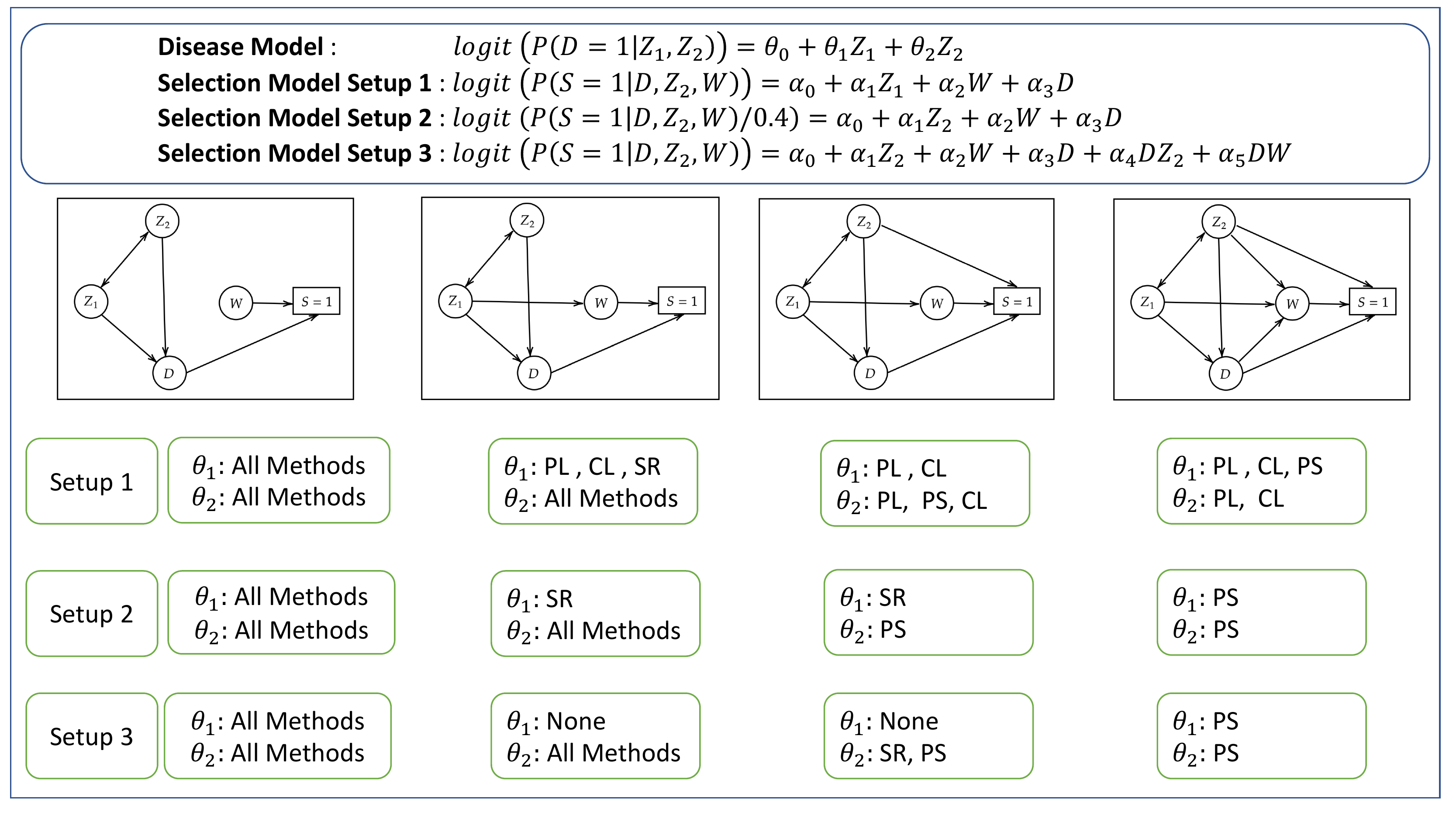}
    \caption{Preferred methods of estimation for including the unadjusted and the four weighted ones in terms of bias of estimation of the disease model parameters under different DAG setups in all the three considered simulation setups.
    Unweighted : unweighted logistic regression, SR : simplex regression, PL : pseudolikelihood, PS : poststratification and CL : calibration.}
    \label{fig:modelresults}
\end{figure}

\begin{figure}[H]
    \centering
    \includegraphics[width =1.2\linewidth,angle =90]{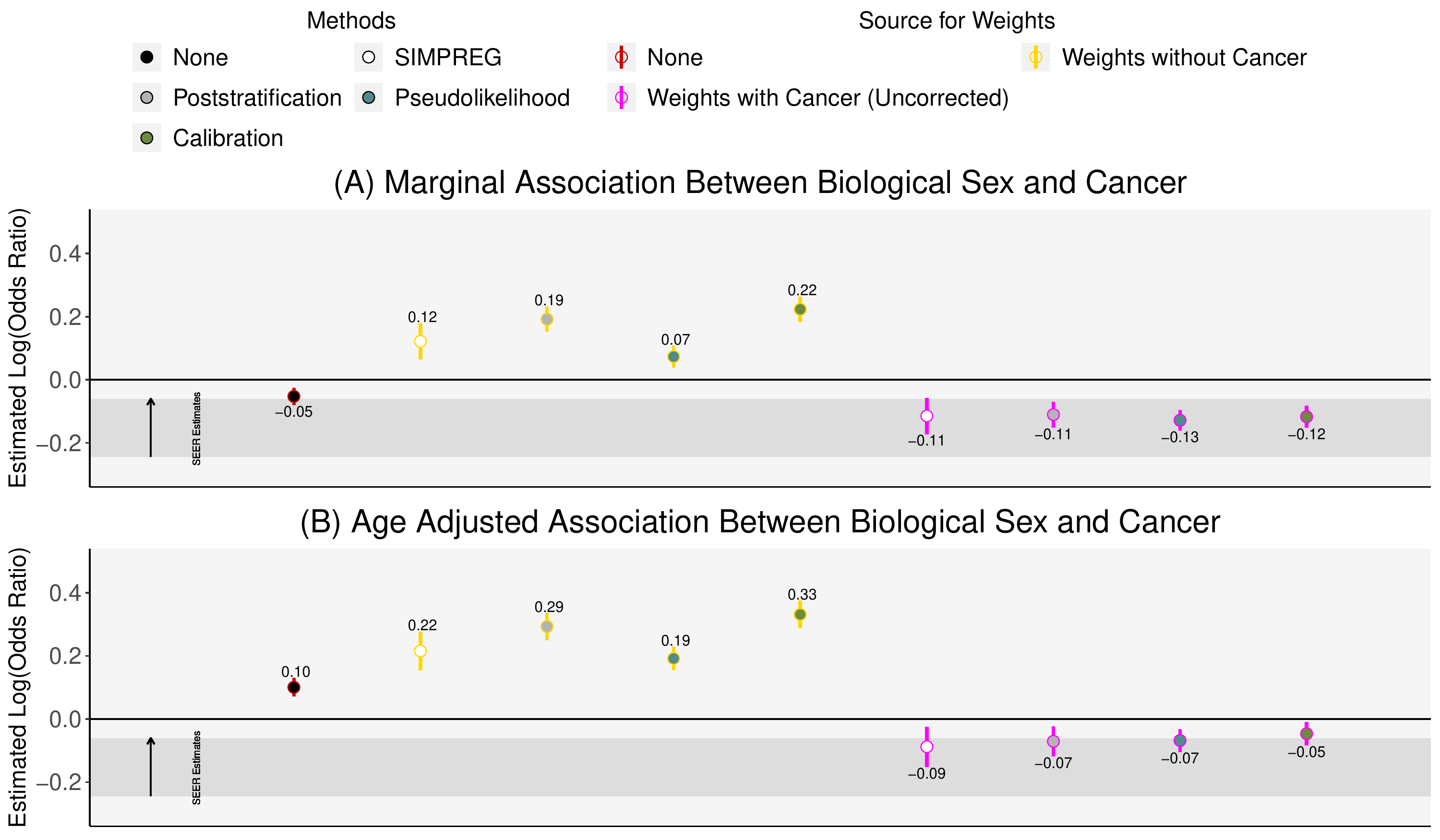}
    \caption{(A) Estimates of the marginal association between cancer and sex along with 95\% C.I in  using all the four weighted methods and the unweighted logistic regression with and without including cancer as a selection variable.
    (B) Estimates of the age adjusted association between cancer and sex along with 95\% C.I in  using all the four weighted methods and the unweighted logistic regression with and without including cancer as a selection variable.
    The grey band represents the true SEER estimates. The red bar corresponds to the unweighted logistic method. The yellow and pink bars correspond to the estimates of the four IPW methods without and with including cancer in the selection model respectively.}
    \label{fig:realdata}
\end{figure}

\begin{figure}[H]
    \centering
    \includegraphics[width =\linewidth]{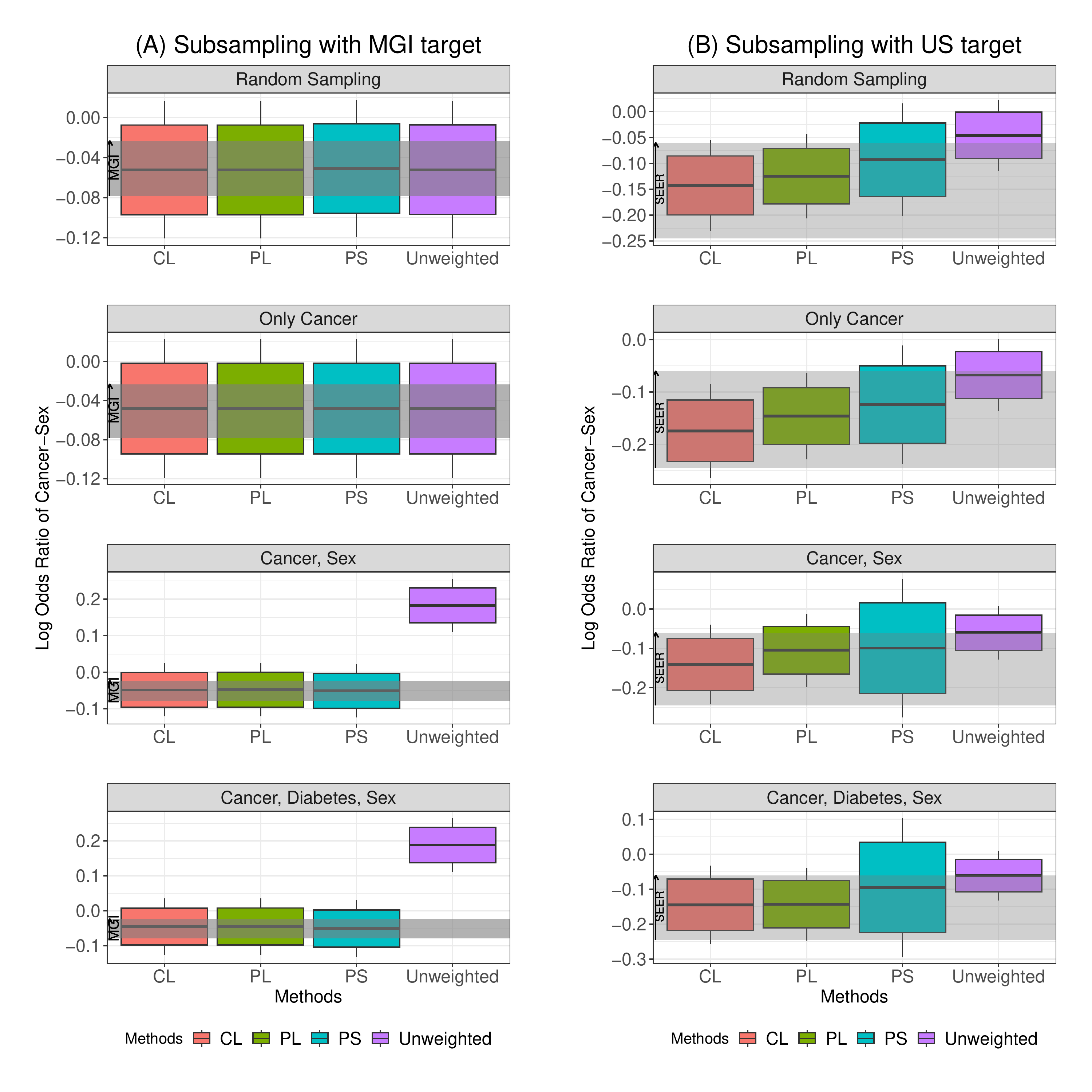}
    \caption{(A) Estimates of the association between cancer and sex along with 95\% C.I using three weighted methods and the unweighted logistic regression under the four subsampling strategies when MGI is assumed to be the target population. (B) Estimates of the association between cancer and sex along with 95\% C.I using three weighted methods and the unweighted logistic regression under the four subsampling strategies when US is assumed to be the target population.
    The grey band represents the 95\% C.I of estimate of $\theta_1$ obtained from MGI using unweighted logistic regression. Unweighted : unweighted logistic regression, PL : pseudolikelihood, PS : poststratification and CL : calibration.}
    \label{fig:subsamp}
\end{figure}

\newpage
\doublespacing
\begin{table}[H]
\centering
\setlength\tabcolsep{10pt}
\rotatebox{90}{
\begin{adjustbox}{max width=1.3\textwidth,keepaspectratio}
\Huge
    \begin{tabular}{|c|c|c|c|}
\hline
\textbf{Type of Bias}                                                                            & \textbf{Definition}                                                                                                                                                                                                                                                                                                                                                                                                                                                                                                                                                                                                                                                                                                                                                                    & \textbf{Literature}        & \textbf{Software}                                                                                                                                                                                                                                                                                                                                                                                                                                                                                                                                                                                                                                                                                                                                                                                                                                                                                                  \\ \hline
\textbf{Imperfect Phenotyping}                                                                   & \begin{tabular}[c]{@{}c@{}} Major bias in EHR studies \\ Misclassification of derived disease phenotypes. \\
Overreporting and underreporting can both occur.\\ Underreporting is the primary source.
\end{tabular}                                                                              & \begin{tabular}[c]{@{}c@{}} \citet{neuhaus1999bias}\citet{beesley2020emerging} \\\citet{tong2020augmented,chen2019inflation} \\  \citet{yin2022cost,liu2022sat}\\ \citet{huang2018pie} \end{tabular}   & \begin{tabular}[c]{@{}c@{}} 
\texttt{SAMBA}\\\end{tabular}                                     \\ \hline
\textbf{Missing Data}                                                                            & \begin{tabular}[c]{@{}c@{}}Lack of routine checkup.\\ Loss of follow up reasons.\end{tabular}                                                                                                                                                                                                                                                                                                                                                     & \begin{tabular}[c]{@{}c@{}}Hot Deck \citet{madow1983incomplete}, Tree based methods \citet{doove2014recursive}\\ Expectation-Maximization Algorithm \citet{dempster1977maximum}\\ IPW and AIPW techniques \citet{seaman2018introduction}\\ Full information maximum likelihood \citet{marcoulides2013advanced}\\ Multiple imputation \citet{rubin2004multiple}, Pattern mixture models \citet{little1993pattern}\\ Heckman imputation  \citet{galimard2016multiple} \end{tabular} & \begin{tabular}[c]{@{}c@{}} 
\texttt{MICE} \\ \texttt{MissForest}\end{tabular} \\ \hline
\textbf{Confounding}                                                                             & \begin{tabular}[c]{@{}c@{}}Direct cause of both exposure and the response.\\ Major challenge caused by unmeasured confounders.\end{tabular}                                                               & \begin{tabular}[c]{@{}c@{}} \citet{toh2011confounding} \citet{sun2022use} \\ Negative and Double Negative Controls\\
\citet{shi2020multiply,lipsitch2010negative}
\end{tabular}& \begin{tabular}[c]{@{}c@{}} 
\end{tabular} \\ \hline
\textbf{\begin{tabular}[c]{@{}c@{}}Lack of data \\ harmonization across \\ cohorts\end{tabular}} & \begin{tabular}[c]{@{}c@{}}Integrating disparate data of various sources and formats. \\
Different clinics recruit patients using varying selection  criteria. \\
Misclassification of phenotypes differs across cohorts.\end{tabular} & \begin{tabular}[c]{@{}c@{}}\citet{almeida2021methodology,abbasizanjani2023harmonising}\\
\citet{fu2020assessment,glynn2019heterogeneity,zawistowski2021michigan}.\end{tabular} &  \begin{tabular}[c]{@{}c@{}}\texttt{meta}\end{tabular}                                                                                                                                                                                             \\ \hline
\textbf{\begin{tabular}[c]{@{}c@{}}Heterogeneity of \\ studied populations\end{tabular}}         & \begin{tabular}[c]{@{}c@{}} Systematic differences between the \\ population characteristics or sampling mechanisms \end{tabular}                                                                                                                                                                                                                                                                                      & \begin{tabular}[c]{@{}c@{}}\citet{beesley2020emerging}\end{tabular}      &    \begin{tabular}[c]{@{}c@{}}\texttt{meta}\end{tabular}                                                                                                                                                                                                                                                                                                                                                                                                                                                                                                                                                                                                                                                                                                                                                  \\ \hline
\end{tabular}
\end{adjustbox}

}

\caption{Different types of biases in EHR studies other than selection bias along with their description and relevant literature to reduce the corresponding bias.}
\label{tab:Table 1}

\end{table}
\doublespacing
\begin{table}[H]
\huge
\setlength\tabcolsep{10pt}
\centering
\rotatebox{90}{
\resizebox{1.4\textwidth}{!}{
\begin{tabular}{|c|c|c|c|c|c|}
\hline
\textbf{Class}                                                                                       & \textbf{Method}                                                     & \textbf{Reference}                                                                                                                 & \textbf{Description}                                                                                                                                                                                                                                                                                                                                        & \textbf{Features}   & \textbf{Software}                                                                                                                                                                                                                                  \\ \hline
\textbf{Unweighted}                                                                                  & Naive                                                               &                                                                                                                                    & \begin{tabular}[c]{@{}c@{}}Unweighted logistic regression with D and $(\boldsymbol Z_1,\boldsymbol Z_2)$ \\ as response and predictors respectively.\end{tabular}                                                                                                                                                                                                                & \begin{tabular}[c]{@{}c@{}}Estimates $\boldsymbol \theta_1$ and $\boldsymbol\theta_2$ unbiasedly\\ only when $r(\boldsymbol Z_1, \boldsymbol Z_2)$ is free of  \\ $\boldsymbol Z_1$ and $\boldsymbol Z_2$ respectively.\end{tabular}  & \begin{tabular}[c]{@{}c@{}}\texttt{glm}\end{tabular} \\ \hline
\multirow{6}{*}{\textbf{\begin{tabular}[c]{@{}c@{}}Individual External\\ Patient Data\end{tabular}}} & \begin{tabular}[c]{@{}c@{}}Simplex\\ Regression (SR)\end{tabular}   & \begin{tabular}[c]{@{}c@{}}\citet{barndorff1991some},\\ \citet{elliot2009combining},\\ \citet{beesley2022statistical}\end{tabular} & \begin{tabular}[c]{@{}c@{}}Weighted logistic regression with D and $(\boldsymbol Z_1,\boldsymbol Z_2)$ \\ as response and predictors respectively. \\
Integrating external level individual data, the weights \\obtained using Simplex \\ and Multinomial Regressions.\end{tabular}                                                                     & \begin{tabular}[c]{@{}c@{}} Bias in estimation of $\boldsymbol \theta_2$ increases \\ with the complexity of DAG. 
Difference in predictors\\ of the internal and external selection models leads to biased\\ estimates for SR. \end{tabular}   & \begin{tabular}[c]{@{}c@{}}\texttt{simplexreg}\\\texttt{SAMBA}\end{tabular}                                  \\ \cline{2-6} 
                                                                                                     & \begin{tabular}[c]{@{}c@{}}Pseudo \\ Likelihood (PL)\end{tabular}   & \citet{chen2020doubly}                                                                                            & \begin{tabular}[c]{@{}c@{}}An weighted logistic regression with D and $(\boldsymbol Z_1,\boldsymbol Z_2)$ \\ as response and predictors respectively. With external level\\ individual data, the maximum likelihood estimating equations \\ for weights are approximated by the measures from external \\ level data resulting in pseudolikelihood estimating methods.\end{tabular} & \begin{tabular}[c]{@{}c@{}}With correct specification of selection model, \\ the estimates are highly accurate for all setups,\\ However highly sensitive to model misspecification.\end{tabular}                                          & \begin{tabular}[c]{@{}c@{}}\texttt{nleqslv}\end{tabular}                             \\ \hline
\multirow{6}{*}{\textbf{\begin{tabular}[c]{@{}c@{}}Summary Level \\ Statistics Data\end{tabular}}}   & \begin{tabular}[c]{@{}c@{}}Post \\ Stratification (PS)\end{tabular} & \citet{beesley2022statistical}                                                                                    & \begin{tabular}[c]{@{}c@{}}An weighted logistic regression with D and $(\boldsymbol Z_1,\boldsymbol Z_2)$ \\ as response and predictors respectively. With external level \\ joint probabilities of ($D,\boldsymbol Z_2',\boldsymbol W'$), where $\boldsymbol Z_2',\boldsymbol W'$ \\ are the discrete versions of $\boldsymbol Z_2,\boldsymbol W$, the weights are obtained.\end{tabular}                                                      & \begin{tabular}[c]{@{}c@{}}Fails to work for Setup 2 and 3. However\\ the efficiency in terms of bias increases for Setup 4 and is \\ least sensitive to model mis specification.\end{tabular}             & \begin{tabular}[c]{@{}c@{}}\texttt{survey}\\\texttt{SAMBA}\end{tabular}                                                                  \\ \cline{2-6} 
                                                                                                     & \begin{tabular}[c]{@{}c@{}}Calibration\\ (CL)\end{tabular}          & \citet{wu2003optimal}                                                                                             & \begin{tabular}[c]{@{}c@{}} Weighted logistic regression with D and $(\boldsymbol Z_1,\boldsymbol Z_2)$ \\ as response and predictors respectively. With external level\\ marginal probabilities on the selection variables $(D,\boldsymbol Z_2,\boldsymbol W)$, \\ the weights are estimated using pseudolikelihood estimating methods\\ similar to the PL method\end{tabular}    & \begin{tabular}[c]{@{}c@{}}With correct specification of selection model, \\ the estimates are highly accurate for all setups,\\ However highly sensitive to model misspecification.\end{tabular}                                       & \begin{tabular}[c]{@{}c@{}}\texttt{survey}\end{tabular}                              \\ \hline
\end{tabular}}
}
\caption{Short Summary of the five methods including the naive and the four weighted ones.}\label{tab:Table2}
\end{table}

\begin{table}[H]
\centering
\begin{adjustbox}{angle=90, max width=0.68\textwidth}
\begin{tabular}{|c|c|ccc|ccc|ccc|ccc|}
\hline
                                                       &                                   & \multicolumn{3}{c|}{\textbf{\begin{tabular}[c]{@{}c@{}}Bias($\theta_1$)\\ (Multiplied by 1000)\end{tabular}}}                                                                                                     & \multicolumn{3}{c|}{\textbf{\begin{tabular}[c]{@{}c@{}}Bias($\theta_2$)\\ (Multiplied by 1000)\end{tabular}}}                                                                                  & \multicolumn{3}{c|}{\textbf{RMSE($\theta_1$)}}                                                                                                                & \multicolumn{3}{c|}{\textbf{RMSE($\theta_2$)}}                                                                                                                \\ \cline{3-14} 
\multirow{-2}{*}{\textbf{DAG}}                         & \multirow{-2}{*}{\textbf{Method}} & \multicolumn{1}{c|}{\textbf{Setup 1}}                               & \multicolumn{1}{c|}{\textbf{Setup 2}}                                 & \textbf{Setup 3}                                                    & \multicolumn{1}{c|}{\textbf{Setup 1}}                                & \multicolumn{1}{c|}{\textbf{Setup 2}}                                 & \textbf{Setup 3}                                & \multicolumn{1}{c|}{\textbf{Setup 1}}                     & \multicolumn{1}{l|}{\textbf{Setup 2}}                     & \multicolumn{1}{l|}{\textbf{Setup 3}} & \multicolumn{1}{l|}{\textbf{Setup 1}}                     & \multicolumn{1}{l|}{\textbf{Setup 2}}                     & \multicolumn{1}{l|}{\textbf{Setup 3}} \\ \hline
                                                       & \textbf{Unweighted}               & \multicolumn{1}{c|}{{\color[HTML]{FE0000} \textbf{-1.48 (0.30\%)}}} & \multicolumn{1}{c|}{{\color[HTML]{FE0000} \textbf{-1.48 (0.30\%)}}}   & {\color[HTML]{FE0000} \textbf{-1.48 (0.30\%)}}                      & \multicolumn{1}{c|}{{\color[HTML]{FE0000} \textbf{1.00 (0.20\%)}}}   & \multicolumn{1}{c|}{{\color[HTML]{FE0000} \textbf{1.00 (0.20\%)}}}    & {\color[HTML]{FE0000} \textbf{1.00 (0.20\%)}}   & \multicolumn{1}{c|}{{\color[HTML]{FE0000} \textbf{1}}}    & \multicolumn{1}{c|}{{\color[HTML]{FE0000} \textbf{1}}}    & {\color[HTML]{FE0000} \textbf{1}}     & \multicolumn{1}{c|}{{\color[HTML]{FE0000} \textbf{1}}}    & \multicolumn{1}{c|}{{\color[HTML]{FE0000} \textbf{1}}}    & {\color[HTML]{FE0000} \textbf{1}}     \\ \cline{2-14} 
                                                       & \textbf{PL}                       & \multicolumn{1}{c|}{{\color[HTML]{FE0000} \textbf{-1.56 (0.32\%)}}} & \multicolumn{1}{c|}{{\color[HTML]{FE0000} \textbf{-1.56 (0.32\%)}}}   & {\color[HTML]{FE0000} \textbf{-1.56 (0.32\%)}}                      & \multicolumn{1}{c|}{{\color[HTML]{FE0000} \textbf{1.21 (0.24\%)}}}   & \multicolumn{1}{c|}{{\color[HTML]{FE0000} \textbf{1.21 (0.24\%)}}}    & {\color[HTML]{FE0000} \textbf{1.21 (0.24\%)}}   & \multicolumn{1}{c|}{{\color[HTML]{FE0000} \textbf{1.03}}} & \multicolumn{1}{c|}{{\color[HTML]{FE0000} \textbf{1.03}}} & {\color[HTML]{FE0000} \textbf{1.03}}  & \multicolumn{1}{c|}{{\color[HTML]{FE0000} \textbf{1.07}}} & \multicolumn{1}{c|}{{\color[HTML]{FE0000} \textbf{1.07}}} & {\color[HTML]{FE0000} \textbf{1.07}}  \\ \cline{2-14} 
                                                       & \textbf{SR}                       & \multicolumn{1}{c|}{{\color[HTML]{FE0000} \textbf{-1.58 (0.32\%)}}} & \multicolumn{1}{c|}{{\color[HTML]{FE0000} \textbf{-1.58 (0.32\%)}}}   & {\color[HTML]{FE0000} \textbf{-1.58 (0.32\%)}}                      & \multicolumn{1}{c|}{{\color[HTML]{FE0000} \textbf{1.22 (0.24\%)}}}   & \multicolumn{1}{c|}{{\color[HTML]{FE0000} \textbf{1.22 (0.24\%)}}}    & {\color[HTML]{FE0000} \textbf{1.22 (0.24\%)}}   & \multicolumn{1}{c|}{{\color[HTML]{FE0000} \textbf{1.06}}} & \multicolumn{1}{c|}{{\color[HTML]{FE0000} \textbf{1.06}}} & {\color[HTML]{FE0000} \textbf{1.06}}  & \multicolumn{1}{c|}{{\color[HTML]{FE0000} \textbf{1.07}}} & \multicolumn{1}{c|}{{\color[HTML]{FE0000} \textbf{1.07}}} & {\color[HTML]{FE0000} \textbf{1.07}}  \\ \cline{2-14} 
                                                       & \textbf{PS}                       & \multicolumn{1}{c|}{{\color[HTML]{FE0000} \textbf{0.83 (0.17\%)}}}  & \multicolumn{1}{c|}{{\color[HTML]{FE0000} \textbf{0.83 (0.17\%)}}}    & {\color[HTML]{FE0000} \textbf{0.83 (0.17\%)}}                       & \multicolumn{1}{c|}{{\color[HTML]{FE0000} \textbf{-4.10 (0.82\%)}}}  & \multicolumn{1}{c|}{{\color[HTML]{FE0000} \textbf{-4.10 (0.82\%)}}}   & {\color[HTML]{FE0000} \textbf{-4.10 (0.82\%)}}  & \multicolumn{1}{c|}{{\color[HTML]{FE0000} \textbf{1.21}}} & \multicolumn{1}{c|}{{\color[HTML]{FE0000} \textbf{1.21}}} & {\color[HTML]{FE0000} \textbf{1.21}}  & \multicolumn{1}{c|}{{\color[HTML]{FE0000} \textbf{0.88}}} & \multicolumn{1}{c|}{{\color[HTML]{FE0000} \textbf{0.88}}} & {\color[HTML]{FE0000} \textbf{0.88}}  \\ \cline{2-14} 
\multirow{-5}{*}{\textbf{DAG  1}}                      & \textbf{CL}                       & \multicolumn{1}{c|}{{\color[HTML]{FE0000} \textbf{0.68 (0.14\%)}}}  & \multicolumn{1}{c|}{{\color[HTML]{FE0000} \textbf{0.68 (0.14\%)}}}    & {\color[HTML]{FE0000} \textbf{0.68 (0.14\%)}}                       & \multicolumn{1}{c|}{{\color[HTML]{FE0000} \textbf{-0.61 (0.12\%)}}}  & \multicolumn{1}{c|}{{\color[HTML]{FE0000} \textbf{-0.61 (0.12\%)}}}   & {\color[HTML]{FE0000} \textbf{-0.61 (0.12\%)}}  & \multicolumn{1}{c|}{{\color[HTML]{FE0000} \textbf{1.13}}} & \multicolumn{1}{c|}{{\color[HTML]{FE0000} \textbf{1.13}}} & {\color[HTML]{FE0000} \textbf{1.13}}  & \multicolumn{1}{c|}{{\color[HTML]{FE0000} \textbf{1}}}    & \multicolumn{1}{c|}{{\color[HTML]{FE0000} \textbf{1}}}    & {\color[HTML]{FE0000} \textbf{1}}     \\ \hline
                                                       & \textbf{Unweighted}               & \multicolumn{1}{c|}{-70.80 (14.16\%)}                               & \multicolumn{1}{c|}{-70.42 (14.08\%)}                                 & \multicolumn{1}{l|}{47.61 (9.53\%)}                                 & \multicolumn{1}{c|}{{\color[HTML]{FE0000} \textbf{1.02 (0.20\%)}}}   & \multicolumn{1}{c|}{{\color[HTML]{FE0000} \textbf{0.66 (0.13\%)}}}    & 1.69 (0.34\%)                                   & \multicolumn{1}{c|}{1}                                    & \multicolumn{1}{c|}{1}                                    & 1                                     & \multicolumn{1}{c|}{{\color[HTML]{FE0000} \textbf{1}}}    & \multicolumn{1}{c|}{{\color[HTML]{FE0000} \textbf{1}}}    & {\color[HTML]{FE0000} \textbf{1}}     \\ \cline{2-14} 
                                                       & \textbf{PL}                       & \multicolumn{1}{c|}{{\color[HTML]{FE0000} \textbf{-1.39 (0.28\%)}}} & \multicolumn{1}{c|}{{\color[HTML]{FE0000} \textbf{-51.02 (10.20\%)}}} & 157.33 (31.47\%)                                                    & \multicolumn{1}{c|}{{\color[HTML]{FE0000} \textbf{1.29 (0.25\%)}}}   & \multicolumn{1}{c|}{{\color[HTML]{FE0000} \textbf{0.61 (0.12\%)}}}    & 2.11 (0.42\%)                                   & \multicolumn{1}{c|}{{\color[HTML]{FE0000} \textbf{0.09}}} & \multicolumn{1}{c|}{{\color[HTML]{000000} 0.57}}          & 9.22                                  & \multicolumn{1}{c|}{{\color[HTML]{FE0000} \textbf{1.06}}} & \multicolumn{1}{c|}{{\color[HTML]{FE0000} \textbf{1.05}}} & {\color[HTML]{FE0000} \textbf{1.07}}  \\ \cline{2-14} 
                                                       & \textbf{SR}                       & \multicolumn{1}{c|}{{\color[HTML]{FE0000} \textbf{-8.61 (1.72\%)}}} & \multicolumn{1}{c|}{-8.25 (1.65\%)}                                   & 129.80 (25.96\%)                                                    & \multicolumn{1}{c|}{{\color[HTML]{FE0000} \textbf{1.39 (0.28\%)}}}   & \multicolumn{1}{c|}{{\color[HTML]{FE0000} \textbf{0.68 (0.14\%)}}}    & 2.34 (0.47\%)                                   & \multicolumn{1}{c|}{{\color[HTML]{FE0000} \textbf{0.10}}} & \multicolumn{1}{c|}{{\color[HTML]{FE0000} \textbf{0.12}}} & 6.33                                  & \multicolumn{1}{c|}{{\color[HTML]{FE0000} \textbf{1.05}}} & \multicolumn{1}{c|}{{\color[HTML]{FE0000} \textbf{1.06}}} & {\color[HTML]{FE0000} \textbf{1.05}}  \\ \cline{2-14} 
                                                       & \textbf{PS}                       & \multicolumn{1}{c|}{103.90 (20.78\%)}                               & \multicolumn{1}{c|}{103.85 (20.77\%)}                                 & 137.93 (27.59\%)                                                    & \multicolumn{1}{c|}{{\color[HTML]{FE0000} \textbf{10.05 (2.01\%)}}}  & \multicolumn{1}{c|}{{\color[HTML]{FE0000} \textbf{10.29 (2.06\%)}}}   & -0.82 (0.16\%)                                  & \multicolumn{1}{c|}{2.07}                                 & \multicolumn{1}{c|}{2.22}                                 & 7.15                                  & \multicolumn{1}{c|}{{\color[HTML]{FE0000} \textbf{1.14}}} & \multicolumn{1}{c|}{{\color[HTML]{000000} 1.86}}          & {\color[HTML]{FE0000} \textbf{0.94}}  \\ \cline{2-14} 
\multirow{-5}{*}{\textbf{DAG 2}}                       & \textbf{CL}                       & \multicolumn{1}{c|}{{\color[HTML]{FE0000} \textbf{0.44 (0.09\%)}}}  & \multicolumn{1}{c|}{{\color[HTML]{FE0000} \textbf{-47.79 (9.56\%)}}}  & 145.26 (29.05\%)                                                    & \multicolumn{1}{c|}{{\color[HTML]{FE0000} \textbf{-0.67 (0.13\%)}}}  & \multicolumn{1}{c|}{{\color[HTML]{FE0000} \textbf{-0.13 (0.03\%)}}}   & 0.40 (0.08\%)                                   & \multicolumn{1}{c|}{{\color[HTML]{FE0000} \textbf{0.09}}} & \multicolumn{1}{c|}{{\color[HTML]{000000} 0.65}}          & 7.89                                  & \multicolumn{1}{c|}{{\color[HTML]{FE0000} \textbf{1.01}}} & \multicolumn{1}{c|}{{\color[HTML]{FE0000} \textbf{1.13}}} & {\color[HTML]{FE0000} \textbf{1.00}}  \\ \hline
                                                       & \textbf{Unweighted}               & \multicolumn{1}{c|}{-62.83 (12.6\%)}                                & \multicolumn{1}{c|}{-62.28 (12.46\%)}                                 & 9.24 (1.85\%)                                                       & \multicolumn{1}{c|}{-143.75 (28.74\%)}                               & \multicolumn{1}{c|}{-143.86 (28.77\%)}                                & 79.05 (15.81\%)                                 & \multicolumn{1}{c|}{1}                                    & \multicolumn{1}{c|}{1}                                    & 1                                     & \multicolumn{1}{c|}{1}                                    & \multicolumn{1}{c|}{1}                                    & 1                                     \\ \cline{2-14} 
                                                       & \textbf{PL}                       & \multicolumn{1}{c|}{{\color[HTML]{FE0000} \textbf{-0.30 (0.06\%)}}} & \multicolumn{1}{c|}{{\color[HTML]{000000} -48.90 (9.78\%)}}           & 122.56 (24.51\%)                                                    & \multicolumn{1}{c|}{{\color[HTML]{FE0000} \textbf{1.27 (0.25\%)}}}   & \multicolumn{1}{c|}{{\color[HTML]{000000} -112.28 (22.46\%)}}         & 147.67 (29.53\%)                                & \multicolumn{1}{c|}{{\color[HTML]{FE0000} \textbf{0.11}}} & \multicolumn{1}{c|}{{\color[HTML]{000000} 0.67}}          & 31.01                                 & \multicolumn{1}{c|}{{\color[HTML]{FE0000} \textbf{0.03}}} & \multicolumn{1}{c|}{{\color[HTML]{FE0000} \textbf{0.63}}} & 3.33                                  \\ \cline{2-14} 
                                                       & \textbf{SR}                       & \multicolumn{1}{c|}{-13.80 (2.76\%)}                                & \multicolumn{1}{c|}{{\color[HTML]{FE0000} \textbf{-13.41 (2.68\%)}}}  & 84.07 (16.81\%)                                                     & \multicolumn{1}{c|}{-83.72 (16.74\%)}                                & \multicolumn{1}{c|}{-84.22 (16.84\%)}                                 & 77.48 (15.50\%)                                 & \multicolumn{1}{c|}{{\color[HTML]{FE0000} \textbf{0.15}}} & \multicolumn{1}{c|}{{\color[HTML]{FE0000} \textbf{0.15}}} & 15.08                                 & \multicolumn{1}{c|}{0.36}                                 & \multicolumn{1}{c|}{0.36}                                 & 0.98                                  \\ \cline{2-14} 
                                                       & \textbf{PS}                       & \multicolumn{1}{c|}{123.44 (24.7\%)}                                & \multicolumn{1}{c|}{124.32 (24.86\%)}                                 & 138.89 (27.78\%)                                                    & \multicolumn{1}{c|}{{\color[HTML]{FE0000} \textbf{2.15 (0.43\%)}}}   & \multicolumn{1}{c|}{{\color[HTML]{FE0000} \textbf{2.10 (0.42\%)}}}    & {\color[HTML]{FE0000} \textbf{9.74 (1.95\%)}}   & \multicolumn{1}{c|}{3.61}                                 & \multicolumn{1}{c|}{3.86}                                 & 39.64                                 & \multicolumn{1}{c|}{{\color[HTML]{FE0000} \textbf{0.02}}} & \multicolumn{1}{c|}{{\color[HTML]{FE0000} \textbf{0.04}}} & 0.09                                  \\ \cline{2-14} 
\multirow{-5}{*}{\textbf{DAG 3}}                       & \textbf{CL}                       & \multicolumn{1}{c|}{{\color[HTML]{FE0000} \textbf{0.14 (0.03\%)}}}  & \multicolumn{1}{c|}{{\color[HTML]{000000} -42.95 (8.59\%)}}           & 108.11 (21.62\%)                                                    & \multicolumn{1}{c|}{{\color[HTML]{FE0000} \textbf{-0.62 (-0.12\%)}}} & \multicolumn{1}{c|}{{\color[HTML]{000000} -102.32 (20.46\%)}}         & 116.77 (23.35\%)                                & \multicolumn{1}{c|}{{\color[HTML]{FE0000} \textbf{0.11}}} & \multicolumn{1}{c|}{{\color[HTML]{000000} 0.74}}          & 24.32                                 & \multicolumn{1}{c|}{{\color[HTML]{FE0000} \textbf{0.02}}} & \multicolumn{1}{c|}{{\color[HTML]{FE0000} \textbf{0.56}}} & 2.10                                  \\ \hline
\multicolumn{1}{|l|}{}                                 & \textbf{Unweighted}               & \multicolumn{1}{c|}{-71.92 (14.38\%)}                               & \multicolumn{1}{c|}{-71.67 (14.33\%)}                                 & -42.33 (8.47\%)                                                     & \multicolumn{1}{c|}{-235.72 (47.14\%)}                               & \multicolumn{1}{c|}{-235.24 (47.05\%)}                                & -201.69 (40.34\%)                               & \multicolumn{1}{c|}{1}                                    & \multicolumn{1}{c|}{1}                                    & 1                                     & \multicolumn{1}{c|}{1}                                    & \multicolumn{1}{c|}{1}                                    & 1                                     \\ \cline{2-14} 
\multicolumn{1}{|l|}{}                                 & \textbf{PL}                       & \multicolumn{1}{c|}{{\color[HTML]{FE0000} \textbf{0.07 (0.01\%)}}}  & \multicolumn{1}{c|}{{\color[HTML]{000000} -60.53 (12.11\%)}}          & 82.38 (16.47\%)                                                     & \multicolumn{1}{c|}{{\color[HTML]{FE0000} \textbf{1.09 (0.22\%)}}}   & \multicolumn{1}{c|}{{\color[HTML]{000000} -198.12 (39.62\%)}}         & 173.74 (34.75\%)                                & \multicolumn{1}{c|}{{\color[HTML]{FE0000} \textbf{0.08}}} & \multicolumn{1}{c|}{{\color[HTML]{000000} 0.75}}          & 3.33                                  & \multicolumn{1}{c|}{{\color[HTML]{FE0000} \textbf{0.01}}} & \multicolumn{1}{c|}{{\color[HTML]{000000} 0.72}}          & 0.75                                  \\ \cline{2-14} 
\multicolumn{1}{|l|}{}                                 & \textbf{SR}                       & \multicolumn{1}{c|}{-30.11 (6.02\%)}                                & \multicolumn{1}{c|}{-30.23 (6.05\%)}                                  & \multicolumn{1}{l|}{21.63 (4.33\%)}                                 & \multicolumn{1}{c|}{-145.59 (29.20\%)}                               & \multicolumn{1}{c|}{-145.58 (29.12\%)}                                & -61.17 (12.23\%)                                & \multicolumn{1}{c|}{0.24}                                 & \multicolumn{1}{c|}{{\color[HTML]{FE0000} \textbf{0.24}}} & {\color[HTML]{FE0000} \textbf{0.41}}  & \multicolumn{1}{c|}{0.39}                                 & \multicolumn{1}{c|}{{\color[HTML]{FE0000} \textbf{0.39}}} & {\color[HTML]{FE0000} \textbf{0.10}}  \\ \cline{2-14} 
\multicolumn{1}{|l|}{}                                 & \textbf{PS}                       & \multicolumn{1}{c|}{{\color[HTML]{FE0000} \textbf{2.96 (0.59\%)}}}  & \multicolumn{1}{c|}{{\color[HTML]{FE0000} \textbf{-20.03 (4.01\%)}}}  & \multicolumn{1}{l|}{{\color[HTML]{FE0000} \textbf{12.68 (2.54\%)}}} & \multicolumn{1}{c|}{{\color[HTML]{FE0000} \textbf{29.11 (5.82\%)}}}  & \multicolumn{1}{c|}{{\color[HTML]{FE0000} \textbf{-57.71 (11.54\%)}}} & {\color[HTML]{FE0000} \textbf{31.38 (6.28 \%)}} & \multicolumn{1}{c|}{{\color[HTML]{FE0000} \textbf{0.15}}} & \multicolumn{1}{c|}{0.52}                                 & {\color[HTML]{FE0000} \textbf{0.37}}  & \multicolumn{1}{c|}{{\color[HTML]{FE0000} \textbf{0.10}}} & \multicolumn{1}{c|}{{\color[HTML]{FE0000} \textbf{0.40}}} & {\color[HTML]{FE0000} \textbf{0.11}}  \\ \cline{2-14} 
\multicolumn{1}{|l|}{\multirow{-5}{*}{\textbf{DAG 4}}} & \textbf{CL}                       & \multicolumn{1}{c|}{{\color[HTML]{FE0000} \textbf{0.45 (0.09\%)}}}  & \multicolumn{1}{c|}{{\color[HTML]{000000} -51.03 (10.21\%)}}          & \multicolumn{1}{l|}{68.70 (13.7\%)}                                 & \multicolumn{1}{c|}{{\color[HTML]{FE0000} \textbf{-0.65 (0.13\%)}}}  & \multicolumn{1}{c|}{{\color[HTML]{000000} -174.77 (34.75\%)}}         & 129.91 (25.98\%)                                & \multicolumn{1}{c|}{{\color[HTML]{FE0000} \textbf{0.08}}} & \multicolumn{1}{c|}{{\color[HTML]{000000} 0.72}}          & 2.38                                  & \multicolumn{1}{c|}{{\color[HTML]{FE0000} \textbf{0.01}}} & \multicolumn{1}{c|}{{\color[HTML]{000000} 0.57}}          & 0.42                                  \\ \hline
\end{tabular}
\end{adjustbox}

\caption{Bias and RMSE Comparison between the unweighted and four weighted methods in DAGs 1, 2, 3, 4, under simulation setups 1,2 and 3. The best models in terms of bias and RMSE under each setup have been highlighted in red separately. Unweighted : Unweighted Logistic Regression, SR : Simplex regression, PL : Pseudolikelihood, PS : Post Stratification and CL : Calibration.}
\label{tab:Table 3}
\end{table}

\begin{table}[H]
\centering
\begin{adjustbox}{width=1\textwidth}
\begin{tabular}{|c|c|c|}
\hline
\textbf{Variables}                                                    & \textbf{\begin{tabular}[c]{@{}c@{}}MGI \\ $(N = 80947)$\end{tabular}}                 & \textbf{\begin{tabular}[c]{@{}c@{}}NHANES\\ $(N=5153)$\end{tabular}}                  \\ \hline
\textbf{Cancer}                                                       & \begin{tabular}[c]{@{}c@{}}Yes (48.7\%)\\ No  (51.2\%)\end{tabular}                   & \begin{tabular}[c]{@{}c@{}}Yes (10.3\%)\\ No  (89.7\%)\end{tabular}                   \\ \hline
\textbf{Sex}                                                       & \begin{tabular}[c]{@{}c@{}}Female (53.8\%)\\ Male    (46.2\%)\end{tabular}            & \begin{tabular}[c]{@{}c@{}}Female (51.8\%)\\ Male     (48.2\%)\end{tabular}           \\ \hline
\textbf{Age}                                                          & 57.5 (18.1)                                                                           & 51.2 (17.6)                                                                           \\ \hline
\textbf{Race}                                                         & \begin{tabular}[c]{@{}c@{}}Non-Hispanic White (85.3\%)\\ Others (14.7\%)\end{tabular} & \begin{tabular}[c]{@{}c@{}}Non-Hispanic White (34.3\%)\\ Others (66.7\%)\end{tabular} \\ \hline
\textbf{BMI}(kg/{\rm m}$^2$)                                                          & 29.9 (7.26)                                                                           & 29.8 (7.4)                                                                            \\ \hline
\textbf{CHD}                                                          & \begin{tabular}[c]{@{}c@{}}Yes : 16.5\%\\ No:   83.5\%\end{tabular}                   & \begin{tabular}[c]{@{}c@{}}Yes : 4.6\%\\ No :  95.4\%\end{tabular}                    \\ \hline
\textbf{Diabetes}                                                     & \begin{tabular}[c]{@{}c@{}}Yes : 33.3\%\\ No : 66.7\%\end{tabular}                    & \begin{tabular}[c]{@{}c@{}}Yes : 15.7\%\\ No :  84.3\%\end{tabular}                   \\ \hline
\textbf{\begin{tabular}[c]{@{}c@{}}Current \\ Smoking\end{tabular}} & \begin{tabular}[c]{@{}c@{}}Yes : 9.8\%\\ No : 90.2\%\end{tabular}                     & \begin{tabular}[c]{@{}c@{}}Yes: 18.2\%\\ No: 81.8\%\end{tabular}                      \\ \hline
\end{tabular}
\end{adjustbox}
\caption{Descriptive Summaries of the different variables of interest in both MGI and NHANES data. The statistics for NHANES provided here are unweighted. CHD stands for Coronary heart disease. For continuous variables we reported Mean (Sd).}
\label{tab:TableMGI}
\end{table}

\newpage
\pagenumbering{arabic}
\singlespacing
\section*{Supplementary Section}
\setcounter{figure}{0}
\renewcommand{\thefigure}{S\arabic{figure}}
\setcounter{section}{1}
\renewcommand{\thesection}{S\arabic{section}}
In this section, we present the detailed proofs of the results and theorem stated above.

\subsection{Distribution of $D|\boldsymbol Z_1,\boldsymbol Z_2,S=1$}\label{sec:pr1}
By Bayes Theorem  we obtain that, 
\begin{align*}
   & P(D=1|\boldsymbol Z_1,\boldsymbol Z_2,S=1)=\\
   & \frac{P(S=1|D=1,\boldsymbol Z_1,\boldsymbol Z_2)P(D=1|\boldsymbol Z_1,\boldsymbol Z_2)}{P(S=1|D=1,\boldsymbol Z_1,\boldsymbol Z_2)P(D=1|\boldsymbol Z_1,\boldsymbol Z_2) + P(S=1|D=0,\boldsymbol Z_1,\boldsymbol Z_2)P(D=0|\boldsymbol Z_1,\boldsymbol Z_2)}\cdot
\end{align*}
From equation \eqref{eq:eq1} in Section \ref{sec:DAGs}, we know that, $r(\boldsymbol Z_1,\boldsymbol Z_2)=\frac{P(S=1|D=1,\boldsymbol Z_1,\boldsymbol Z_2)}{P(S=1|D=0,\boldsymbol Z_1,\boldsymbol Z_2)}$ and $\text{logit}(P(D=1|\boldsymbol Z_1,\boldsymbol Z_2))=\theta_0+\underset{}{\boldsymbol\theta_1} \boldsymbol Z_1 +\boldsymbol \theta_2 \boldsymbol Z_2$. Therefore, dividing both the numerator and denominator by $P(S=1|D=1,\boldsymbol Z_1,\boldsymbol Z_2)$, we obtain
\begin{align*}
    &P(D=1|\boldsymbol Z_1,\boldsymbol Z_2,S=1)  =\frac{P(D=1|\boldsymbol Z_1,\boldsymbol Z_2)}{P(D=1|\boldsymbol Z_1,\boldsymbol Z_2) + \frac{1}{r(\boldsymbol Z_1,\boldsymbol Z_2)} \cdot (1-P(D=1|\boldsymbol Z_1,\boldsymbol Z_2))}\\
    & = \frac{e^{\theta_0+\underset{}{\boldsymbol\theta_1}.\boldsymbol Z_1 + \boldsymbol \theta_2.\boldsymbol Z_2}}{e^{\theta_0+\underset{}{\boldsymbol\theta_1}\boldsymbol Z_1 + \boldsymbol \theta_2\boldsymbol Z_2}+ \frac{1}{r(\boldsymbol Z_1,\boldsymbol Z_2)}}\\
     &\frac{P(D=1|\boldsymbol Z_1,\boldsymbol Z_2,S=1)}{1- P(D=1|\boldsymbol Z_1,\boldsymbol Z_2,S=1)}=e^{\theta_0+\underset{}{\boldsymbol\theta_1}\boldsymbol Z_1 + \boldsymbol \theta_2\boldsymbol Z_2}\cdot r(\boldsymbol Z_1,\boldsymbol Z_2)\cdot\\
     &\text{Thus we obtain,} \hspace{0.2cm}
     \text{logit}(P(D=1|\boldsymbol Z_1,\boldsymbol Z_2,S=1))=\theta_0 +\underset{}{\boldsymbol\theta_1}\boldsymbol Z_1 + \boldsymbol \theta_2\boldsymbol Z_2 +\text{log}(r(\boldsymbol Z_1,\boldsymbol Z_2))\cdot
\end{align*}

\subsection{Expression of $r(\boldsymbol Z_1,\boldsymbol Z_2)$ in different Setups}\label{sec:pr2}
The original disease model in absence of selection bias as in equation \eqref{eq:eq1} in Section \ref{sec:DAGs} is :
\begin{align*}
    \text{logit}(P(D=1|\boldsymbol Z_1,\boldsymbol Z_2))=\theta_0+\underset{}{\boldsymbol\theta_1} \boldsymbol Z_1 +\boldsymbol \theta_2 \boldsymbol Z_2\cdot
\end{align*}
However in presence of selection bias from equation \eqref{eq:eq2}, the observed disease model $D|\boldsymbol Z_1,\boldsymbol Z_2,S=1$ is :
\begin{align*}
    \text{logit}(P(D=1|\boldsymbol Z_1,\boldsymbol Z_2,S=1))=\theta_0+\underset{}{\boldsymbol\theta_1} \boldsymbol Z_1 +\boldsymbol \theta_2 \boldsymbol Z_2 + \text{log}(r(\boldsymbol Z_1,\boldsymbol Z_2))\cdot
\end{align*}
\subsubsection{Example DAG 1: unbiased case}\label{sec:pr2.1}
Under DAG 1 of Figure \ref{fig:Fig2}, we observe that all the paths from $\boldsymbol Z_1$ to $S$, as well as from $\boldsymbol Z_2$ to $S$ are blocked when we condition on $D$. Therefore, both $\boldsymbol Z_1$ and  $S$ and $\boldsymbol Z_2$ and $S$ are d-separated when we condition on $D$. Therefore $P(S=1|D=1,\boldsymbol Z_1,\boldsymbol Z_2)=P(S=1|D=1)$ and $P(S=1|D=0,\boldsymbol Z_1,\boldsymbol Z_2)=P(S=1|D=0)$ and as a result, the simplified $r(\boldsymbol Z_1,\boldsymbol Z_2)=\frac{P(S=1|D=1)}{P(S=1|D=0)}$ independent of $\boldsymbol Z_1$ and $\boldsymbol Z_2$ or a constant function of $(\boldsymbol Z_1,\boldsymbol Z_2)$. As a result, the observed model simplifies to :
\begin{align*}
    \text{logit}(P(D=1|\boldsymbol Z_1,\boldsymbol Z_2,S=1))=(\theta_0+\text{log}(r)) + \underset{}{\boldsymbol\theta_1} \boldsymbol Z_1 +\boldsymbol \theta_2 \boldsymbol Z_2 =\theta_0' + \underset{}{\boldsymbol\theta_1} \boldsymbol Z_1 +\boldsymbol \theta_2 \boldsymbol Z_2\cdot
\end{align*}
Therefore, in this case when we perform a logistic regression with $D$ and $\boldsymbol Z_1,\boldsymbol Z_2$ as response and predictors, $\widehat{\boldsymbol\theta_1}$ and $\widehat{\boldsymbol \theta_2}$ consistently estimate the actual disease model parameters $\boldsymbol\theta_1$ and $\boldsymbol \theta_2$. But still $\widehat{\theta_0'}$ is biased since it has the extra $\text{log(r)}$ term.
\subsubsection{Example  DAG 2: $Z_1\rightarrow W$ arrow induced bias for coefficient of $Z_1$}\label{sec:pr2.2}
Under DAG 2 of Figure \ref{fig:Fig2}, we observe that conditioned on D and $\boldsymbol Z_1$, all the paths between from $\boldsymbol Z_2$ to $S$ are blocked. Therefore, given $\boldsymbol Z_1$ and $D$, $\boldsymbol Z_2$ and $S$ are d-separated which implies,
\begin{equation}
   r(\boldsymbol Z_1,\boldsymbol Z_2) =\frac{P(S=1|D=1,\boldsymbol Z_1,\boldsymbol Z_2)}{P(S=1|D=0,\boldsymbol Z_1,\boldsymbol Z_2)}=\frac{P(S=1|D=1,\boldsymbol Z_1)}{P(S=1|D=0,\boldsymbol Z_1)}\cdot
\end{equation}
\begin{equation}
     r(\boldsymbol Z_1,\boldsymbol Z_2)=\frac{\int P(S=1|D=1,\boldsymbol Z_1,\boldsymbol W)dP(\boldsymbol W|D=1,\boldsymbol Z_1)}{\int P(S=1|D=0,\boldsymbol Z_1,\boldsymbol W)dP(\boldsymbol W|D=0,\boldsymbol Z_1)}\cdot
\end{equation}
Under DAG 2, conditioned on $\boldsymbol W$ and $D$, $\boldsymbol Z_1$ is independent of $S$ since given $\boldsymbol W$ and $D$ all the paths from $\boldsymbol Z_1$ to S are blocked, Thus,
   \begin{equation}
     r(\boldsymbol Z_1,\boldsymbol Z_2)=\frac{\int P(S=1|D=1,\boldsymbol W)dP(\boldsymbol W|\boldsymbol Z_1)}{\int P(S=1|D=0,\boldsymbol W)dP(\boldsymbol W|\boldsymbol Z_1)}\cdot\label{eq:ap1}
   \end{equation}
Since $r(\boldsymbol Z_1,\boldsymbol Z_2)$ is a function of $\boldsymbol Z_1$ only, using an exactly similar argument as in DAG 1, we conclude that the naive method unbiasedly estimates the coefficient $\boldsymbol \theta_2$ of $\boldsymbol Z_2$. Equation \eqref{eq:ap1}, shows that the dependence $r(\boldsymbol Z_1,\boldsymbol Z_2)$ on $\boldsymbol Z_1$ is only through the second term (distribution of $\boldsymbol W|\boldsymbol Z_1$) of the integral for both the numerator and denominator.                                                                        

\subsubsection{Example DAG 3: $Z_1\rightarrow W$ and $Z_2\rightarrow S$ induced bias for coefficients of $Z_1$ and $Z_2$ }\label{sec:pr2.3}
Under DAG 3 of Figure \ref{fig:Fig2},  $\boldsymbol Z_2$ is a direct parent of the selection indicator. Therefore, under this setup, $r(\boldsymbol Z_1,\boldsymbol Z_2)$ is a function of both the disease model predictors $(\boldsymbol Z_1, \boldsymbol Z_2)$. The expression of $r(\boldsymbol Z_1,\boldsymbol Z_2)$ is given by,
\begin{align*}
   r(\boldsymbol Z_1,\boldsymbol Z_2) &=\frac{P(S=1|D=1,\boldsymbol Z_1,\boldsymbol Z_2)}{P(S=1|D=0,\boldsymbol Z_1,\boldsymbol Z_2)}\\
   & =\frac{\int P(S=1|D=1,\boldsymbol Z_1,\boldsymbol Z_2,\boldsymbol W)dP(\boldsymbol W|D=1,\boldsymbol Z_1,\boldsymbol Z_2)}{\int P(S=1|D=0,\boldsymbol Z_1,\boldsymbol Z_2,\boldsymbol W)dP(\boldsymbol W|D=0,\boldsymbol Z_1,\boldsymbol Z_2)}\cdot
\end{align*}
Under DAG 3, conditioned on $(\boldsymbol W,D,\boldsymbol Z_2)$, $\boldsymbol Z_1$ is independent of $S$ since given $(\boldsymbol W,D,\boldsymbol Z_2)$ all paths from $\boldsymbol Z_1$ to S are blocked. Thus,
\begin{equation}
    r(\boldsymbol Z_1,\boldsymbol Z_2) =\frac{\int P(S=1|D=1,\boldsymbol W,\boldsymbol Z_2)dP(\boldsymbol W|\boldsymbol Z_1)}{\int P(S=1|D=0,\boldsymbol W,\boldsymbol Z_2)dP(\boldsymbol W|\boldsymbol Z_1)}\cdot \label{eq:apdag3}
\end{equation}
Since $r(\boldsymbol Z_1,\boldsymbol Z_2)$ is a function of both $\boldsymbol Z_1$ and $\boldsymbol Z_2$, the naive method fails to produce unbiased estimates for both the disease parameters. Equation \eqref{eq:apdag3} also shows that the dependence $r(\boldsymbol Z_1,\boldsymbol Z_2)$ on $\boldsymbol Z_1$ is only through the second term (distribution of $\boldsymbol W|\boldsymbol Z_1$) of the integral for both the numerator and denominator. On the other hand the first term in the integral, namely distribution of $(S|D,\boldsymbol Z_1\boldsymbol Z_2)$ for both numerator and denominator summarizes the dependence of $r(\boldsymbol Z_1,\boldsymbol Z_2)$ on $\boldsymbol Z_2$.        
\subsubsection{Example DAG 4: strong dependence, increased bias for coefficients of $Z_1$ and $Z_2$}\label{sec:pr2.4}
The additional arrows $Z_2\rightarrow W$ and $D\rightarrow W$ lead to increase in dependence between the selection and disease models. Similar to DAG 3, $r(\boldsymbol Z_1,\boldsymbol Z_2)$ is a function of both the disease model predictors $(\boldsymbol Z_1, \boldsymbol Z_2)$. The expression of $r(\boldsymbol Z_1,\boldsymbol Z_2)$ is given by,
\begin{align*}
   r(\boldsymbol Z_1,\boldsymbol Z_2) &=\frac{P(S=1|D=1,\boldsymbol Z_1,\boldsymbol Z_2)}{P(S=1|D=0,\boldsymbol Z_1,\boldsymbol Z_2)}\\
   & =\frac{\int P(S=1|D=1,\boldsymbol Z_1,\boldsymbol Z_2,\boldsymbol W)dP(\boldsymbol W|D=1,\boldsymbol Z_1,\boldsymbol Z_2)}{\int P(S=1|D=0,\boldsymbol Z_1,\boldsymbol Z_2,\boldsymbol W)dP(\boldsymbol W|D=0,\boldsymbol Z_1,\boldsymbol Z_2)}\cdot
\end{align*}
Under DAG 4, conditioned on $(\boldsymbol W,D,\boldsymbol Z_2)$, $\boldsymbol Z_1$ is independent of $S$ since given $(\boldsymbol W,D,\boldsymbol Z_2)$ all the paths from $\boldsymbol Z_1$ to S are blocked. Thus,
\begin{equation}
    r(\boldsymbol Z_1,\boldsymbol Z_2)=\frac{\int P(S=1|D=1,\boldsymbol W,\boldsymbol Z_2)dP(\boldsymbol W|\boldsymbol Z_1,\boldsymbol Z_2,D=1)}{\int P(S=1|D=0,\boldsymbol W,\boldsymbol Z_2)dP(\boldsymbol W|\boldsymbol Z_1,\boldsymbol Z_2,D=0)}\cdot\label{eq:apdag4}
\end{equation}
  Since $r(\boldsymbol Z_1,\boldsymbol Z_2)$ is a function of both $\boldsymbol Z_1$ and $\boldsymbol Z_2$, the naive method fails to produce unbiased estimates for both the disease parameters. Step 3 also shows that the dependence $r(\boldsymbol Z_1,\boldsymbol Z_2)$ on $\boldsymbol Z_1$ is only through the second term (distribution of $\boldsymbol W|\boldsymbol Z_1$) of the integral for both the numerator and denominator. On the other hand both terms in the integral depends on $\boldsymbol Z_2$. Therefore, the dependence of $r(\boldsymbol Z_1,\boldsymbol Z_2)$ on $\boldsymbol Z_2$ is extremely high in this scenario.

\subsection{One Step IPW regression }\label{sec:ipw}
\textbf{Assumptions}\\

\noindent
\textit{A}1. $\boldsymbol \theta^* \in \Theta$, where $\Theta$ is compact.\\
\noindent
\textit{A}2.  We did not consider finite population inference here unlike most survey sampling literature. For all the asymptotic results we let $N \rightarrow \infty$.
\noindent
\textit{A}3.  All the variables, including the selection indicator $S$, $\boldsymbol X=(D,\boldsymbol Z_1,\boldsymbol Z_2$) are considered to be random which are generally considered fixed in finite population literature.\\

\begin{theorem}
     Under assumptions \textit{A}1, \textit{A}2 and \textit{A}3 if the true selection weights are known, then estimator $\widehat{\boldsymbol \theta}=(\widehat{\theta}_0,\widehat{\boldsymbol \theta}_1, \widehat{\boldsymbol \theta}_2)$ from equation \eqref{eq:eq6} is consistent for $\boldsymbol\theta^*=(\theta_0^*,\boldsymbol \theta_1^*, \boldsymbol \theta_2^*)$, where $\boldsymbol\theta^*$ is the true value of $\boldsymbol \theta$ satisfying equation \eqref{eq:eq1}. \label{thm1}
\end{theorem}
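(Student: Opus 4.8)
The plan is to treat $\widehat{\boldsymbol\theta}$ as a Z-estimator (the root of an estimating equation) and invoke the standard consistency argument for such estimators: it suffices to show that the empirical estimating function converges uniformly to a deterministic limit, that the limit vanishes at $\boldsymbol\theta^*$, and that $\boldsymbol\theta^*$ is its unique zero. I would write the empirical estimating function as
$$
U_N(\boldsymbol\theta)=\frac{1}{N}\sum_{i=1}^N \frac{S_i}{\pi(\boldsymbol X_i)}\left\{D_i-\frac{e^{\boldsymbol\theta'\boldsymbol Z_i}}{1+e^{\boldsymbol\theta'\boldsymbol Z_i}}\right\}\boldsymbol Z_i,
$$
so that $\widehat{\boldsymbol\theta}$ solves $U_N(\boldsymbol\theta)=\mathbf 0$, and denote its population counterpart by $U(\boldsymbol\theta)=\mathbb E[U_N(\boldsymbol\theta)]$. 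Since by \textit{A}3 the observations are i.i.d.\ draws from the target population and $\Theta$ is compact by \textit{A}1, a uniform law of large numbers gives $\sup_{\boldsymbol\theta\in\Theta}\|U_N(\boldsymbol\theta)-U(\boldsymbol\theta)\|\xrightarrow{p}0$, provided the summand admits an integrable envelope; this is where a positivity/overlap condition $\pi(\boldsymbol X)\ge c>0$ is needed so that the weights $1/\pi(\boldsymbol X)$ have finite expectation.

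The crux is to verify that $U(\boldsymbol\theta^*)=\mathbf 0$, and here the inverse-probability weight does its work. Conditioning on the full covariate vector $(D,\boldsymbol Z_1,\boldsymbol Z_2,\boldsymbol W)$ and using that selection is driven only by $\boldsymbol X=(D,\boldsymbol Z_2,\boldsymbol W)$, so that $S\independent\boldsymbol Z_1\mid\boldsymbol X$, one has $\mathbb E[S\mid D,\boldsymbol Z_1,\boldsymbol Z_2,\boldsymbol W]=P(S=1\mid\boldsymbol X)=\pi(\boldsymbol X)$, and therefore $\mathbb E\!\left[S/\pi(\boldsymbol X)\mid D,\boldsymbol Z_1,\boldsymbol Z_2,\boldsymbol W\right]=1$. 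Substituting this into $U(\boldsymbol\theta^*)$ collapses the weighted expectation to the ordinary (unweighted) logistic score over the whole population,
$$
U(\boldsymbol\theta^*)=\mathbb E\!\left[\left\{D-\frac{e^{\boldsymbol\theta^{*\prime}\boldsymbol Z}}{1+e^{\boldsymbol\theta^{*\prime}\boldsymbol Z}}\right\}\boldsymbol Z\right].
$$
Finally, conditioning on $(\boldsymbol Z_1,\boldsymbol Z_2)$ and invoking the disease model \eqref{eq:eq1}, which states $\mathbb E[D\mid\boldsymbol Z_1,\boldsymbol Z_2]=e^{\boldsymbol\theta^{*\prime}\boldsymbol Z}/(1+e^{\boldsymbol\theta^{*\prime}\boldsymbol Z})$, makes the inner conditional expectation vanish, giving $U(\boldsymbol\theta^*)=\mathbf 0$.

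For identification I would show that $\boldsymbol\theta^*$ is the \emph{unique} zero of $U$. Because the weight has conditional mean one, $U(\boldsymbol\theta)$ coincides with the population score of ordinary logistic regression for every $\boldsymbol\theta$, and its Jacobian is
$$
\frac{\partial U}{\partial\boldsymbol\theta}=-\,\mathbb E\!\left[\frac{e^{\boldsymbol\theta'\boldsymbol Z}}{(1+e^{\boldsymbol\theta'\boldsymbol Z})^2}\,\boldsymbol Z\boldsymbol Z'\right],
$$
which is negative definite whenever $\mathbb E[\boldsymbol Z\boldsymbol Z']$ has full rank (no perfect collinearity among $1,\boldsymbol Z_1,\boldsymbol Z_2$). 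Hence $U$ is the gradient of a strictly concave population objective and has at most one stationary point, so $\boldsymbol\theta^*$ is its only root. Combining the uniform convergence, the continuity of $U$, and this unique-root property, a standard Z-estimator consistency theorem of the Newey--McFadden / van der Vaart type yields $\widehat{\boldsymbol\theta}\xrightarrow{p}\boldsymbol\theta^*$.

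I expect the main obstacle to be the regularity bookkeeping rather than the central identity. Specifically, securing the integrable envelope for the uniform law of large numbers when the weights $1/\pi(\boldsymbol X)$ are present forces an overlap assumption with $\pi(\boldsymbol X)$ bounded away from zero that is not explicitly listed among \textit{A}1--\textit{A}3, and identification requires the full-rank condition on $\mathbb E[\boldsymbol Z\boldsymbol Z']$. The IPW unbiasedness step, by contrast, is short once one recognizes that conditioning on the selection covariates $\boldsymbol X$ renders the weight mean-one and reduces the weighted score to the complete-data logistic score.
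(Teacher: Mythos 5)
Your proof is correct and its core is the same as the paper's: the paper likewise reduces consistency to showing $\mathbb{E}[\phi_N(\boldsymbol\theta^*)]=\mathbf{0}$ (citing \citet{tsiatis2006semiparametric} for the Z-estimation machinery), and establishes it by exactly your identity — iterate expectations over $(\boldsymbol X,\boldsymbol Z_1)$, use $S\independent \boldsymbol Z_1\mid \boldsymbol X$ so that $\mathbb{E}(S\mid \boldsymbol X,\boldsymbol Z_1)=\pi(\boldsymbol X)$ and the weight becomes conditionally mean one, then condition on $\boldsymbol Z$ and invoke the logistic model \eqref{eq:eq1} to annihilate the score. Where you differ is in completeness rather than route: the paper stops at the unbiasedness identity and delegates uniform convergence and identification entirely to the citation, whereas you spell out the remaining steps — the uniform law of large numbers with an integrable envelope, and uniqueness of the root via the negative-definite Jacobian of the (strictly concave) weighted population objective. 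In doing so you correctly surface two conditions that \textit{A}1--\textit{A}3 do not state but that the argument genuinely requires: positivity/overlap, $\pi(\boldsymbol X)\geq c>0$, so that the weighted summand is integrable, and full rank of $\mathbb{E}[\boldsymbol Z\boldsymbol Z']$ for identification. These are tacitly absorbed into the paper's appeal to Tsiatis, so flagging them is a fair strengthening of the exposition, not a defect in your proof; your version buys a self-contained argument at the cost of two extra (mild and standard) regularity assumptions made explicit.
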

\begin{proof}
    Let 
$$\phi_N(\boldsymbol\theta)=\frac{1}{N}\sum_{i=1}^{N}\frac{S_i}{\pi(\boldsymbol X_i)}\left\{D_i \boldsymbol Z_i-\frac{e^{\boldsymbol\theta'\boldsymbol Z_i}}{(1+e^{\boldsymbol\theta'\boldsymbol Z_i})}\cdot \boldsymbol Z_i\right\}\cdot$$
From \citet{tsiatis2006semiparametric}, under assumptions \textit{A}1, \textit{A}2 and \textit{A}3 it is enough to show that $\mathbb{E}(\phi_N(\boldsymbol\theta^*))=0$, in order to prove that $\widehat{\boldsymbol\theta}\xrightarrow{p}\boldsymbol\theta^*$, where $\widehat{\boldsymbol\theta}$ is obtained from solving $\phi_N(\boldsymbol\theta)=\mathbf{0}$.
\begin{align*}
     \mathbb{E}[\phi_N(\boldsymbol\theta^*)] & = \mathbb{E}\left[\frac{1}{N}\sum_{i=1}^{N}\frac{S_i}{\pi(\boldsymbol X_i)}\left\{D_i \boldsymbol Z_i-\frac{e^{\boldsymbol\theta^{*'}\boldsymbol Z_i}}{(1+e^{\boldsymbol\theta^{*'}\boldsymbol Z_i})}\cdot \boldsymbol Z_i\right\}\right]\\
     & =\frac{1}{N}
   \sum_{i=1}^{N} \mathbb{E}\left[\frac{S_i}{\pi(\boldsymbol X_i)}\left\{D_i \boldsymbol Z_i-\frac{e^{\boldsymbol\theta'\boldsymbol Z_i}}{(1+e^{\boldsymbol\theta^{*'}\boldsymbol Z_i})}\cdot \boldsymbol Z_i\right\}\right]\\
   & = \frac{1}{N}
   \sum_{i=1}^{N} \mathbb{E}_{\boldsymbol X_i,\boldsymbol Z_{1i}}\left[\mathbb{E}\left\{\frac{S_i}{\pi(\boldsymbol X_i)}\left.\left(D_i \boldsymbol Z_i-\frac{e^{\boldsymbol\theta^{*'}\boldsymbol Z_i}}{(1+e^{\boldsymbol\theta^{*'}\boldsymbol Z_i})}\cdot \boldsymbol Z_i\right)\right| \boldsymbol X_i,\boldsymbol Z_{1i}\right\}\right]\\
   & = \frac{1}{N}
   \sum_{i=1}^{N} \mathbb{E}_{\boldsymbol X_i,\boldsymbol Z_{1i}}\left[\left\{D_i \boldsymbol Z_i-\frac{e^{\boldsymbol\theta ^{*'}\boldsymbol Z_i}}{(1+e^{\boldsymbol\theta ^{*'}\boldsymbol Z_i})}\cdot \boldsymbol Z_i\right\}\cdot \frac{1}{\pi(\boldsymbol X_i)}\mathbb{E}\left(S_i|\boldsymbol X_i,\boldsymbol Z_{1i}\right)\right]\cdot
\end{align*}
Since $S\independent \boldsymbol Z_1|\boldsymbol X$ therefore $\mathbb{E}(S_i|\boldsymbol X_i,\boldsymbol Z_{1i})=
\mathbb{E}(S_i|\boldsymbol X_i)=\pi(\boldsymbol X_i)$. Using this result and equation \eqref{eq:eq1} we obtain
\begin{align*}
     &\mathbb{E}[\phi_N(\boldsymbol\theta^*)]  =\frac{1}{N}
   \sum_{i=1}^{N} \mathbb{E}_{\boldsymbol X_i,\boldsymbol Z_{1i}}\left\{D_i \boldsymbol Z_i-\frac{e^{\boldsymbol\theta^{*'}\boldsymbol Z_i}}{(1+e^{\boldsymbol\theta^{*'}\boldsymbol Z_i})}\cdot \boldsymbol Z_i\right\}\\
   &=\frac{1}{N}
   \sum_{i=1}^{N} \mathbb{E}_{D_i,\boldsymbol Z_i}\left\{D_i \boldsymbol Z_i-\frac{e^{\boldsymbol\theta^{*'}\boldsymbol Z_i}}{(1+e^{\boldsymbol\theta^{*'}\boldsymbol Z_i})}\cdot \boldsymbol Z_i\right\}\\
   & = \frac{1}{N}
   \sum_{i=1}^{N} \mathbb{E}_{\boldsymbol z_i}\left[\mathbb{E}\left \{\left. \left(D_i \boldsymbol Z_i-\frac{e^{\boldsymbol\theta^{*'}\boldsymbol Z_i}}{(1+e^{\boldsymbol\theta^{*'}\boldsymbol Z_i})}\cdot \boldsymbol Z_i\right) \right|\boldsymbol Z_i\right\}\right] \\
   &= \frac{1}{N}
   \sum_{i=1}^{N} \mathbb{E}\left\{\frac{e^{\boldsymbol\theta^{*'}\boldsymbol Z_i}}{(1+e^{\boldsymbol\theta^{*'}\boldsymbol Z_i})}\cdot \boldsymbol Z_i-\frac{e^{\boldsymbol\theta^{*'}\boldsymbol Z_i}}{(1+e^{\boldsymbol\theta^{*'}\boldsymbol Z_i})} \cdot \boldsymbol Z_i\right\}=\mathbf{0}\cdot
\end{align*}
The last step is obtained from the relation between $D$ and $(\boldsymbol Z_1,\boldsymbol Z_2)$ given by equation \eqref{eq:eq1}.
\end{proof}

\begin{theorem}
    Under assumptions of Theorem \ref{thm1} when the selection weights are known and we do not take into consideration estimation of selection model parameter, the asymptotic distribution of $\widehat{\boldsymbol \theta}$ is given by 
$$\sqrt{N}(\widehat{\boldsymbol \theta}-\boldsymbol\theta^*)\xrightarrow{d}\mathcal{N}(0,V)\hspace{0.5cm} \text{as} \hspace{0.5cm} N\rightarrow \infty\cdot$$
where
\begin{align*}
    & V=(G_{\boldsymbol \theta^*})^{-1}\cdot \mathbb{E}[\boldsymbol g\cdot \boldsymbol g']\cdot(G_{\boldsymbol \theta^*}^{-1})^{'}\hspace{0.8cm} G_{\boldsymbol \theta^*} = \mathbb{E}\left\{-\frac{S}{\pi(\boldsymbol X)}\cdot \frac{e^{\boldsymbol \theta^{*'}\boldsymbol Z}}{(1+e^{\boldsymbol \theta^{*'}\boldsymbol Z})^2}\cdot \boldsymbol Z\boldsymbol Z'\right\}\\
    & \boldsymbol g(\boldsymbol \theta^*) = \frac{S}{\pi(\boldsymbol X)}\left\{D \boldsymbol Z-\frac{e^{\boldsymbol \theta^{*'}\boldsymbol Z}}{(1+e^{\boldsymbol \theta^{*'}\boldsymbol Z})}\cdot \boldsymbol Z\right\}\cdot
\end{align*}\label{thm2}
\end{theorem}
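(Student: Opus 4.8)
The plan is to treat $\widehat{\boldsymbol\theta}$ as a Z-estimator (the solution of an estimating equation) and apply the standard Taylor-expansion argument for M-estimators. Writing the per-observation score as $\boldsymbol g_i(\boldsymbol\theta) = \frac{S_i}{\pi(\boldsymbol X_i)}\{D_i\boldsymbol Z_i - \frac{e^{\boldsymbol\theta'\boldsymbol Z_i}}{1+e^{\boldsymbol\theta'\boldsymbol Z_i}}\boldsymbol Z_i\}$ so that $\phi_N(\boldsymbol\theta) = \frac{1}{N}\sum_{i=1}^N \boldsymbol g_i(\boldsymbol\theta)$, I would first invoke Theorem \ref{thm1} for consistency $\widehat{\boldsymbol\theta}\xrightarrow{p}\boldsymbol\theta^*$ and for the moment identity $\mathbb{E}[\boldsymbol g(\boldsymbol\theta^*)]=\mathbf 0$ established there. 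Under assumption \textit{A}3 the summands $\{\boldsymbol g_i(\boldsymbol\theta^*)\}$ are i.i.d.\ with mean zero, so the first key step is the multivariate central limit theorem, giving $\sqrt{N}\,\phi_N(\boldsymbol\theta^*) = \frac{1}{\sqrt N}\sum_{i=1}^N \boldsymbol g_i(\boldsymbol\theta^*)\xrightarrow{d}\mathcal N(0,\mathbb{E}[\boldsymbol g\,\boldsymbol g'])$, which requires only that the weighted score have finite second moment.

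Next I would perform a first-order mean-value expansion of the estimating equation about $\boldsymbol\theta^*$. Since $\widehat{\boldsymbol\theta}$ solves $\phi_N(\widehat{\boldsymbol\theta})=\mathbf 0$, applying the mean value theorem to each coordinate of $\phi_N$ yields a Jacobian $\nabla_{\boldsymbol\theta}\phi_N(\bar{\boldsymbol\theta})$ (with rows evaluated at intermediate points $\bar{\boldsymbol\theta}$ lying on the segment joining $\widehat{\boldsymbol\theta}$ and $\boldsymbol\theta^*$) such that $\mathbf 0 = \phi_N(\boldsymbol\theta^*) + [\nabla_{\boldsymbol\theta}\phi_N(\bar{\boldsymbol\theta})](\widehat{\boldsymbol\theta}-\boldsymbol\theta^*)$, so that $\sqrt{N}(\widehat{\boldsymbol\theta}-\boldsymbol\theta^*) = -[\nabla_{\boldsymbol\theta}\phi_N(\bar{\boldsymbol\theta})]^{-1}\sqrt{N}\,\phi_N(\boldsymbol\theta^*)$. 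Differentiating the summand gives $\nabla_{\boldsymbol\theta}\boldsymbol g_i(\boldsymbol\theta) = -\frac{S_i}{\pi(\boldsymbol X_i)}\frac{e^{\boldsymbol\theta'\boldsymbol Z_i}}{(1+e^{\boldsymbol\theta'\boldsymbol Z_i})^2}\boldsymbol Z_i\boldsymbol Z_i'$, whose population expectation at $\boldsymbol\theta^*$ is precisely $G_{\boldsymbol\theta^*}$. I would then argue $\nabla_{\boldsymbol\theta}\phi_N(\bar{\boldsymbol\theta})\xrightarrow{p}G_{\boldsymbol\theta^*}$ using (i) $\bar{\boldsymbol\theta}\xrightarrow{p}\boldsymbol\theta^*$, which is forced because $\bar{\boldsymbol\theta}$ is squeezed between $\widehat{\boldsymbol\theta}$ and $\boldsymbol\theta^*$, and (ii) a uniform law of large numbers over the compact parameter set $\Theta$ (assumption \textit{A}1) together with continuity of the Jacobian in $\boldsymbol\theta$.

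Finally I would assemble the pieces with Slutsky's theorem: $-[\nabla_{\boldsymbol\theta}\phi_N(\bar{\boldsymbol\theta})]^{-1}\xrightarrow{p}-G_{\boldsymbol\theta^*}^{-1}$ and $\sqrt N\,\phi_N(\boldsymbol\theta^*)\xrightarrow{d}\mathcal N(0,\mathbb{E}[\boldsymbol g\,\boldsymbol g'])$ combine to give $\sqrt N(\widehat{\boldsymbol\theta}-\boldsymbol\theta^*)\xrightarrow{d}\mathcal N(0,\,G_{\boldsymbol\theta^*}^{-1}\mathbb{E}[\boldsymbol g\,\boldsymbol g'](G_{\boldsymbol\theta^*}^{-1})')$, which is the claimed sandwich variance $V$. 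Note that $G_{\boldsymbol\theta^*}$ is symmetric, being an expectation of $\boldsymbol Z\boldsymbol Z'$ terms, so the transpose in the second factor is cosmetic. Because the weights are taken as known in this theorem, no additional term enters from estimation of the selection parameters $\boldsymbol\alpha$; that correction is deferred to the PL and CL variance derivations.

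I expect the main obstacles to be the two regularity-condition steps rather than the algebra. First, both the CLT and the finiteness of $\mathbb{E}[\boldsymbol g\,\boldsymbol g']$ hinge on the inverse-probability weights $S/\pi(\boldsymbol X)$ having adequate moments; if $\pi(\boldsymbol X)$ is not bounded away from zero the weighted score can have infinite variance and the argument collapses, so a positivity/overlap condition such as $\pi(\boldsymbol X)\ge c>0$ almost surely (or an equivalent moment bound) is implicitly needed. Second, the convergence $\nabla_{\boldsymbol\theta}\phi_N(\bar{\boldsymbol\theta})\xrightarrow{p}G_{\boldsymbol\theta^*}$ requires both the uniform law of large numbers and invertibility of $G_{\boldsymbol\theta^*}$; the latter is a weighted full-rank (identifiability) condition on $\boldsymbol Z$ that I would state as an explicit additional assumption. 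The remaining work---differentiating the logistic score and the Slutsky assembly---is routine.
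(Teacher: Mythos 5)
Your proposal is correct and takes essentially the same route as the paper: the paper's proof simply invokes \citet{tsiatis2006semiparametric}'s general Z-estimation result and then computes $G_{\boldsymbol\theta^*}$ by differentiating the weighted score, whereas you unfold that standard argument explicitly (mean-value expansion of $\phi_N$ at $\boldsymbol\theta^*$, multivariate CLT for the i.i.d.\ scores, ULLN plus consistency for the Jacobian, Slutsky assembly), arriving at the identical derivative $-\frac{S}{\pi(\boldsymbol X)}\frac{e^{\boldsymbol\theta^{*'}\boldsymbol Z}}{(1+e^{\boldsymbol\theta^{*'}\boldsymbol Z})^2}\boldsymbol Z\boldsymbol Z'$ and sandwich variance. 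Your flagged regularity conditions---$\pi(\boldsymbol X)$ bounded away from zero, finite second moment of $\boldsymbol g$, and invertibility of $G_{\boldsymbol\theta^*}$---are exactly what the paper leaves implicit in the citation (and partially supplies later as assumption \textit{A}4), so they sharpen rather than diverge from the paper's argument.
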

\begin{proof}
By \citet{tsiatis2006semiparametric}'s arguments for a Z-estimation problem under assumptions of Theorem \ref{thm1} we obtain  
$$\sqrt{N}(\widehat{\boldsymbol \theta}-\boldsymbol\theta^*)\xrightarrow{d}\mathcal{N}(0,V)\hspace{0.5cm} \text{as} \hspace{0.5cm} N\rightarrow \infty\cdot$$
where
\begin{align*}
    & V=(G_{\boldsymbol \theta^*})^{-1}\cdot \mathbb{E}[\boldsymbol g\cdot \boldsymbol g']\cdot(G_{\boldsymbol \theta^*}^{-1})^{'}\hspace{0.8cm} G_{\boldsymbol \theta^*} = \left.\mathbb{E}\left\{\frac{\partial g(\boldsymbol \theta^*)}{\partial \boldsymbol \theta}\right\}\right|_{\boldsymbol\theta=\boldsymbol \theta^*}\\
    & \boldsymbol g(\boldsymbol \theta^*) = \frac{S}{\pi(\boldsymbol X)}\left\{D \boldsymbol Z-\frac{e^{\boldsymbol \theta^{*'}\boldsymbol Z}}{(1+e^{\boldsymbol \theta^{*'}\boldsymbol Z})}\cdot \boldsymbol Z\right\}\cdot
\end{align*}
    This proof just requires the calculation of $G_{\boldsymbol \theta^*}$.
\subsubsection*{Calculation for $G_{\boldsymbol \theta^*}$}
\begin{align*}
    & \left.\frac{\partial \boldsymbol g(\boldsymbol \theta)}{\partial \boldsymbol \theta}\right|_{\boldsymbol \theta= \boldsymbol \theta^*}=\left.\frac{\partial}{\partial \boldsymbol \theta}\left[\frac{S}{\pi(\boldsymbol X)}\left\{D\boldsymbol Z-\frac{e^{\boldsymbol \theta^{*'}\boldsymbol Z}}{(1+e^{\boldsymbol \theta^{*'}\boldsymbol Z})}\cdot \boldsymbol Z\right\}\right]\right|_{\boldsymbol \theta= \boldsymbol \theta^*}\\
    & = -\frac{S}{\pi(\boldsymbol X)} \cdot \frac{e^{\boldsymbol \theta^{*'}\boldsymbol Z}}{(1+e^{\boldsymbol \theta^{*'}\boldsymbol Z})^2}\cdot \boldsymbol Z\boldsymbol Z'
\end{align*} 
Therefore we obtain 
\begin{align*}
     G_{\boldsymbol \theta^*} = \mathbb{E}\left\{-\frac{S}{\pi(\boldsymbol X)} \cdot \frac{e^{\boldsymbol \theta^{*'} \boldsymbol Z}}{(1+e^{\boldsymbol \theta^{*'}\boldsymbol Z})^2}\cdot \boldsymbol Z \boldsymbol Z'\right\}\cdot
\end{align*}
\end{proof}
\noindent
Next we derive a consistent estimator of asymptotic variance of $\widehat{\boldsymbol\theta}$ when the selection probabilities are known. We use this variance estimator for SR and PS. Apart from assumptions \textit{A}1, \textit{A}2 and \textit{A}3, we make an additional assumption to derive a consistent estimator of asymptotic variance of $\widehat{\boldsymbol\theta}$.\\

\noindent
\textit{A}4. $\mathbb{E}[\sup_{\boldsymbol \theta \in \Theta}(G_{\boldsymbol \theta})]<\infty$ and $\mathbb{E}[\sup_{\boldsymbol \theta \in \Theta}\{\boldsymbol g(\boldsymbol \theta) \boldsymbol g(\boldsymbol \theta)'\}]<\infty$.\\

\noindent
\begin{theorem}
     Under assumptions \textit{A}1, \textit{A}2, \textit{A}3 and \textit{A}4, $\frac{1}{N} \cdot \widehat{G_{\boldsymbol \theta}}^{-1}\cdot \widehat{E}\cdot (\widehat{G_{\boldsymbol \theta}}^{-1})^{'}$ is a consistent estimator of the asymptotic variance of $\widehat{\boldsymbol \theta}$ where
\begin{align*}
    & \widehat{G_{\boldsymbol \theta}}=\frac{1}{N}\sum_{i=1}^N \left\{\frac{S_i}{\pi(\boldsymbol X_i)}\cdot \frac{e^{\widehat{\boldsymbol \theta}'\boldsymbol Z_i} }{(1+e^{\widehat{\boldsymbol \theta}'\boldsymbol Z_i})^2}\cdot \boldsymbol Z_i \boldsymbol Z_i'\right\}\cdot \\
    & \widehat{E}=\frac{1}{N}\sum_{i =1}^N S_i \cdot \left\{\frac{1}{\pi(\boldsymbol X_i)}\right\}^2\left\{D_i-\frac{e^{\widehat{\boldsymbol \theta}'\boldsymbol Z_i}}{(1+e^{\widehat{\boldsymbol\theta}' \boldsymbol Z_i})}\right\}^2\cdot \boldsymbol Z_i \boldsymbol Z_i' \cdot
\end{align*}\label{thm3}
\end{theorem}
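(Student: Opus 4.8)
The plan is to recognize the proposed matrix as the empirical (plug-in) sandwich version of the asymptotic variance $V/N$ obtained in Theorem \ref{thm2}, and to establish consistency by pairing a uniform law of large numbers with the consistency of $\widehat{\boldsymbol\theta}$ from Theorem \ref{thm1} and the continuous mapping theorem. First I would identify the two ingredients with their population targets. Since $S_i$ is binary, $S_i^2=S_i$, so the influence-function outer product satisfies
\[
\boldsymbol g(\boldsymbol\theta)\boldsymbol g(\boldsymbol\theta)' = \frac{S}{\pi(\boldsymbol X)^2}\left\{D-\frac{e^{\boldsymbol\theta'\boldsymbol Z}}{1+e^{\boldsymbol\theta'\boldsymbol Z}}\right\}^2 \boldsymbol Z\boldsymbol Z',
\]
which is exactly the summand of $\widehat{E}$; hence $\widehat{E}$ is the sample average of $\boldsymbol g(\boldsymbol\theta)\boldsymbol g(\boldsymbol\theta)'$ evaluated at $\widehat{\boldsymbol\theta}$, targeting $\mathbb{E}[\boldsymbol g\boldsymbol g']$. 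Similarly, $\widehat{G_{\boldsymbol\theta}}$ is the sample average of $-\partial\boldsymbol g(\boldsymbol\theta)/\partial\boldsymbol\theta$ (the computation in the proof of Theorem \ref{thm2}), so it targets $-G_{\boldsymbol\theta^*}$ because the paper's $\widehat{G_{\boldsymbol\theta}}$ omits the leading minus sign. This sign mismatch is harmless: since $V=(G_{\boldsymbol\theta^*})^{-1}\mathbb{E}[\boldsymbol g\boldsymbol g'](G_{\boldsymbol\theta^*}^{-1})'$ is invariant under a global sign flip of $G_{\boldsymbol\theta^*}$, the limit of $\widehat{G_{\boldsymbol\theta}}^{-1}\widehat{E}(\widehat{G_{\boldsymbol\theta}}^{-1})'$ is still exactly $V$.

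Next I would invoke a uniform law of large numbers over the compact parameter set $\Theta$ (assumption $A1$). For each fixed $\boldsymbol\theta$ the summands of $\widehat{G}(\boldsymbol\theta)$ and $\widehat{E}(\boldsymbol\theta)$ are i.i.d.\ with finite mean and are continuous in $\boldsymbol\theta$; assumption $A4$ supplies the integrable envelopes $\mathbb{E}[\sup_{\boldsymbol\theta\in\Theta}(G_{\boldsymbol\theta})]<\infty$ and $\mathbb{E}[\sup_{\boldsymbol\theta\in\Theta}\{\boldsymbol g(\boldsymbol\theta)\boldsymbol g(\boldsymbol\theta)'\}]<\infty$ that upgrade pointwise convergence to uniform convergence. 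This yields $\sup_{\boldsymbol\theta\in\Theta}\| \widehat{G}(\boldsymbol\theta)-(-G_{\boldsymbol\theta}) \|\xrightarrow{p}0$ and $\sup_{\boldsymbol\theta\in\Theta}\| \widehat{E}(\boldsymbol\theta)-\mathbb{E}[\boldsymbol g(\boldsymbol\theta)\boldsymbol g(\boldsymbol\theta)'] \|\xrightarrow{p}0$, the population maps being continuous in $\boldsymbol\theta$.

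Then, because $\widehat{\boldsymbol\theta}\xrightarrow{p}\boldsymbol\theta^*$ by Theorem \ref{thm1}, a triangle-inequality argument (uniform sampling error plus continuity of the population limit at $\boldsymbol\theta^*$) gives $\widehat{G}(\widehat{\boldsymbol\theta})\xrightarrow{p}-G_{\boldsymbol\theta^*}$ and $\widehat{E}(\widehat{\boldsymbol\theta})\xrightarrow{p}\mathbb{E}[\boldsymbol g\boldsymbol g']$. Assuming $G_{\boldsymbol\theta^*}$ is nonsingular, matrix inversion is continuous there, so the continuous mapping theorem together with Slutsky's lemma delivers $\widehat{G_{\boldsymbol\theta}}^{-1}\widehat{E}(\widehat{G_{\boldsymbol\theta}}^{-1})'\xrightarrow{p}V$. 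Multiplying by $1/N$ then produces a consistent estimator of $\mathrm{Avar}(\widehat{\boldsymbol\theta})=V/N$, which is the claim.

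I expect the main obstacle to be the plug-in step: the same data that determine $\widehat{\boldsymbol\theta}$ also enter the averages $\widehat{G}(\widehat{\boldsymbol\theta})$ and $\widehat{E}(\widehat{\boldsymbol\theta})$, so a merely pointwise law of large numbers is insufficient. The uniform-in-$\boldsymbol\theta$ control, legitimized precisely by the envelope conditions of assumption $A4$, is what lets me interchange the estimation of $\boldsymbol\theta$ with the empirical averaging and thereby justify the substitution $\boldsymbol\theta^*\mapsto\widehat{\boldsymbol\theta}$ without altering the probability limit.
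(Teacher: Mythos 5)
Your proposal is correct and follows essentially the same route as the paper's proof: a uniform law of large numbers over the compact set $\Theta$ using the envelope conditions of assumption \textit{A}4, a triangle-inequality argument combining the uniform sampling error with the consistency of $\widehat{\boldsymbol\theta}$ from Theorem \ref{thm1} and the continuous mapping theorem, and then matrix inversion with Slutsky to pass to the sandwich limit $V$. Your explicit observations --- that $S_i^2=S_i$ makes $\widehat{E}$ the empirical outer product of the influence function evaluated at $\widehat{\boldsymbol\theta}$, that the omitted minus sign in $\widehat{G_{\boldsymbol\theta}}$ relative to $G_{\boldsymbol\theta^*}$ is harmless because the sandwich form is invariant under a global sign flip, and that nonsingularity of $G_{\boldsymbol\theta^*}$ is needed for the inversion step --- are all correct and merely make explicit points the paper leaves implicit.
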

\begin{proof}
    First we prove that as $N \rightarrow \infty$
$$\frac{1}{N}\sum_{i=1}^N \left\{\frac{S_i}{\pi(\boldsymbol X_i)}\cdot \frac{e^{\widehat{\boldsymbol \theta}'\boldsymbol Z_i} }{(1+e^{\widehat{\boldsymbol \theta}'\boldsymbol Z_i})^2}\cdot \boldsymbol Z_i \boldsymbol Z_i'\right\}
\xrightarrow{p}\mathbb{E}\left\{\frac{S}{\pi(\boldsymbol X)}\cdot \frac{e^{\boldsymbol \theta^{*'}\boldsymbol Z}}{(1+e^{\boldsymbol \theta^{*'}\boldsymbol Z})^2}\cdot \boldsymbol Z\boldsymbol Z'\right\} \cdot$$
Using assumptions \textit{A}1, \textit{A}2, \textit{A}3 and \textit{A}4 and Uniform Law of Large Numbers (ULLN), we obtain 
\begin{align}
    \sup_{\boldsymbol \theta \in \Theta}\left|\left|
    \frac{1}{N}\sum_{i=1}^N \left\{\frac{S_i}{\pi(\boldsymbol X_i)}\cdot \frac{e^{\boldsymbol \theta'\boldsymbol Z_i} }{(1+e^{\boldsymbol \theta'\boldsymbol Z_i})^2}\cdot \boldsymbol Z_i \boldsymbol Z_i'\right\}-\mathbb{E}\left\{\frac{S}{\pi(\boldsymbol X)}\cdot \frac{e^{\boldsymbol \theta^{'}\boldsymbol Z}}{(1+e^{\boldsymbol \theta^{'}\boldsymbol Z})^2}\cdot \boldsymbol Z\boldsymbol Z'\right\}
    \right|\right|\xrightarrow{p}\mathbf{0}\cdot \label{eq:st11}
\end{align}
Since this above expression holds for any $\boldsymbol \theta \in \Theta$, therefore it is true for $\widehat{\boldsymbol \theta}$. This implies
\begin{align}
    \left|\left|
    \frac{1}{N}\sum_{i=1}^N \left. \left\{\frac{S_i}{\pi(\boldsymbol X_i)}\cdot \frac{e^{\widehat{\boldsymbol \theta}'\boldsymbol Z_i} }{(1+e^{\widehat{\boldsymbol \theta}'\boldsymbol Z_i})^2}\cdot \boldsymbol Z_i \boldsymbol Z_i'\right\}-\mathbb{E}\left\{\frac{S}{\pi(\boldsymbol X)}\cdot \frac{e^{\boldsymbol \theta^{'}\boldsymbol Z}}{(1+e^{\boldsymbol \theta^{'}\boldsymbol Z})^2}\cdot \boldsymbol Z\boldsymbol Z'\right\}\right|_{\boldsymbol \theta=\widehat{\boldsymbol \theta}}
    \right|\right|\xrightarrow{p}\mathbf{0}\cdot \label{eq:st12}
\end{align}
Using Triangle Inequality we obtain
\begin{align}
    &  \left|\left|\frac{1}{N}\sum_{i=1}^N \left\{\frac{S_i}{\pi(\boldsymbol X_i)}\cdot \frac{e^{\widehat{\boldsymbol \theta}'\boldsymbol Z_i} }{(1+e^{\boldsymbol \theta'\boldsymbol Z_i})^2}\cdot \boldsymbol Z_i \boldsymbol Z_i'\right\}-\mathbb{E}\left\{\frac{S}{\pi(\boldsymbol X)}\cdot \frac{e^{\boldsymbol \theta^{*'}\boldsymbol Z}}{(1+e^{\boldsymbol \theta^{*'}\boldsymbol Z})^2}\cdot \boldsymbol Z\boldsymbol Z'\right\}
    \right|\right|\leq \label{eq:st132}\\
    &  \left|\left|
    \frac{1}{N}\sum_{i=1}^N \left. \left\{\frac{S_i}{\pi(\boldsymbol X_i)}\cdot \frac{e^{\widehat{\boldsymbol \theta}'\boldsymbol Z_i} }{(1+e^{\widehat{\boldsymbol \theta}'\boldsymbol Z_i})^2}\cdot \boldsymbol Z_i \boldsymbol Z_i'\right\}-\mathbb{E}\left\{\frac{S}{\pi(\boldsymbol X)}\cdot \frac{e^{\boldsymbol \theta^{'}\boldsymbol Z}}{(1+e^{\boldsymbol \theta^{'}\boldsymbol Z})^2}\cdot \boldsymbol Z\boldsymbol Z'\right\}\right|_{\boldsymbol \theta=\widehat{\boldsymbol \theta}}
    \right|\right|+\label{eq:st131}\\
    & \left. \left|\left|
   \mathbb{E}\left\{\frac{S}{\pi(\boldsymbol X)}\cdot \frac{e^{\boldsymbol \theta^{'}\boldsymbol Z}}{(1+e^{\boldsymbol \theta^{'}\boldsymbol Z})^2}\cdot \boldsymbol Z\boldsymbol Z'\right\}\right|_{\boldsymbol \theta=\widehat{\boldsymbol \theta}}- \mathbb{E}\left\{\frac{S}{\pi(\boldsymbol X)}\cdot \frac{e^{\boldsymbol \theta^{*'}\boldsymbol Z}}{(1+e^{\boldsymbol \theta^{*'}\boldsymbol Z})^2}\cdot \boldsymbol Z\boldsymbol Z'\right\}
    \right|\right|\cdot \label{eq:st13}
\end{align}
Since we proved that $\widehat{\boldsymbol \theta}\xrightarrow{p} \boldsymbol \theta^{*}$ in Theorem 1, therefore by Continuous Mapping Theorem
\begin{align}
    \left. \left|\left|
   \mathbb{E}\left\{\frac{S}{\pi(\boldsymbol X)}\cdot \frac{e^{\boldsymbol \theta^{'}\boldsymbol Z}}{(1+e^{\boldsymbol \theta^{'}\boldsymbol Z})^2}\cdot \boldsymbol Z\boldsymbol Z'\right\}\right|_{\boldsymbol \theta=\widehat{\boldsymbol \theta}}- \mathbb{E}\left\{\frac{S}{\pi(\boldsymbol X)}\cdot \frac{e^{\boldsymbol \theta^{*'}\boldsymbol Z}}{(1+e^{\boldsymbol \theta^{*'}\boldsymbol Z})^2}\cdot \boldsymbol Z\boldsymbol Z'\right\}
    \right|\right|\xrightarrow{p} \mathbf{0}\cdot \label{eq:st14}
\end{align}
Using equations \eqref{eq:st12}, \eqref{eq:st132}, \eqref{eq:st131}, \eqref{eq:st13} and \eqref{eq:st14} we obtain 
$$\frac{1}{N}\sum_{i=1}^N \left\{\frac{S_i}{\pi(\boldsymbol X_i)}\cdot \frac{e^{\widehat{\boldsymbol \theta}'\boldsymbol Z_i} }{(1+e^{\widehat{\boldsymbol \theta}'\boldsymbol Z_i})^2}\cdot \boldsymbol Z_i \boldsymbol Z_i'\right\}
\xrightarrow{p}\mathbb{E}\left\{\frac{S}{\pi(\boldsymbol X)}\cdot \frac{e^{\boldsymbol \theta^{*'}\boldsymbol Z}}{(1+e^{\boldsymbol \theta^{*'}\boldsymbol Z})^2}\cdot \boldsymbol Z\boldsymbol Z'\right\}\cdot$$
Therefore we obtain  
$$-\widehat{G_{\boldsymbol \theta}}\xrightarrow{p} -G_{\boldsymbol \theta^*}\hspace{0.5cm}
    \text{which implies} \hspace{0.5cm} \widehat{G_{\boldsymbol \theta}}\xrightarrow{p} G_{\boldsymbol \theta^*}\cdot$$
Using the exact same set of arguments we obtain 
\begin{align*}
    & \widehat{E} =\frac{1}{N}\sum_{i =1}^N S_i \cdot \left\{\frac{1}{\pi(\boldsymbol X_i)}\right\}^2\left\{D_i-\frac{e^{\widehat{\boldsymbol \theta}'\cdot \boldsymbol Z_i}}{(1+e^{\hat{\theta}' \cdot \boldsymbol Z_i})}\right\}^2\cdot \boldsymbol Z_i \boldsymbol Z_i'\\
     & \xrightarrow{p}\mathbb{E}[\boldsymbol g(\boldsymbol \theta^*)\cdot \boldsymbol g(\boldsymbol \theta^*)']=  \mathbb{E}\left[S \cdot \left\{\frac{1}{\pi(\boldsymbol X)}\right\}^2\cdot\left\{D-\frac{e^{\boldsymbol \theta^{*'}\boldsymbol Z}}{(1+e^{\boldsymbol \theta^{*'}\boldsymbol Z})}\right\}^2\cdot \boldsymbol Z\boldsymbol Z'\right]\cdot
\end{align*}
Combining together the consistency of $\widehat{G_{\boldsymbol \theta}}$ and $\widehat{E}$ we obtain
$$\widehat{V}=\widehat{G_{\boldsymbol \theta}}^{-1}\cdot \widehat{E}\cdot(\widehat{G_{\boldsymbol \theta}}^{-1})^{'}\xrightarrow{p}V=(G_{\boldsymbol \theta^*})^{-1}\cdot \mathbb{E}[g(\boldsymbol \theta^*)\cdot \boldsymbol g(\boldsymbol \theta^*)'].(G_{\boldsymbol \theta^*}^{-1})^{'}\cdot$$
\begin{align*}
    \text{From Theorem \ref{thm2}}\hspace{0.2cm}&\sqrt{N}(\widehat{\boldsymbol \theta}-\boldsymbol \theta^*)\xrightarrow{d}\mathcal{N}\{0,(G_{\boldsymbol \theta^*})^{-1}\cdot\mathbb{E}[g(\boldsymbol \theta^*)\cdot \boldsymbol g(\boldsymbol \theta^*)']\cdot (G_{\boldsymbol \theta^*}^{-1})^{'}\}
\end{align*}
Therefore we obtain
\begin{align*}
    & \text{Var}\{ \sqrt{N}(\widehat{\boldsymbol \theta}-\boldsymbol \theta^*)\}=(G_{\boldsymbol \theta^*})^{-1}\cdot\mathbb{E}[g(\boldsymbol \theta^*)\cdot \boldsymbol g(\boldsymbol \theta^*)']\cdot(G_{\boldsymbol \theta^*}^{-1})^{'} + O_p(1)\\
    & N \cdot\text{Var}(\widehat{\boldsymbol \theta})=(G_{\boldsymbol \theta^*})^{-1}\cdot\mathbb{E}[g(\boldsymbol \theta^*)\cdot \boldsymbol g(\boldsymbol \theta^*)']\cdot(G_{\boldsymbol \theta^*}^{-1})^{'} + O_p(1)\\
     &\text{Var} (\widehat{\boldsymbol \theta})=\frac{1}{N}\cdot(G_{\boldsymbol \theta^*})^{-1}\cdot\mathbb{E}[g(\boldsymbol \theta^*)\cdot \boldsymbol g(\boldsymbol \theta^*)']\cdot(G_{\boldsymbol \theta^*}^{-1})^{'} + O_p\left(\frac{1}{N}\right)\\
     & \text{Var}(\widehat{\boldsymbol \theta})= \frac{1}{N}\cdot\widehat{G_{\boldsymbol \theta}}^{-1}\cdot\widehat{E}\cdot(\widehat{G_{\boldsymbol \theta}}^{-1})^{'}+ o_p\left(1\right)=\frac{\widehat{V}}{N}+ o_p\left(1\right)\cdot
\end{align*} 
Therefore $\frac{\widehat{V}}{N}$ is a consistent estimator of asymptotic variance of $\widehat{\boldsymbol \theta}$.
\end{proof}

\subsection{Asymptotic Distribution and Variance Estimation of PL}\label{sec:proofPL}
For the following theorem apart from assumptions \textit{A}1 and \textit{A}2 we make the following assumptions\\

\noindent
\textit{A}5.  The selection model parameter $\boldsymbol \alpha \in \Theta_{\alpha}$ where $\Theta_{\alpha}$ is compact.\\
\noindent
\textit{A}6.  All the variables, including the selection indicators $S,S_{\text{ext}}$, $\boldsymbol X=(D,\boldsymbol Z_2,\boldsymbol W$), $\boldsymbol Z_1$ are considered to be random which are generally considered fixed in finite population literature.\\
\noindent
\textit{A}7.  The known external selection probability is dependent on some variables $\boldsymbol M$ and external selection indicators $S_{\text{ext}}$ are independent Bernoulli random variables where $P(S_{\text{ext}}=1|\boldsymbol M)=\pi_{\text{ext}}(\boldsymbol M)$. $\boldsymbol M$  can be set of any variables even overlap with $\boldsymbol X,\boldsymbol Z_1$.\\

\begin{theorem}
     Under assumptions \textit{A}1, \textit{A}2 \textit{A}5, \textit{A}6 and \textit{A}7  and assuming the selection model is correctly specified, that is, $\pi(\boldsymbol X,\boldsymbol \alpha^*)=P(S=1|\boldsymbol X)=\pi(\boldsymbol X)$ where $\boldsymbol \alpha^*$ is the true value of $\boldsymbol\alpha$, then $\widehat{\boldsymbol \theta}$ estimated using PL is consistent for $\boldsymbol \theta^*$ as $N\rightarrow \infty$. \label{thm4}
\end{theorem}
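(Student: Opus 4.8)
The plan is to treat this as a two-stage (stacked) Z-estimation problem: first establish consistency of the selection-parameter estimator $\widehat{\boldsymbol\alpha}$, then propagate it through the weighted score for $\boldsymbol\theta$. Write the first-stage estimating function as
$$U_N(\boldsymbol\alpha)=\frac{1}{N}\sum_{i=1}^N\left\{S_i\boldsymbol X_i-\frac{S_{\text{ext},i}}{\pi_{\text{ext},i}}\,\pi(\boldsymbol X_i,\boldsymbol\alpha)\,\boldsymbol X_i\right\},$$
which is the left-hand side of equation \eqref{eq:eq7}, and the second-stage function as $\phi_N(\boldsymbol\theta,\boldsymbol\alpha)$, given by the left-hand side of equation \eqref{eq:eq6} with $\pi(\boldsymbol X_i)$ replaced by $\pi(\boldsymbol X_i,\boldsymbol\alpha)$. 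Since $U_N$ does not depend on $\boldsymbol\theta$, the limiting system is triangular: $\boldsymbol\alpha^*$ is identified by $U$ alone, and $\boldsymbol\theta^*$ is then identified by $\phi(\cdot,\boldsymbol\alpha^*)$.

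First I would show the first-stage equation is unbiased at $\boldsymbol\alpha^*$, i.e. $\mathbb{E}[U_N(\boldsymbol\alpha^*)]=\mathbf{0}$. For the internal term, iterating expectations and using $\mathbb{E}(S\mid\boldsymbol X)=\pi(\boldsymbol X)=\pi(\boldsymbol X,\boldsymbol\alpha^*)$ under correct specification gives $\mathbb{E}[S\boldsymbol X]=\mathbb{E}[\pi(\boldsymbol X,\boldsymbol\alpha^*)\boldsymbol X]$. For the external term, assumption \textit{A}7 supplies Horvitz--Thompson unbiasedness: conditioning on $(\boldsymbol M,\boldsymbol X)$ and using that $S_{\text{ext}}$ depends on the covariates only through $\boldsymbol M$ with $\mathbb{E}(S_{\text{ext}}\mid\boldsymbol M)=\pi_{\text{ext}}(\boldsymbol M)$ yields $\mathbb{E}\big[\tfrac{S_{\text{ext}}}{\pi_{\text{ext}}}\pi(\boldsymbol X,\boldsymbol\alpha^*)\boldsymbol X\big]=\mathbb{E}[\pi(\boldsymbol X,\boldsymbol\alpha^*)\boldsymbol X]$. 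The two terms cancel. Together with compactness of $\Theta_{\alpha}$ (\textit{A}5), identifiability of the logistic selection model, and a uniform law of large numbers, the Z-estimation argument of \citet{tsiatis2006semiparametric}, used exactly as in Theorem \ref{thm1}, gives $\widehat{\boldsymbol\alpha}\xrightarrow{p}\boldsymbol\alpha^*$.

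Next I would transfer this to $\widehat{\boldsymbol\theta}$. At $\boldsymbol\alpha=\boldsymbol\alpha^*$ the plugged-in weights equal the true inverse selection probabilities, so Theorem \ref{thm1} already gives $\mathbb{E}[\phi_N(\boldsymbol\theta^*,\boldsymbol\alpha^*)]=\mathbf{0}$ and identifies $\boldsymbol\theta^*$ as the unique root of the limiting map $\boldsymbol\theta\mapsto\phi(\boldsymbol\theta,\boldsymbol\alpha^*)$. Because $\widehat{\boldsymbol\theta}$ solves $\phi_N(\boldsymbol\theta,\widehat{\boldsymbol\alpha})=\mathbf{0}$ with an estimated $\widehat{\boldsymbol\alpha}$ rather than $\boldsymbol\alpha^*$, I would establish joint uniform convergence $\sup_{\boldsymbol\theta\in\Theta,\,\boldsymbol\alpha\in\Theta_\alpha}\|\phi_N(\boldsymbol\theta,\boldsymbol\alpha)-\phi(\boldsymbol\theta,\boldsymbol\alpha)\|\xrightarrow{p}0$ by the ULLN (the integrand being smooth in $(\boldsymbol\theta,\boldsymbol\alpha)$ and dominated on the compact parameter set) together with continuity of $\phi$ in $\boldsymbol\alpha$. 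Combining $\widehat{\boldsymbol\alpha}\xrightarrow{p}\boldsymbol\alpha^*$ with this uniform convergence, via the same triangle-inequality/continuous-mapping device used in Theorem \ref{thm3}, shows $\phi_N(\widehat{\boldsymbol\theta},\widehat{\boldsymbol\alpha})$ and $\phi(\widehat{\boldsymbol\theta},\boldsymbol\alpha^*)$ are asymptotically equivalent, forcing $\widehat{\boldsymbol\theta}\xrightarrow{p}\boldsymbol\theta^*$.

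The main obstacle is not the algebra of the unbiasedness step but controlling the effect of the estimated nuisance $\widehat{\boldsymbol\alpha}$ inside the weights: the second-stage score depends on $\boldsymbol\alpha$ through $1/\pi(\boldsymbol X,\boldsymbol\alpha)$, so I must ensure the inverse-probability weights stay bounded (a positivity/overlap condition, namely $\pi(\boldsymbol X,\boldsymbol\alpha)$ bounded away from $0$ uniformly on $\Theta_\alpha$) to justify the dominating function needed for the ULLN and the continuity in $\boldsymbol\alpha$. Cleanly, the whole argument can be packaged as a single stacked estimating equation in $(\boldsymbol\alpha,\boldsymbol\theta)$ whose expectation vanishes at $(\boldsymbol\alpha^*,\boldsymbol\theta^*)$, reducing consistency to one application of the Z-estimation theorem.
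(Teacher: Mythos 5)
Your proposal is correct and follows essentially the same route as the paper's proof: a two-step Z-estimation argument that establishes $\mathbb{E}[\delta_N(\boldsymbol\alpha^*)]=\mathbf{0}$ via the same iterated-expectation/Horvitz--Thompson cancellation under assumption \textit{A}7, then reduces $\mathbb{E}[\phi_N(\boldsymbol\theta^*,\boldsymbol\alpha^*)]=\mathbf{0}$ to the argument of Theorem \ref{thm1} and invokes \citet{tsiatis2006semiparametric}. Your added explicitness about joint uniform convergence and a positivity/overlap condition on $\pi(\boldsymbol X,\boldsymbol\alpha)$ merely spells out regularity details the paper delegates to that citation and its compactness assumptions, so it does not constitute a different approach.
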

\begin{proof}
    In this case, the estimating equation consists of both selection model and disease model parameter estimation. The two step estimating equation is given by
\begin{align*}
     & \delta_n(\boldsymbol \alpha)=\frac{1}{N}\sum_{i=1}^N S_i\boldsymbol X_i- \frac{1}{N}\cdot \sum_{i=1}^ N \left(\frac{S_{\text{ext},i}}{\pi_{\text{ext},i}}\right)\cdot \pi(\boldsymbol X_i,\boldsymbol\alpha)\cdot \boldsymbol X_i\\
     & \phi_n(\boldsymbol \theta,\widehat{\boldsymbol \alpha})= \frac{1}{N}\sum_{i=1}^{N}\frac{S_i}{\pi(\boldsymbol X_i,\widehat{\boldsymbol \alpha})}\left\{D_i \boldsymbol Z_i-\frac{e^{\boldsymbol\theta'\boldsymbol Z_i}}{(1+e^{\boldsymbol\theta'\boldsymbol Z_i})}\cdot \boldsymbol Z_i\right\}\cdot
\end{align*}
From \citet{tsiatis2006semiparametric} to show  $\widehat{\boldsymbol\theta}\xrightarrow{p}\boldsymbol\theta^*$ in a two step estimation procedure, we need to prove both $\mathbb{E}(\delta_N(\boldsymbol \alpha^*))=\mathbf{0}$ and $\mathbb{E}(\phi_N(\boldsymbol \theta^*,\boldsymbol \alpha^*))=\mathbf{0}$. First we show that $\mathbb{E}(\delta_N(\boldsymbol \alpha^*))=\mathbf{0}$.
\begin{align*}
    \mathbb{E}\left(\frac{1}{N}\sum_{i=1}^N S_i \boldsymbol X_i\right)&= \frac{1}{N}\sum_{i=1}^N \mathbb{E}(S_i\boldsymbol X_i)=\frac{1}{N}\sum_{i=1}^N \mathbb{E}_{\boldsymbol X_i,\boldsymbol M_i}\{\mathbb{E}(S_i\boldsymbol X_i|\boldsymbol X_i,\boldsymbol M_i)\}
\end{align*}
Since $S_i$ is independent of $\boldsymbol M_i$ given $\boldsymbol X_i$  we obtain
\begin{align}
    &\frac{1}{N}\sum_{i=1}^N \mathbb{E}_{\boldsymbol X_i,\boldsymbol M_i}\{\mathbb{E}(S_i\boldsymbol X_i|\boldsymbol X_i,\boldsymbol M_i)\}=\frac{1}{N}\sum_{i=1}^N \mathbb{E}_{\boldsymbol X_i,\boldsymbol M_i}(\boldsymbol X_i\cdot \mathbb{E}(S_i|\boldsymbol X_i))\\
    &= \frac{1}{N}\sum_{i=1}^N \mathbb{E}_{\boldsymbol X_i,\boldsymbol M_i}\{\boldsymbol X_i \cdot \pi(\boldsymbol X_i,\boldsymbol \alpha^*)\}\cdot\label{eq:st21}
\end{align}
Next we have
\begin{align}
    &\mathbb{E}\left\{\frac{1}{N}\sum_{i=1}^ N \frac{S_{\text{ext},i}}{\pi_{\text{ext},i}} \cdot \pi(\boldsymbol X_i,\boldsymbol\alpha^*) \cdot \boldsymbol X_i\right\}=
    \frac{1}{N}\sum_{i=1}^ N \mathbb{E}_{\boldsymbol X_i,\boldsymbol M_i}\left[\mathbb{E}\left\{\left.\frac{S_{\text{ext},i}} {\pi_{\text{ext}}(\boldsymbol M_i)}\cdot \pi(\boldsymbol X_i,\boldsymbol \alpha_0)\cdot \boldsymbol X_i \right|\boldsymbol X_i,\boldsymbol M_i\right\}\right]\\
    &=  \frac{1}{N}\sum_{i=1}^ N \mathbb{E}_{\boldsymbol X_i,\boldsymbol M_i}\left[\mathbb{E}\left\{\left.\frac{S_{\text{ext},i}} {\pi_{\text{ext}}(\boldsymbol M_i)}\cdot \pi(\boldsymbol X_i,\boldsymbol \alpha^*)\cdot \boldsymbol X_i \right|\boldsymbol X_i,\boldsymbol M_i\right\}\right]\\
    &=\frac{1}{N}\sum_{i=1}^ N \mathbb{E}_{\boldsymbol X_i,\boldsymbol M_i}\left[\pi(\boldsymbol X_i,\boldsymbol \alpha^*)\cdot \boldsymbol X_i \cdot \mathbb{E}\left\{\left.\frac{S_{\text{ext},i}} {\pi_{\text{ext}}(\boldsymbol M_i)}\right|\boldsymbol X_i,\boldsymbol M_i\right\}\right]\\
    & = \frac{1}{N}\sum_{i=1}^ N \mathbb{E}_{\boldsymbol X_i,\boldsymbol M_i}\{\boldsymbol X_i\cdot \pi(\boldsymbol X_i,\boldsymbol \alpha^*)\}\cdot\label{eq:st22}
\end{align}
Using equations \eqref{eq:st21} and \eqref{eq:st22} we obtain 
\begin{equation}
    \mathbb{E}(\delta_N(\boldsymbol \alpha^{*}))=\mathbf{0}\cdot \label{eq:st23}
\end{equation}
The proof to show that $\mathbb{E}(\phi_N(\boldsymbol \theta^*,\boldsymbol \alpha^*))=\mathbf{0}$ is exactly as Theorem \ref{thm1} except we use $\pi(\boldsymbol X_i,\boldsymbol \alpha^*)$ instead of $\pi(\boldsymbol X_i)$. Therefore we obtain $\widehat{\boldsymbol \theta}$ is a consistent estimator of $\boldsymbol \theta^*$.
\end{proof}
\begin{theorem}
    Under assumptions of Theorem \ref{thm4} and consistency of $\widehat{\boldsymbol \theta}$, the asymptotic distribution of $\widehat{\boldsymbol \theta}$ using PL is given by
\begin{equation}
\sqrt{N}(\widehat{\boldsymbol \theta}-\boldsymbol\theta^*)\xrightarrow{d}\mathcal{N}(\mathbf{0},V).
\end{equation}
where
\begin{align*}
    & V=(G_{\boldsymbol \theta^*})^{-1}.\mathbb{E}[\{\boldsymbol g(\boldsymbol \theta^*,\boldsymbol \alpha^*)+G_{\boldsymbol \alpha^*}.\boldsymbol \Psi(\boldsymbol \alpha^*)\}\{\boldsymbol g(\boldsymbol \theta^*,\boldsymbol \alpha^*)+G_{\boldsymbol \alpha^*}.\boldsymbol \Psi(\boldsymbol \alpha^*)\}'].(G_{\boldsymbol \theta^*}^{-1})^{'}\\
    & G_{\boldsymbol \theta^*} = \mathbb{E}\left\{-\frac{S}{\pi(\boldsymbol X,\boldsymbol \alpha^*)}\cdot \frac{e^{\boldsymbol \theta^{*'}\boldsymbol Z}}{(1+e^{\boldsymbol \theta^{*'}\boldsymbol Z})^2}\cdot \boldsymbol Z\boldsymbol Z'\right\}\\
    & \boldsymbol g(\boldsymbol \theta^*,\boldsymbol \alpha^*) = \frac{S}{\pi(\boldsymbol X,\boldsymbol \alpha^*)}\left\{D \boldsymbol Z-\frac{e^{\boldsymbol \theta^{*'}\boldsymbol Z}}{(1+e^{\boldsymbol \theta^{*'}\boldsymbol Z})}\cdot \boldsymbol Z\right\}\\
    & G_{\boldsymbol \alpha^*}=\mathbb{E}\left[-\frac{S}{\pi(\boldsymbol X,\boldsymbol \alpha^*)}\cdot \{1-\pi(\boldsymbol X,\boldsymbol \alpha^*)\}\left\{D-\frac{e^{\boldsymbol \theta^{*'}\boldsymbol Z}}{(1+e^{\boldsymbol \theta^{*'}\boldsymbol Z})}\right\}\cdot \boldsymbol Z\boldsymbol X'\right]\\
    & \boldsymbol \Psi(\boldsymbol \alpha^*)= \mathbb{E}\left[\frac{\pi(\boldsymbol X,\boldsymbol \alpha^*)}{\pi_{\text{ext}}(\boldsymbol M)}\cdot\{1-\pi(\boldsymbol X,\boldsymbol \alpha^*)\}\cdot S_{\text{ext}} \cdot\boldsymbol X\boldsymbol X'\right]^{-1}.\left\{S\boldsymbol X-\frac{\pi(\boldsymbol X,\boldsymbol \alpha^*)}{\pi_{\text{ext}}(\boldsymbol M)} \cdot S_{\text{ext}}\cdot \boldsymbol X\right\}.
\end{align*} \label{thm5}
\end{theorem}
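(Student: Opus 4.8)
The plan is to treat this as a two-step (stacked) Z-estimation problem and propagate the first-step uncertainty in $\widehat{\boldsymbol\alpha}$ into the asymptotic variance of $\widehat{\boldsymbol\theta}$. Write the two estimating functions as $\delta_N(\boldsymbol\alpha)$ for the selection parameter (as in Theorem \ref{thm4}) and $\phi_N(\boldsymbol\theta,\boldsymbol\alpha)$ for the disease parameter with $\pi(\boldsymbol X)$ replaced by $\pi(\boldsymbol X,\boldsymbol\alpha)$, and denote their per-observation summands by $\boldsymbol m_{1}(\boldsymbol\alpha)=S\boldsymbol X-(S_{\text{ext}}/\pi_{\text{ext}})\pi(\boldsymbol X,\boldsymbol\alpha)\boldsymbol X$ and $\boldsymbol g(\boldsymbol\theta,\boldsymbol\alpha)$. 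Consistency of both $\widehat{\boldsymbol\alpha}$ and $\widehat{\boldsymbol\theta}$ is already in hand from Theorem \ref{thm4}, so I would concentrate on the linearization and the variance computation.

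First I would establish that $\widehat{\boldsymbol\alpha}$ is asymptotically linear. A one-term Taylor expansion of $\delta_N(\widehat{\boldsymbol\alpha})=\mathbf 0$ about $\boldsymbol\alpha^*$, combined with a uniform law of large numbers (ULLN) under the compactness assumption A5 and the moment conditions, gives
$$\sqrt{N}(\widehat{\boldsymbol\alpha}-\boldsymbol\alpha^*)=-A_{\alpha\alpha}^{-1}\frac{1}{\sqrt N}\sum_{i=1}^N \boldsymbol m_{1,i}(\boldsymbol\alpha^*)+o_p(1),$$
where $A_{\alpha\alpha}=\mathbb{E}[\partial\boldsymbol m_1/\partial\boldsymbol\alpha']$. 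Using the logistic form $\partial\pi/\partial\boldsymbol\alpha=\pi(1-\pi)\boldsymbol X$ one obtains $A_{\alpha\alpha}=-\mathbb{E}[(S_{\text{ext}}/\pi_{\text{ext}})\pi(1-\pi)\boldsymbol X\boldsymbol X']$, which identifies the per-observation influence function $-A_{\alpha\alpha}^{-1}\boldsymbol m_{1}$ exactly with the quantity $\boldsymbol\Psi(\boldsymbol\alpha^*)$ in the statement.

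Next I would jointly Taylor-expand $\phi_N(\widehat{\boldsymbol\theta},\widehat{\boldsymbol\alpha})=\mathbf 0$ in both arguments about $(\boldsymbol\theta^*,\boldsymbol\alpha^*)$,
$$\mathbf 0=\phi_N(\boldsymbol\theta^*,\boldsymbol\alpha^*)+G_{\boldsymbol\theta^*}(\widehat{\boldsymbol\theta}-\boldsymbol\theta^*)+G_{\boldsymbol\alpha^*}(\widehat{\boldsymbol\alpha}-\boldsymbol\alpha^*)+o_p(N^{-1/2}),$$
where $G_{\boldsymbol\theta^*}$ is as in Theorem \ref{thm2} and $G_{\boldsymbol\alpha^*}=\mathbb{E}[\partial\boldsymbol g/\partial\boldsymbol\alpha']$; differentiating the weight $1/\pi(\boldsymbol X,\boldsymbol\alpha)$ and again using the logistic form yields the stated $G_{\boldsymbol\alpha^*}$. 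Solving for $\widehat{\boldsymbol\theta}-\boldsymbol\theta^*$ and substituting the influence-function representation of $\widehat{\boldsymbol\alpha}$ gives
$$\sqrt{N}(\widehat{\boldsymbol\theta}-\boldsymbol\theta^*)=-G_{\boldsymbol\theta^*}^{-1}\frac{1}{\sqrt N}\sum_{i=1}^N\{\boldsymbol g(\boldsymbol\theta^*,\boldsymbol\alpha^*)+G_{\boldsymbol\alpha^*}\boldsymbol\Psi(\boldsymbol\alpha^*)\}+o_p(1).$$
The summand has mean zero at the truth (from $\mathbb{E}[\phi_N]=0$ and $\mathbb{E}[\delta_N]=0$ in Theorem \ref{thm4}), so a central limit theorem delivers the asymptotic normality with $V=G_{\boldsymbol\theta^*}^{-1}\mathbb{E}[\{\boldsymbol g+G_{\boldsymbol\alpha^*}\boldsymbol\Psi\}\{\boldsymbol g+G_{\boldsymbol\alpha^*}\boldsymbol\Psi\}'](G_{\boldsymbol\theta^*}^{-1})'$, as claimed.

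The hard part will be making the linearization rigorous rather than the algebra. Two points need care. Because $\widehat{\boldsymbol\alpha}$ enters $\phi_N$ nonlinearly through the weight $\pi(\boldsymbol X,\widehat{\boldsymbol\alpha})^{-1}$, I must control the expansion remainder uniformly in a neighborhood of $(\boldsymbol\theta^*,\boldsymbol\alpha^*)$; this is where assumptions A1, A5 and the integrability conditions enter, through a ULLN (or stochastic equicontinuity) argument ensuring the empirical Hessians converge to $G_{\boldsymbol\theta^*}$ and $G_{\boldsymbol\alpha^*}$ and that the second-order term is $o_p(N^{-1/2})$. Second, the combined summand mixes the internal indicator $S$ (in $\boldsymbol g$) with both $S$ and the external indicator $S_{\text{ext}}$ (in $\boldsymbol\Psi$ via $\boldsymbol m_1$), so the CLT must be applied to the full stacked score and the cross-covariance between these pieces retained; this is precisely what the single joint expectation $\mathbb{E}[\{\boldsymbol g+G_{\boldsymbol\alpha^*}\boldsymbol\Psi\}\{\cdots\}']$ encodes, and I would verify it under the independence structure of $S_{\text{ext}}$ posited in A7.
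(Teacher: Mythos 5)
Your proposal is correct and takes essentially the same approach as the paper: the paper simply invokes \citet{tsiatis2006semiparametric}'s general two-step Z-estimation result for the asymptotic normality and sandwich form, and then spends the proof computing by differentiation exactly the quantities your stacked Taylor expansion produces --- your $A_{\alpha\alpha}$ is the paper's $H = \mathbb{E}\left[-\frac{\pi(\boldsymbol X,\boldsymbol \alpha^*)}{\pi_{\text{ext}}(\boldsymbol M)}\{1-\pi(\boldsymbol X,\boldsymbol \alpha^*)\}S_{\text{ext}}\boldsymbol X\boldsymbol X'\right]$, your first-step influence function $-A_{\alpha\alpha}^{-1}\boldsymbol m_1$ is the paper's $\boldsymbol\Psi(\boldsymbol\alpha^*)=-H^{-1}\boldsymbol h(\boldsymbol\alpha^*)$, and your $G_{\boldsymbol\theta^*}$ and $G_{\boldsymbol\alpha^*}$ match its derivative calculations. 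Making the linearization explicit rather than citing the general result is a difference of presentation, not of argument.
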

\begin{proof}
Let $\boldsymbol h(\boldsymbol \alpha^*)=\left\{S\boldsymbol X-S_{\text{ext}}\cdot \frac{\pi(\boldsymbol X,\boldsymbol \alpha^*)}{\pi_{\text{ext}}(\boldsymbol M)}\boldsymbol \cdot \boldsymbol X\right\}$. By \citet{tsiatis2006semiparametric}'s arguments on a two step Z-estimation problem, under assumptions of Theorem \ref{thm4} and consistency of $\widehat{\boldsymbol \theta}$, we obtain that 
$$\sqrt{N}(\widehat{\boldsymbol \theta}-\boldsymbol\theta^*)\xrightarrow{d}\mathcal{N}(0,(G_{\boldsymbol \theta^*})^{-1}.\mathbb{E}[\{\boldsymbol g(\boldsymbol \theta^*,\boldsymbol \alpha^*)+G_{\boldsymbol \alpha^*}.\boldsymbol \Psi(\boldsymbol \alpha^*)\}\{\boldsymbol g(\boldsymbol \theta^*,\boldsymbol \alpha^*)+G_{\boldsymbol \alpha^*}.\boldsymbol \Psi(\boldsymbol \alpha^*)\}'].(G_{\boldsymbol \theta^*}^{-1})^{'}\cdot$$
We derive the expression of each of the terms in the above expression.
\begin{align*}
    & G_{\boldsymbol \theta^*}=\left.\mathbb{E}\left\{\frac{\partial g(\boldsymbol \theta,\boldsymbol \alpha^*)}{\partial \boldsymbol \theta}\right\}\right|_{\boldsymbol\theta=\boldsymbol \theta^*}\hspace{1cm} G_{\boldsymbol \alpha}=\left.G_{\boldsymbol \alpha^*}=\mathbb{E}\left\{\frac{\partial g(\boldsymbol \theta^*,\boldsymbol \alpha)}{\partial \boldsymbol \alpha}\right\}\right|_{\boldsymbol\alpha=\boldsymbol \alpha^*}\\
    & H = \left.\mathbb{E}\left\{\frac{\partial \boldsymbol h(\boldsymbol \alpha)}{\partial \boldsymbol \alpha}\right\}\right|_{\boldsymbol\alpha=\boldsymbol \alpha^*}\hspace{1.9cm} \boldsymbol \Psi(\boldsymbol \alpha^*)=-H^{-1}\boldsymbol h(\boldsymbol \alpha^*) \cdot
\end{align*}
First we calculate $G_{\boldsymbol \theta^*}$.
\subsubsection*{Calculation for $G_{\boldsymbol \theta^*}$}
\begin{align*}
    & \left.\frac{\partial \boldsymbol g(\boldsymbol \theta,\boldsymbol \alpha^*)}{\partial \boldsymbol \theta}\right|_{\boldsymbol\theta=\boldsymbol \theta^*}= \left.\frac{\partial}{\partial \boldsymbol \theta}\left[\frac{S}{\pi(\boldsymbol X,\boldsymbol \alpha^*)}\left\{D\boldsymbol Z-\frac{e^{\boldsymbol \theta^{*'}\boldsymbol Z}}{(1+e^{\boldsymbol \theta^{*'}\boldsymbol Z})}\cdot \boldsymbol Z\right\}\right]\right|_{\boldsymbol\theta=\boldsymbol \theta^*}\\
    & = -\frac{S}{\pi(\boldsymbol X,\boldsymbol \alpha^*)} \cdot \frac{e^{\boldsymbol \theta^{*'}\boldsymbol Z}}{(1+e^{\boldsymbol \theta^{*'}\boldsymbol Z})^2}\cdot \boldsymbol Z\boldsymbol Z'\cdot
\end{align*}
Therefore we obtain 
\begin{align*}
    G_{\boldsymbol \theta^*} = \mathbb{E}\left[-\frac{S}{\pi(\boldsymbol X,\boldsymbol \alpha^*)} \cdot \frac{e^{\boldsymbol \theta^{*'}\boldsymbol Z}}{(1+e^{\boldsymbol \theta^{*'}\boldsymbol Z})^2}\cdot \boldsymbol Z \boldsymbol Z'\right]\cdot
\end{align*}
Next we calculate $G_{\boldsymbol \alpha^*}$.
\subsubsection*{Calculation for $G_{\boldsymbol \alpha^*}$}
\begin{align*}
    & \left.\frac{\partial \boldsymbol g(\boldsymbol \theta^*,\boldsymbol \alpha)}{\partial \boldsymbol \alpha}\right|_{\boldsymbol\alpha=\boldsymbol \alpha^*}= \left.\frac{\partial}{\partial \boldsymbol \alpha}\left[\frac{S}{\pi(\boldsymbol X,\boldsymbol \alpha)}\left\{D\boldsymbol Z-\frac{e^{\boldsymbol \theta^{*'}\boldsymbol Z}}{(1+e^{\boldsymbol \theta^{*'}\boldsymbol Z})}\cdot \boldsymbol Z\right\}\right]\right|_{\boldsymbol\alpha=\boldsymbol \alpha^*}\\
    & = -\frac{S}{\pi(\boldsymbol X,\boldsymbol \alpha^*)^2}\cdot\left\{D\boldsymbol Z-\frac{e^{\boldsymbol \theta^{*'}\boldsymbol Z}}{(1+e^{\boldsymbol \theta^{*'}\boldsymbol Z})}\cdot \boldsymbol Z\right\}\cdot \left.\frac{\partial \pi(\boldsymbol X,\boldsymbol \alpha)}{\partial \boldsymbol \alpha}\right|_{\boldsymbol\alpha=\boldsymbol \alpha^*}\\
    & =-\frac{S}{\pi(\boldsymbol X,\boldsymbol \alpha^*)}\{1-\pi(\boldsymbol X,\boldsymbol \alpha^*)\}\left\{D-\frac{e^{\boldsymbol \theta^{*'}\boldsymbol Z}}{(1+e^{\boldsymbol \theta^{*'}\boldsymbol Z})}\right\}\boldsymbol Z\boldsymbol X'\cdot
\end{align*}
Therefore we obtain 
\begin{align*}
    & G_{\boldsymbol \alpha^*} = \mathbb{E}\left[-\frac{S}{\pi(\boldsymbol X,\boldsymbol \alpha^*)}\cdot\{1-\pi(\boldsymbol X,\boldsymbol \alpha^*)\}\left\{D-\frac{e^{\boldsymbol \theta^{*'}\boldsymbol Z}}{(1+e^{\boldsymbol \theta^{*'}\boldsymbol Z})}\right\}\boldsymbol Z\boldsymbol X'\right] \cdot
\end{align*}
Next we calculate $\Psi(\boldsymbol \alpha^*)$.
\subsubsection*{Calculation for $\boldsymbol\Psi(\boldsymbol \alpha^*)$}
\begin{align*}
   & \left.\frac{ \partial \boldsymbol h(\boldsymbol \alpha)}{\partial \boldsymbol \alpha}\right|_{\boldsymbol \alpha=\boldsymbol \alpha^*}= \left.\frac{ \partial}{\partial \boldsymbol \alpha}\left\{S\boldsymbol X-S_{\text{ext}}\cdot \frac{\pi(\boldsymbol X,\boldsymbol \alpha)}{\pi_{\text{ext}}(\boldsymbol M)}\cdot \boldsymbol X \right\}\right|_{\boldsymbol \alpha=\boldsymbol \alpha^*}=-\frac{\left.\frac{ \partial}{\partial \alpha}\pi(\boldsymbol X,\boldsymbol \alpha)\right|_{\boldsymbol \alpha=\boldsymbol \alpha^*}}{\pi_{\text{ext}}(\boldsymbol M)}\cdot S_{\text{ext}}\cdot \boldsymbol X\\
   & = -\frac{\pi(\boldsymbol X,\boldsymbol \alpha^*)}{\pi_{\text{ext}}(\boldsymbol M)}\cdot \{1-\pi(\boldsymbol X,\boldsymbol \alpha^*)\}\cdot S_{\text{ext}}\cdot \boldsymbol X \cdot \boldsymbol X'
\end{align*}
This implies
\begin{align*}
  H = \mathbb{E}\left[-\frac{\pi(\boldsymbol X,\boldsymbol \alpha^*)}{\pi_{\text{ext}}(\boldsymbol M)}\cdot \{1-\pi(\boldsymbol X,\boldsymbol \alpha^*)\}\cdot S_{\text{ext}}\cdot \boldsymbol X \cdot \boldsymbol X'\right]\cdot
\end{align*}
Therefore we obtain 
\begin{align*}
    & \boldsymbol \Psi(\boldsymbol \alpha^*)= \mathbb{E}\left[\frac{\pi(\boldsymbol X,\boldsymbol \alpha^*)}{\pi_{\text{ext}}(\boldsymbol M)}\cdot \{1-\pi(\boldsymbol X,\boldsymbol \alpha^*)\}\cdot S_{\text{ext}}\cdot \boldsymbol X \boldsymbol X'\right]^{-1}\cdot \left\{S\boldsymbol X-\frac{\pi(\boldsymbol X,\boldsymbol \alpha^*)}{\pi_{\text{ext}}(\boldsymbol M)}\cdot S_{\text{ext}}\cdot \boldsymbol X\right\}\cdot
\end{align*}
This gives the asymptotic distribution of $\widehat{\boldsymbol \theta}$ for PL.
\end{proof}
\noindent
Next we derive a consistent estimator of asymptotic variance of $\widehat{\boldsymbol\theta}$. Apart from assumptions of Theorem \ref{thm4} and \ref{thm5} we make the following additional assumption.\\

\noindent
Let $\eta=(\boldsymbol \theta,\boldsymbol \alpha)$ and $\mathcal{N}=\Theta \times \Theta_{\boldsymbol \alpha}$.\\

\noindent
\textit{A}8. Each of the following expectations are finite.
\begin{align*}
    & \mathbb{E}[\sup_{\boldsymbol \eta \in \mathcal{N}}(G_{\boldsymbol \theta})]<\infty \hspace{0.8cm} 
    \mathbb{E}[\sup_{\boldsymbol \eta \in \mathcal{N}}(G_{\boldsymbol \alpha})]<\infty \hspace{0.8cm} \mathbb{E}[\sup_{\boldsymbol \alpha \in \Theta_{\boldsymbol \alpha}}(H)]<\infty\\
    & \mathbb{E}\left(\sup_{\boldsymbol \eta \in \mathcal{N}}\left[S \cdot \left\{\frac{1}{\pi(\boldsymbol X,\boldsymbol \alpha)}\right\}^2\left\{D-\frac{e^{\boldsymbol \theta' \boldsymbol Z}}{(1+e^{\boldsymbol \theta'\boldsymbol Z})}\right\}^2\cdot \boldsymbol Z \boldsymbol Z'\right]\right)<\infty\\
    & \mathbb{E}\left(\sup_{\boldsymbol \eta \in \mathcal{N}}\left[\frac{S\boldsymbol X}{\pi(\boldsymbol X,\boldsymbol \alpha)}\cdot \left\{D\boldsymbol Z'-\frac{e^{\boldsymbol \theta'\boldsymbol Z}}{(1+e^{\boldsymbol \theta' \boldsymbol Z})} \cdot \boldsymbol Z'\right\}-\frac{SS_{\text{ext}}\boldsymbol X}{\pi_{\text{ext}}(\boldsymbol M)} \cdot \left\{D \boldsymbol Z'-\frac{e^{\boldsymbol \theta' \boldsymbol Z}}{(1+e^{\boldsymbol \theta' \boldsymbol Z})} \cdot \boldsymbol Z'\right\}\right]\right)<\infty\\
    & \mathbb{E}\left(\sup_{\boldsymbol \alpha \in \Theta_{\boldsymbol \alpha}}\left[S\cdot\boldsymbol X \boldsymbol X'-2 SS_{\text{ext}} \cdot \frac{\pi(\boldsymbol X,\boldsymbol \alpha)}{\pi_{\text{ext}}(\boldsymbol M)}\cdot\boldsymbol X\boldsymbol X'+  S_{\text{ext}} \cdot \left\{\frac{\pi(\boldsymbol X,\boldsymbol \alpha)}{\pi_{\text{ext}}(\boldsymbol M)}\right\}^2 \cdot \boldsymbol X \boldsymbol X'\right]\right) <\infty \cdot
\end{align*}
\begin{theorem}
    Under all the assumptions of Theorems \ref{thm4}, \ref{thm5} and \textit{A}8, $\frac{1}{N} \cdot \widehat{G_{\boldsymbol \theta}}^{-1}\cdot \widehat{E}\cdot (\widehat{G_{\boldsymbol \theta}}^{-1})^{'}$ is a consistent estimator of the variance of $\widehat{\boldsymbol \theta}$ for PL where
\begin{align*}
   & \widehat{G_{\boldsymbol \theta}}=-\frac{1}{N}\sum_{i=1}^N S_i \cdot \frac{1}{\pi(\boldsymbol X_i,\widehat{\boldsymbol \alpha})}\cdot \frac{e^{\widehat{\boldsymbol \theta}'\boldsymbol Z_i}}{(1+e^{\widehat{\boldsymbol \theta}'\boldsymbol Z_i})^2}\cdot\boldsymbol Z_i \boldsymbol Z_i'\\
   & \widehat{H} = -\frac{1}{N}\sum_{i=1}^N S_{\text{ext},i}\cdot \frac{\pi(\boldsymbol X_i,\widehat{\boldsymbol \alpha})}{\pi_{\text{ext}}(\boldsymbol M)} \cdot \{1-\pi(\boldsymbol X_i,\hat{\boldsymbol \alpha})\}\cdot \boldsymbol X_i \cdot \boldsymbol X_i'\\
   & \widehat{G_{\boldsymbol \alpha}}=-\frac{1}{N}\sum_{i=1}^N S_i \cdot \frac{\{1-\pi(\boldsymbol X_i,\widehat{\boldsymbol \alpha})\}}{\pi(\boldsymbol X_i,\widehat{\boldsymbol \alpha})}\cdot\left\{D_i-\frac{e^{\widehat{\boldsymbol \theta}'\boldsymbol Z_i}}{(1+e^{\widehat{\boldsymbol \theta}'\boldsymbol Z_i})}\right\}\boldsymbol Z_i \boldsymbol X_i'\\
   & \widehat{E}_1 =\frac{1}{N}\sum_{i =1}^N S_i \cdot \left\{\frac{1}{\pi(\boldsymbol X_i,\widehat{\boldsymbol \alpha})}\right\}^2\left\{D_i-\frac{e^{\widehat{\boldsymbol \theta}'\boldsymbol Z_i}}{(1+e^{\hat{\theta}' \boldsymbol Z_i})}\right\}^2\cdot \boldsymbol Z_i \boldsymbol Z_i'
\end{align*}
\begin{align*}
    & \widehat{E}_2 =  \widehat{E}_3'=\frac{1}{N}\sum_{i=1}^N S_i \cdot \widehat{G_{\boldsymbol \alpha}}\cdot \widehat{H}^{-1} \cdot \boldsymbol X_i \cdot \frac{1}{\pi(\boldsymbol X_i,\widehat{\boldsymbol \alpha})}\cdot \left\{D_i\boldsymbol Z_i'-\frac{e^{\widehat{\boldsymbol \theta}'\boldsymbol Z_i}}{(1+e^{\widehat{\boldsymbol \theta}' \boldsymbol Z_i})} \cdot \boldsymbol Z_i'\right\}\\
    & -\frac{1}{N}\sum_{i=1}S_i\cdot S_{\text{ext},i} \cdot \widehat{G_{\boldsymbol \alpha}} \cdot \widehat{H}^{-1} \cdot \frac{1}{\pi_{\text{ext}}(\boldsymbol M_i)} \cdot \boldsymbol X_i \cdot \left\{D_i \boldsymbol Z_i'-\frac{e^{\widehat{\boldsymbol \theta}' \boldsymbol Z_i}}{(1+e^{\widehat{\boldsymbol \theta}' \boldsymbol Z_i})} \cdot \boldsymbol Z_i'\right\}\\
    & \widehat{E_4} = \frac{1}{N} \cdot \widehat{G_{\boldsymbol \alpha}} \cdot \widehat{H}^{-1}\left[\sum_{i=1}^N S_i\cdot \boldsymbol X_i \cdot \boldsymbol X_i'-2 \cdot \sum_{i=1}^ N S_i\cdot S_{\text{ext},i} \cdot \frac{\pi(\boldsymbol X_i,\widehat{\boldsymbol \alpha})}{\pi_{\text{ext}}(\boldsymbol M_i)}\boldsymbol X_i \cdot \boldsymbol X_i'+ \right.\\
    & \left. \sum_{i=1}^N S_{\text{ext},i} \cdot \left\{\frac{\pi(\boldsymbol X_i,\widehat{\boldsymbol \alpha})}{\pi_{\text{ext}}(\boldsymbol M)}\right\}^2 \cdot \boldsymbol X_i \cdot \boldsymbol X_i'\right]\cdot(\widehat{H}^{-1})'\cdot(\widehat{G_{\boldsymbol \alpha}})' \\
    & \widehat{E}=\widehat{E}_1-\widehat{E}_2-\widehat{E}_3 + \widehat{E}_4 \cdot
\end{align*}
\end{theorem}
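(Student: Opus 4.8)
The plan is to mirror the argument used for the known-weight case in Theorem \ref{thm3}, while additionally tracking the influence-function contribution that arises from estimating the selection parameter $\boldsymbol\alpha$. From Theorem \ref{thm5} the target asymptotic variance is $V=(G_{\boldsymbol\theta^*})^{-1}B\,(G_{\boldsymbol\theta^*}^{-1})'$, where the ``meat'' matrix is $B=\mathbb{E}[\{\boldsymbol g(\boldsymbol\theta^*,\boldsymbol\alpha^*)+G_{\boldsymbol\alpha^*}\boldsymbol\Psi(\boldsymbol\alpha^*)\}\{\boldsymbol g(\boldsymbol\theta^*,\boldsymbol\alpha^*)+G_{\boldsymbol\alpha^*}\boldsymbol\Psi(\boldsymbol\alpha^*)\}']$. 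Because $\frac{1}{N}\widehat{G_{\boldsymbol\theta}}^{-1}\widehat E\,(\widehat{G_{\boldsymbol\theta}}^{-1})'$ is produced by the continuous operations of matrix inversion and multiplication applied to $\widehat{G_{\boldsymbol\theta}}$ and $\widehat E$, it suffices by the continuous mapping theorem to establish the two block limits $\widehat{G_{\boldsymbol\theta}}\xrightarrow{p}G_{\boldsymbol\theta^*}$ and $\widehat E\xrightarrow{p}B$; the leading factor $1/N$ then delivers the estimator of $\mathrm{Var}(\widehat{\boldsymbol\theta})$ exactly as in the closing display of Theorem \ref{thm3}.

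First I would establish consistency of each individual plug-in block, namely $\widehat{G_{\boldsymbol\theta}}\xrightarrow{p}G_{\boldsymbol\theta^*}$, $\widehat H\xrightarrow{p}H$, and $\widehat{G_{\boldsymbol\alpha}}\xrightarrow{p}G_{\boldsymbol\alpha^*}$. Every summand is a smooth function of $(\boldsymbol\theta,\boldsymbol\alpha)$ on the compact set $\mathcal{N}=\Theta\times\Theta_{\boldsymbol\alpha}$, and Assumption \textit{A}8 supplies exactly the integrable envelopes needed to invoke a Uniform Law of Large Numbers. The reasoning is then identical in form to equations \eqref{eq:st11}--\eqref{eq:st14}: the ULLN gives uniform convergence of each empirical average to its population expectation; evaluating at the consistent estimators $(\widehat{\boldsymbol\theta},\widehat{\boldsymbol\alpha})$ (consistency of $\widehat{\boldsymbol\theta}$ is Theorem \ref{thm4}, while $\widehat{\boldsymbol\alpha}\xrightarrow{p}\boldsymbol\alpha^*$ follows from equation \eqref{eq:st23}) and applying the continuous mapping theorem removes the plug-in error and yields convergence to the expectation evaluated at $(\boldsymbol\theta^*,\boldsymbol\alpha^*)$.

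The crux is to verify that $\widehat E=\widehat E_1-\widehat E_2-\widehat E_3+\widehat E_4$ is the correct empirical analogue of $B$. Expanding the outer product and substituting $\boldsymbol\Psi(\boldsymbol\alpha^*)=-H^{-1}\boldsymbol h(\boldsymbol\alpha^*)$ produces the four pieces $\mathbb{E}[\boldsymbol g\boldsymbol g']$, $-G_{\boldsymbol\alpha^*}H^{-1}\mathbb{E}[\boldsymbol h\boldsymbol g']$, its transpose, and $G_{\boldsymbol\alpha^*}H^{-1}\mathbb{E}[\boldsymbol h\boldsymbol h'](H^{-1})'G_{\boldsymbol\alpha^*}'$, matched respectively by $\widehat E_1$, $-\widehat E_2$, $-\widehat E_3$, $\widehat E_4$; the minus signs on $\widehat E_2,\widehat E_3$ are precisely the imprint of $\boldsymbol\Psi=-H^{-1}\boldsymbol h$. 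I expect the cross-term algebra to be the main obstacle: writing $\boldsymbol h(\boldsymbol\alpha^*)=S\boldsymbol X-S_{\text{ext}}\frac{\pi(\boldsymbol X,\boldsymbol\alpha^*)}{\pi_{\text{ext}}(\boldsymbol M)}\boldsymbol X$ and $\boldsymbol g(\boldsymbol\theta^*,\boldsymbol\alpha^*)=\frac{S}{\pi(\boldsymbol X,\boldsymbol\alpha^*)}\{D\boldsymbol Z-\frac{e^{\boldsymbol\theta^{*'}\boldsymbol Z}}{1+e^{\boldsymbol\theta^{*'}\boldsymbol Z}}\boldsymbol Z\}$, one forms $\boldsymbol h\boldsymbol g'$ and must carefully collapse the products of binary indicators via $S^2=S$ and $S S_{\text{ext}}$, which is exactly what turns $\mathbb{E}[\boldsymbol h\boldsymbol g']$ into the two-sum form defining $\widehat E_2$; the analogous reduction of $\mathbb{E}[\boldsymbol h\boldsymbol h']$ produces the three terms of $\widehat E_4$ with coefficients $1$, $-2$, $1$ on the $S$, $SS_{\text{ext}}$, and $S_{\text{ext}}$ contributions. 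Once each of $\widehat E_1,\dots,\widehat E_4$ is shown to converge to its population counterpart by the same ULLN-plus-continuous-mapping scheme (using the envelopes of \textit{A}8, which are written precisely for these summands), Slutsky's theorem assembles $\widehat E\xrightarrow{p}B$, and therefore $\frac{1}{N}\widehat{G_{\boldsymbol\theta}}^{-1}\widehat E\,(\widehat{G_{\boldsymbol\theta}}^{-1})'$ is consistent for the asymptotic variance of $\widehat{\boldsymbol\theta}$.
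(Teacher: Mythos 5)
Your proposal is correct and follows essentially the same route as the paper's proof: block-wise consistency of $\widehat{G_{\boldsymbol \theta}}$, $\widehat{G_{\boldsymbol \alpha}}$, $\widehat{H}$, and $\widehat{E}_1,\dots,\widehat{E}_4$ via the ULLN-plus-continuous-mapping scheme of Theorem \ref{thm3} applied to the joint parameter $\boldsymbol \eta=(\boldsymbol \theta,\boldsymbol \alpha)$, assembly of $\widehat{E}\xrightarrow{p}\mathbb{E}[\{\boldsymbol g(\boldsymbol \theta^*,\boldsymbol \alpha^*)+G_{\boldsymbol \alpha^*}\boldsymbol \Psi(\boldsymbol \alpha^*)\}\{\boldsymbol g(\boldsymbol \theta^*,\boldsymbol \alpha^*)+G_{\boldsymbol \alpha^*}\boldsymbol \Psi(\boldsymbol \alpha^*)\}']$ from the expansion with $\boldsymbol \Psi=-H^{-1}\boldsymbol h$, and the closing $O_p(1/N)$ step borrowed from Theorem \ref{thm3} via Theorem \ref{thm5}. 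Your explicit check that collapsing the binary indicators ($S^2=S$, $S_{\text{ext}}^2=S_{\text{ext}}$) in $\boldsymbol h\boldsymbol g'$ and $\boldsymbol h\boldsymbol h'$ reproduces the two-sum form of $\widehat{E}_2$ and the $1,-2,1$ coefficients in $\widehat{E}_4$ simply makes explicit the algebra the paper asserts without derivation.
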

\begin{proof}
    Under all the assumptions of Theorems \ref{thm4}, \ref{thm5} and \textit{A}8 and using ULLN and Continuous Mapping Theorem, the proof of consistency for each of the following sample quantities are exactly same as the approach in Theorem \ref{thm3}. Using the exact same steps on the joint parameters $\boldsymbol \eta$ instead of $\boldsymbol \theta$ (as in Theorem \ref{thm3}) we obtain
\begin{align*}
     & \widehat{G_{\boldsymbol \theta}}=-\frac{1}{N}\sum_{i=1}^N S_i \cdot \frac{1}{\pi(\boldsymbol X_i,\hat{\boldsymbol \alpha})}\cdot \frac{e^{\widehat{\boldsymbol \theta}'\boldsymbol Z_i}}{(1+e^{\widehat{\boldsymbol \theta}'\boldsymbol Z_i})^2}\cdot\boldsymbol Z_i \boldsymbol Z_i' \\
     & \xrightarrow{p} G_{\boldsymbol \theta^*} = \mathbb{E}\left\{-\frac{S}{\pi(\boldsymbol X,\boldsymbol\alpha^*)}\cdot \frac{e^{\boldsymbol \theta^{*'}\boldsymbol Z}}{(1+e^{\boldsymbol \theta^{*'}\boldsymbol Z})^2}.\boldsymbol Z\boldsymbol Z'\right\}\cdot
\end{align*}
Similarly we obtain 
\begin{align*}
     & \widehat{G_{\boldsymbol \alpha}}=-\frac{1}{N}\sum_{i=1}^N S_i \cdot \frac{\{1-\pi(\boldsymbol X_i,\widehat{\boldsymbol \alpha})\}}{\pi(\boldsymbol X_i,\widehat{\boldsymbol \alpha})}\cdot\left\{D_i-\frac{e^{\widehat{\boldsymbol \theta}' \boldsymbol Z_i}}{(1+e^{\widehat{\boldsymbol \theta}'\boldsymbol Z_i})}\right\}\boldsymbol Z_i \boldsymbol X_i' \\
     & \xrightarrow{p} G_{\boldsymbol \alpha} = \mathbb{E}\left[-\frac{S}{\pi(\boldsymbol X,\boldsymbol \alpha^*)}\{1-\pi(\boldsymbol X,\boldsymbol \alpha^*)\}\left\{D-\frac{e^{\boldsymbol \theta^{*'}\boldsymbol Z}}{(1+e^{\boldsymbol \theta^{*'}\boldsymbol Z})}\right\}\boldsymbol Z\boldsymbol X'\right]\cdot
\end{align*}
\begin{align*}
     & \widehat{H} = -\frac{1}{N}\sum_{i=1}^N S_{\text{ext},i}\cdot \frac{\pi(\boldsymbol X_i,\hat{\boldsymbol \alpha})}{\pi_{\text{ext}}(\boldsymbol M_i)} \cdot \{1-\pi(\boldsymbol X_i,\hat{\boldsymbol \alpha})\}\cdot \boldsymbol X_i \cdot \boldsymbol X_i'\\
     &\xrightarrow{p} H = \mathbb{E}\left[-\frac{\pi(\boldsymbol X,\boldsymbol \alpha^*)}{\pi_e(\boldsymbol M)}\cdot \{1-\pi(\boldsymbol X,\boldsymbol \alpha^*)\}\cdot S_{\text{ext}} \cdot \boldsymbol X \cdot \boldsymbol X'\right]\cdot
\end{align*}
\begin{align*}
     &  \widehat{E}_1 =\frac{1}{N}\sum_{i =1}^N S_i \cdot \left\{\frac{1}{\pi(\boldsymbol X_i,\hat{\boldsymbol \alpha})}\right\}^2\left\{D_i-\frac{e^{\widehat{\boldsymbol \theta}' \boldsymbol Z_i}}{(1+e^{\hat{\theta}'\boldsymbol Z_i})}\right\}^2\cdot \boldsymbol Z_i \boldsymbol Z_i'\\
     & \xrightarrow{p}  \mathbb{E}\left[S \cdot \left\{\frac{1}{\pi(\boldsymbol X,\boldsymbol \alpha^*)}\right\}^2\cdot \left\{D-\frac{e^{\boldsymbol \theta^{*'}\boldsymbol Z_i}}{(1+e^{\boldsymbol \theta^{*'} \boldsymbol Z})}\right\}^2\cdot \boldsymbol Z\boldsymbol Z'\right]\cdot
\end{align*}
\begin{align*}
     &  \widehat{E}_2 =  \widehat{E}_3'=\frac{1}{N}\sum_{i=1}^N S_i \cdot \widehat{G_{\boldsymbol \alpha}}\cdot \widehat{H}^{-1} \cdot \boldsymbol X_i \cdot \frac{1}{\pi(\boldsymbol X_i,\hat{\boldsymbol \alpha})}\cdot \left\{D_i\boldsymbol Z_i'-\frac{e^{\widehat{\boldsymbol \theta}'\boldsymbol Z_i}}{(1+e^{\widehat{\boldsymbol \theta}'  \boldsymbol Z_i})} \cdot \boldsymbol Z_i'\right\}\\
    & -\frac{1}{N}\sum_{i=1}S_i\cdot S_{\text{ext},i} \cdot \widehat{G_{\boldsymbol \alpha}} \cdot \widehat{H}^{-1} \cdot \frac{1}{\pi_{\text{ext}}(\boldsymbol M_i)} \cdot \boldsymbol X_i \cdot \left\{D_i \boldsymbol Z_i'-\frac{e^{\widehat{\boldsymbol \theta}'  \boldsymbol Z_i}}{(1+e^{\widehat{\boldsymbol \theta}'  \boldsymbol Z_i})} \cdot \boldsymbol Z_i'\right\}\\
    &\xrightarrow{p} \mathbb{E}\{G_{\boldsymbol \alpha^*}\cdot H^{-1}\cdot \boldsymbol h(\boldsymbol \alpha^*)\cdot  \boldsymbol g(\boldsymbol \theta^*,\boldsymbol \alpha^*)'\}\cdot
\end{align*}
\begin{align*}
    & \widehat{E_4} = \frac{1}{N} \cdot \widehat{G_{\boldsymbol \alpha}} \cdot \widehat{H}^{-1}\left[\sum_{i=1}^N S_i\cdot \boldsymbol X_i \cdot \boldsymbol X_i'-2 \cdot \sum_{i=1}^ N S_i\cdot S_{\text{ext},i} \cdot \frac{\pi(\boldsymbol X_i,\widehat{\boldsymbol \alpha})}{\pi_{\text{ext}}(\boldsymbol M_i)}\boldsymbol X_i \cdot \boldsymbol X_i'+ \right.\\
    & \left. \sum_{i=1}^N S_{\text{ext},i} \cdot \left\{\frac{\pi(\boldsymbol X_i,\widehat{\boldsymbol \alpha})}{\pi_{\text{ext}}(\boldsymbol M)}\right\}^2 \cdot \boldsymbol X_i \cdot \boldsymbol X_i'\right]\cdot(\widehat{H}^{-1})'\cdot(\widehat{G_{\boldsymbol \alpha}})' \\
    & \xrightarrow{p} \mathbb{E}\{G_{\boldsymbol \alpha^*}H^{-1}\cdot \boldsymbol h(\boldsymbol \alpha^*)\cdot \boldsymbol h(\boldsymbol \alpha^*)'\cdot (H^{-1})'\cdot (G_{\boldsymbol\alpha^*})'\} 
\end{align*}
Therefore we obtain
\begin{align*}
    \widehat{E}=\widehat{E}_1-\widehat{E}_2-\widehat{E}_3 + \widehat{E}_4\xrightarrow{p}\mathbb{E}[\{\boldsymbol g(\boldsymbol \theta^*,\boldsymbol \alpha^*)+G_{\boldsymbol \alpha^*}\cdot \boldsymbol \Psi(\boldsymbol \alpha^*)\}\{\boldsymbol g(\boldsymbol \theta^*,\boldsymbol \alpha^*)+G_{\boldsymbol \alpha^*}.\boldsymbol \Psi(\boldsymbol \alpha^*)\}']\cdot
\end{align*}
Using all the above results we obtain $\widehat{G_{\boldsymbol \theta}}^{-1}\cdot\widehat{E}\cdot(\widehat{G_{\boldsymbol \theta}}^{-1})^{'}$
\begin{align*}
    \xrightarrow{p}(G_{\boldsymbol \theta^*})^{-1}\cdot \mathbb{E}[\{\boldsymbol g(\boldsymbol \theta^*,\boldsymbol \alpha^*)+G_{\boldsymbol \alpha^*}\cdot \boldsymbol \Psi(\boldsymbol \alpha^*)\}\{\boldsymbol g(\boldsymbol \theta^*,\boldsymbol \alpha^*)+G_{\boldsymbol \alpha^*}.\boldsymbol \Psi(\boldsymbol \alpha^*)\}'].(G_{\boldsymbol \theta^*}^{-1})^{'}\cdot
\end{align*}
From Theorem \ref{thm5} using the same approach used in the last step of Theorem \ref{thm3}, we obtain that $\frac{1}{N}\cdot\widehat{G_{\boldsymbol \theta}}^{-1}\cdot\hat{E}\cdot(\widehat{G_{\boldsymbol \theta}}^{-1})^{'}$ is a consistent estimator of the asymptotic variance of $\widehat{\boldsymbol \theta}$.
\end{proof}
\subsection{Identity of Simplex Regression}\label{sec:simproof}
In this section, we prove the identity in equation \eqref{eq:eq11}. Let $S_{\text{comb}}=(S=1\text{ or }S_{\text{ext}}=1)$.
For any $(j,k)$ in $(0,1),(1,0),(1,1)$, 
\begin{align*}
      & P(S=j,S_{\text{ext}}=k|\boldsymbol X,S_{\text{comb}}=1)=\frac{ P(S=j,S_{\text{ext}}=k,\boldsymbol X|S_{\text{comb}}=1)}{P(\boldsymbol X|S_{\text{comb}}=1)}\\
      & = \frac{P(\boldsymbol X|S=j,S_{\text{ext}}=k,S_{\text{comb}}=1).P(S=j,S_{\text{ext}}=k|S_{\text{comb}}=1)}{\sum_{(a,b)\in (10,01,11)}P(\boldsymbol X|S=a,S_{\text{ext}}=b,S_{\text{comb}}=1).P(S=a,S_{\text{ext}}=b|S_{\text{comb}}=1)}\\
      & = \frac{P(\boldsymbol X|S=j,S_{\text{ext}}=k).P(S=j,S_{\text{ext}}=k)}{\sum_{(a,b)\in (10,01,11)}P(\boldsymbol X|S=a,S_{\text{ext}}=b).P(S=a,S_{\text{ext}}=b)}\cdot
\end{align*}
We define three quantities,
\begin{align*}
    & \mu_{jk}=P(\boldsymbol X|S=j,S_{\text{ext}}=k)\\
    &  \alpha_{jk}=P(S=j,S_{\text{ext}}=k)\\
    & p_{jk}=P(S=j,S_{\text{ext}}=k|\boldsymbol X,S_{\text{comb}}=1)\cdot
\end{align*}
As a result, we obtain that, 
\begin{align*}
    p_{jk}=\frac{\mu_{jk}\alpha_{jk}}{\mu_{11}\alpha_{11}+\mu_{01}\alpha_{01}+\mu_{10}\alpha_{10}}\cdot
\end{align*}
Therefore, we obtain that, 
$$\frac{\mu_{11}\alpha_{11}}{p_{11}}=\frac{\mu_{10}\alpha_{10}}{p_{10}}=\frac{\mu_{01}\alpha_{01}}{p_{01}}\cdot$$
Now, we also obtain that,
\begin{align*}
    P(\boldsymbol X|S=1)& =\sum_b P(\boldsymbol X|S=1,S_{\text{ext}}=b)P(S_{\text{ext}}=b|S=1)\\
    & =\frac{\mu_{11}\alpha_{11}+\mu_{10}\alpha_{10}}{P(S=1)}=\frac{\mu_{11}\alpha_{11}+\mu_{11}\alpha_{11}\frac{p_{10}}{p_{11}}}{P(S=1)}\\
    P(\boldsymbol X|S_{\text{ext}}=1)& =\sum_b P(\boldsymbol X|S=a,S_{\text{ext}}=1)P(S=a|S_{\text{ext}}=1)\\
    & =\frac{\mu_{11}\alpha_{11}+\mu_{10}\alpha_{10}}{P(S_{\text{ext}}=1)}=\frac{\mu_{11}\alpha_{11}+\mu_{11}\alpha_{11}\frac{p_{10}}{p_{11}}}{P(S_{\text{ext}}=1)}\cdot
\end{align*}
By definition of conditional probability 
\begin{align*}
    & P(S=1|\boldsymbol X)=\frac{P(\boldsymbol X|S=1)P(S=1)}{P(\boldsymbol X)}\\
    & P(S_{\text{ext}}=1|\boldsymbol X)=\frac{P(\boldsymbol X|S_{\text{ext}}=1)P(S_{\text{ext}}=1)}{P(\boldsymbol X)}\cdot
\end{align*}
Therefore we obtain 
\begin{align*}
    & P(S=1|\boldsymbol X)=P(S_{\text{ext}}=1|\boldsymbol X)\frac{P(\boldsymbol X|S=1)P(S=1)}{P(\boldsymbol X|S_{\text{ext}}=1)P(S_{\text{ext}}=1)}\cdot
\end{align*}
Using above expressions of $P(\boldsymbol X|S=1)$, $P(\boldsymbol X|S_{\text{ext}}=1)$ and $ P(S=1|\boldsymbol X)$, we obtain that,
$$P(S=1|\boldsymbol X)=P(S_{\text{ext}}=1|\boldsymbol X)\cdot \frac{p_{11}+p_{10}}{p_{11}+p_{01}}\cdot$$
This proves equation \eqref{eq:eq11}.

\subsection{Asymptotic Distribution and Variance Estimation of CL}\label{sec:proofcl}
\begin{theorem}
     Under assumptions \textit{A}1, \textit{A}2 \textit{A}3, \textit{A}5  and assuming the selection model is correctly specified, that is, $\pi(\boldsymbol X,\boldsymbol \alpha^*)=P(S=1|\boldsymbol X)=\pi(\boldsymbol X)$ where $\boldsymbol \alpha^*$ is the true value of $\boldsymbol\alpha$, then $\widehat{\boldsymbol \theta}$ estimated using CL is consistent for $\boldsymbol \theta^*$ as $N\rightarrow \infty$. \label{thm7}
\end{theorem}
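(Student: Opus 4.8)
The plan is to mirror the two-step Z-estimation argument already used for PL in Theorem \ref{thm4}, since CL changes only the first-stage estimating equation (the calibration equation \eqref{eq:eq13}) while keeping the second-stage IPW logistic score \eqref{eq:eq6} unchanged. First I would write the two stacked estimating functions as
$$\delta_N(\boldsymbol\alpha)=\frac{1}{N}\sum_{i=1}^N\frac{S_i\boldsymbol X_i}{\pi(\boldsymbol X_i,\boldsymbol\alpha)}-\frac{1}{N}\sum_{i=1}^N\boldsymbol X_i,\qquad \phi_N(\boldsymbol\theta,\widehat{\boldsymbol\alpha})=\frac{1}{N}\sum_{i=1}^N\frac{S_i}{\pi(\boldsymbol X_i,\widehat{\boldsymbol\alpha})}\left\{D_i\boldsymbol Z_i-\frac{e^{\boldsymbol\theta'\boldsymbol Z_i}}{1+e^{\boldsymbol\theta'\boldsymbol Z_i}}\boldsymbol Z_i\right\}.$$
Following \citet{tsiatis2006semiparametric}, for a just-identified two-step estimator it is enough to verify that the population versions of both estimating functions vanish at the truth, namely $\mathbb{E}[\delta_N(\boldsymbol\alpha^*)]=\mathbf{0}$ and $\mathbb{E}[\phi_N(\boldsymbol\theta^*,\boldsymbol\alpha^*)]=\mathbf{0}$, together with the compactness assumption \textit{A}5 and identification of $\boldsymbol\alpha^*$ as the unique zero of $\mathbb{E}[\delta_N]$.

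The second condition is inherited verbatim from Theorem \ref{thm1}: under correct specification $\pi(\boldsymbol X,\boldsymbol\alpha^*)=P(S=1\mid\boldsymbol X)$, one simply replaces $\pi(\boldsymbol X)$ by $\pi(\boldsymbol X,\boldsymbol\alpha^*)$ throughout that derivation, uses $S\independent\boldsymbol Z_1\mid\boldsymbol X$ so that $\mathbb{E}(S\mid\boldsymbol X,\boldsymbol Z_1)=\pi(\boldsymbol X,\boldsymbol\alpha^*)$, and applies equation \eqref{eq:eq1} to obtain $\mathbb{E}[\phi_N(\boldsymbol\theta^*,\boldsymbol\alpha^*)]=\mathbf{0}$. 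I would cite that argument rather than repeat it.

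The only genuinely new computation is $\mathbb{E}[\delta_N(\boldsymbol\alpha^*)]=\mathbf{0}$, which is just the Horvitz--Thompson unbiasedness of the inverse-probability-weighted total. Conditioning on $\boldsymbol X_i$ and using $\mathbb{E}(S_i\mid\boldsymbol X_i)=\pi(\boldsymbol X_i,\boldsymbol\alpha^*)$ under correct specification gives
$$\mathbb{E}\left[\frac{S_i\boldsymbol X_i}{\pi(\boldsymbol X_i,\boldsymbol\alpha^*)}\right]=\mathbb{E}_{\boldsymbol X_i}\left[\frac{\boldsymbol X_i}{\pi(\boldsymbol X_i,\boldsymbol\alpha^*)}\,\mathbb{E}(S_i\mid\boldsymbol X_i)\right]=\mathbb{E}_{\boldsymbol X_i}[\boldsymbol X_i],$$
so every summand of $\delta_N(\boldsymbol\alpha^*)$ has mean zero and hence $\mathbb{E}[\delta_N(\boldsymbol\alpha^*)]=\mathbf{0}$. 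Combining the two zero-mean conditions and invoking the two-step consistency result of \citet{tsiatis2006semiparametric} then yields $\widehat{\boldsymbol\alpha}\xrightarrow{p}\boldsymbol\alpha^*$ and therefore $\widehat{\boldsymbol\theta}\xrightarrow{p}\boldsymbol\theta^*$.

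I expect the main obstacle to be not the unbiasedness algebra but the identification step. Because \eqref{eq:eq13} is an exactly-determined system of $\dim(\boldsymbol X)$ equations in $\dim(\boldsymbol\alpha)$ parameters (with $\boldsymbol X$ containing a constant component, so that one equation enforces $\sum_i S_i/\pi(\boldsymbol X_i,\boldsymbol\alpha)=N$), I would need the Jacobian $\mathbb{E}[\partial\delta_N/\partial\boldsymbol\alpha]$ at $\boldsymbol\alpha^*$ to be nonsingular to guarantee that $\boldsymbol\alpha^*$ is the unique root in the compact set $\Theta_\alpha$. For the logistic $\pi(\boldsymbol X,\boldsymbol\alpha)$ this Jacobian reduces to $-\mathbb{E}\big[\{1-\pi(\boldsymbol X,\boldsymbol\alpha^*)\}\boldsymbol X\boldsymbol X'\big]$, which is negative definite whenever $\boldsymbol X$ has a full-rank second-moment matrix and $\pi(\boldsymbol X,\boldsymbol\alpha^*)<1$ almost surely. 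Checking this full-rank condition is the one place where the CL argument requires care beyond directly copying the PL proof in Theorem \ref{thm4}.
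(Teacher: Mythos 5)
Your proof follows essentially the same route as the paper's: the same stacked estimating functions $\delta_N(\boldsymbol\alpha)$ and $\phi_N(\boldsymbol\theta,\widehat{\boldsymbol\alpha})$, the same appeal to the two-step Z-estimation framework of \citet{tsiatis2006semiparametric} reducing the claim to $\mathbb{E}[\delta_N(\boldsymbol\alpha^*)]=\mathbf{0}$ and $\mathbb{E}[\phi_N(\boldsymbol\theta^*,\boldsymbol\alpha^*)]=\mathbf{0}$, the same Horvitz--Thompson conditioning argument for the first condition, and the same inheritance of the second from Theorem \ref{thm1} with $\pi(\boldsymbol X,\boldsymbol\alpha^*)$ in place of $\pi(\boldsymbol X)$. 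Your closing remark on identification---that one also needs the Jacobian $-\mathbb{E}\left[\{1-\pi(\boldsymbol X,\boldsymbol\alpha^*)\}\boldsymbol X\boldsymbol X'\right]$ (correctly computed for the logistic specification) to be nonsingular so that $\boldsymbol\alpha^*$ is the unique root---is a valid point of extra care that the paper leaves implicit in its citation of \citet{tsiatis2006semiparametric}, but it does not alter the argument.
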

\begin{proof}
    In this case, the estimating equation consists of both selection model and disease model parameter estimation. The two step estimating equation is given by 
\begin{align*}
     & \delta_N(\boldsymbol \alpha)=\frac{1}{N}\sum_{i=1}^N\frac{S_i\boldsymbol X_i}{\pi(\boldsymbol X_i,\boldsymbol \alpha)}-\frac{1}{N}\sum_{i=1}^ N\boldsymbol X_i\\
     & \phi_N(\boldsymbol \theta,\widehat{\boldsymbol \alpha})= \frac{1}{N}\sum_{i=1}^{N}\frac{S_i}{\pi(\boldsymbol X_i,\widehat{\boldsymbol \alpha})}\left\{D_i \boldsymbol Z_i-\frac{e^{\boldsymbol\theta'\boldsymbol Z_i}}{(1+e^{\boldsymbol\theta'\boldsymbol Z_i})}\cdot \boldsymbol Z_i\right\}\cdot
\end{align*}
From \citet{tsiatis2006semiparametric} to show  $\widehat{\boldsymbol\theta}\xrightarrow{p}\boldsymbol\theta^*$ in a two step estimation procedure, we need to proof both $\mathbb{E}(\delta_N(\boldsymbol \alpha^*))=\mathbf{0}$ and $\mathbb{E}(\phi_N(\boldsymbol \theta^*,\boldsymbol \alpha^*))=\mathbf{0}$. At first we show that $\mathbb{E}(\delta_N(\boldsymbol \alpha^*))=\mathbf{0}$.
\begin{align*}
    &\mathbb{E}\left\{\frac{1}{N}\sum_{i=1}^ N \frac{S_i\boldsymbol X_i}{\pi(\boldsymbol X_i,\boldsymbol \alpha^*)}\right\}=
    \frac{1}{N}\sum_{i=1}^ N \mathbb{E}\left\{\frac{S_i\boldsymbol X_i}{\pi(\boldsymbol X_i,\boldsymbol \alpha^*)}\right\}\\
    &=\frac{1}{N}\sum_{i=1}^ N \mathbb{E}_{\boldsymbol X_i}\left[\mathbb{E}\left\{\left.\frac{S_i\boldsymbol X_i}{\pi(\boldsymbol X_i,\boldsymbol \alpha^*)} \right|\boldsymbol X_i\right\}\right]\\
    &=\frac{1}{N}\sum_{i=1}^ N \mathbb{E}_{\boldsymbol X_i}\left[\frac{\boldsymbol X_i}{\pi(\boldsymbol X_i,\boldsymbol \alpha^*)}\cdot \mathbb{E}\left(\left. S_i\right|\boldsymbol X_i\right)\right]\\
    &=\frac{1}{N}\sum_{i=1}^ N \mathbb{E}_{\boldsymbol X_i}\left[\frac{\boldsymbol X_i}{\pi(\boldsymbol X_i,\boldsymbol \alpha^*)}\cdot \pi(\boldsymbol X_i,\boldsymbol \alpha^*)\right] = \frac{1}{N}\sum_{i=1}^ N \mathbb{E}_{\boldsymbol X_i}(\boldsymbol X_i)\cdot
\end{align*}
Using this above expression we obtain that 
\begin{align*}
    \mathbb{E}(\delta_N(\boldsymbol \alpha^*))=\mathbb{E}\left\{\frac{1}{N}\sum_{i=1}^ N \frac{S_i\boldsymbol X_i}{\pi(\boldsymbol X_i,\boldsymbol \alpha^*)}-\frac{1}{N}\sum_{i=1}^N\boldsymbol X_i\right\}=\mathbf{0}\cdot
\end{align*}
The proof to show that $\mathbb{E}(\phi_N(\boldsymbol \theta^*,\boldsymbol \alpha^*))=\mathbf{0}$ is exactly as Theorem \ref{thm1} except we use $\pi(\boldsymbol X_i,\boldsymbol \alpha^*)$ instead of $\pi(\boldsymbol X_i)$. Therefore we obtain $\widehat{\boldsymbol \theta}$ is a consistent estimator of $\boldsymbol \theta^*$.
\end{proof}
\begin{theorem}
 Under assumptions of Theorem \ref{thm7} and consistency of $\widehat{\boldsymbol \theta}$, the asymptotic distribution of $\widehat{\boldsymbol \theta}$ using CL is given by
\begin{equation}
\sqrt{N}(\widehat{\boldsymbol \theta}-\boldsymbol\theta^*)\xrightarrow{d}\mathcal{N}(\mathbf{0},V). 
\end{equation}
where
\begin{align*}
    & V=(G_{\boldsymbol \theta^*})^{-1}.\mathbb{E}[\{\boldsymbol g(\boldsymbol \theta^*,\boldsymbol \alpha^*)+G_{\boldsymbol \alpha^*}.\boldsymbol \Psi(\boldsymbol \alpha^*)\}\{\boldsymbol g(\boldsymbol \theta^*,\boldsymbol \alpha^*)+G_{\boldsymbol \alpha^*}.\boldsymbol \Psi(\boldsymbol \alpha^*)\}'].(G_{\boldsymbol \theta^*}^{-1})^{'}\\
    & G_{\boldsymbol \theta^*} = \mathbb{E}\left\{-\frac{S}{\pi(\boldsymbol X,\boldsymbol \alpha^*)}\cdot \frac{e^{\boldsymbol \theta^{*'}\boldsymbol Z}}{(1+e^{\boldsymbol \theta^{*'}\boldsymbol Z})^2}\cdot \boldsymbol Z\boldsymbol Z'\right\}\\
    & \boldsymbol g(\boldsymbol \theta^*,\boldsymbol \alpha^*) = \frac{S}{\pi(\boldsymbol X,\boldsymbol \alpha^*)}\left\{D \boldsymbol Z-\frac{e^{\boldsymbol \theta^{*'}\boldsymbol Z}}{(1+e^{\boldsymbol \theta^{*'}\boldsymbol Z})}\cdot \boldsymbol Z\right\}\\
    & G_{\boldsymbol \alpha^*}=\mathbb{E}\left[-\frac{S}{\pi(\boldsymbol X,\boldsymbol \alpha^*)}\cdot \{1-\pi(\boldsymbol X,\boldsymbol \alpha^*)\}\left\{D-\frac{e^{\boldsymbol \theta^{*'}\boldsymbol Z}}{(1+e^{\boldsymbol \theta^{*'}\boldsymbol Z})}\right\}\cdot \boldsymbol Z\boldsymbol X'\right]\\
     & \boldsymbol \Psi(\boldsymbol \alpha^*)= \mathbb{E}\left[\frac{S}{\pi(\boldsymbol X,\boldsymbol \alpha^*)}\cdot\{1-\pi(\boldsymbol X,\boldsymbol \alpha^*)\}\cdot\boldsymbol X\boldsymbol X'\right]^{-1}.\left\{\frac{S \boldsymbol X}{\pi(\boldsymbol X,\boldsymbol \alpha^*)}-\boldsymbol X\right\}.
\end{align*}\label{thm8}
\end{theorem}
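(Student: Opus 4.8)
The plan is to recognize that Theorem \ref{thm8} has exactly the stacked two-step Z-estimation structure already handled for the pseudolikelihood estimator in Theorem \ref{thm5}; only the first-step (selection model) estimating function changes, from the pseudolikelihood form to the calibration form. Since consistency of $\widehat{\boldsymbol\theta}$ is supplied by Theorem \ref{thm7}, I would first record the influence function of the first-step estimator. Writing the per-observation calibration estimating function as $\boldsymbol h(\boldsymbol\alpha)=\frac{S\boldsymbol X}{\pi(\boldsymbol X,\boldsymbol\alpha)}-\boldsymbol X$, the computation inside Theorem \ref{thm7} already shows $\mathbb{E}[\boldsymbol h(\boldsymbol\alpha^*)]=\mathbf 0$ via $\mathbb{E}(S\mid\boldsymbol X)=\pi(\boldsymbol X,\boldsymbol\alpha^*)$, which together with compactness of $\Theta_{\boldsymbol\alpha}$ (assumption \textit{A}5) gives $\widehat{\boldsymbol\alpha}\xrightarrow{p}\boldsymbol\alpha^*$ and the standard linearization $\sqrt N(\widehat{\boldsymbol\alpha}-\boldsymbol\alpha^*)=-H^{-1}\frac{1}{\sqrt N}\sum_i\boldsymbol h_i(\boldsymbol\alpha^*)+o_p(1)$, i.e. the influence function $\boldsymbol\Psi(\boldsymbol\alpha^*)=-H^{-1}\boldsymbol h(\boldsymbol\alpha^*)$ with $H=\mathbb{E}[\partial\boldsymbol h/\partial\boldsymbol\alpha']\big|_{\boldsymbol\alpha^*}$.

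Next I would invoke the two-step sandwich result of \citet{tsiatis2006semiparametric}: the influence function of $\widehat{\boldsymbol\theta}$ is $(G_{\boldsymbol\theta^*})^{-1}\{\boldsymbol g(\boldsymbol\theta^*,\boldsymbol\alpha^*)+G_{\boldsymbol\alpha^*}\boldsymbol\Psi(\boldsymbol\alpha^*)\}$, where the correction term $G_{\boldsymbol\alpha^*}\boldsymbol\Psi(\boldsymbol\alpha^*)$ propagates the first-step uncertainty into the second-step score. A central limit theorem on the per-observation influence functions, followed by the delta method through $(G_{\boldsymbol\theta^*})^{-1}$, then delivers $\sqrt N(\widehat{\boldsymbol\theta}-\boldsymbol\theta^*)\xrightarrow{d}\mathcal N(\mathbf 0,V)$ with the displayed $V$. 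All that remains is to identify the four building blocks by direct differentiation.

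The terms $\boldsymbol g(\boldsymbol\theta^*,\boldsymbol\alpha^*)$, $G_{\boldsymbol\theta^*}$ and $G_{\boldsymbol\alpha^*}$ coincide verbatim with the pseudolikelihood case of Theorem \ref{thm5}, because the second-step IPW score $\boldsymbol g$ is literally the same function of $(\boldsymbol\theta,\boldsymbol\alpha)$; in particular $G_{\boldsymbol\alpha^*}$ again follows from $\partial\pi/\partial\boldsymbol\alpha=\pi(1-\pi)\boldsymbol X$ for the logistic specification. The one genuinely new calculation is $H$: differentiating $\boldsymbol h(\boldsymbol\alpha)=\frac{S\boldsymbol X}{\pi(\boldsymbol X,\boldsymbol\alpha)}-\boldsymbol X$ gives $\partial\boldsymbol h/\partial\boldsymbol\alpha'=-\frac{S\{1-\pi(\boldsymbol X,\boldsymbol\alpha)\}}{\pi(\boldsymbol X,\boldsymbol\alpha)}\boldsymbol X\boldsymbol X'$, hence $H=\mathbb{E}\!\left[-\frac{S\{1-\pi(\boldsymbol X,\boldsymbol\alpha^*)\}}{\pi(\boldsymbol X,\boldsymbol\alpha^*)}\boldsymbol X\boldsymbol X'\right]$, and substituting into $\boldsymbol\Psi(\boldsymbol\alpha^*)=-H^{-1}\boldsymbol h(\boldsymbol\alpha^*)$ reproduces exactly the stated expression for $\boldsymbol\Psi$.

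The main obstacle is conceptual rather than computational. In classical survey calibration the population total on the right-hand side of \eqref{eq:eq13} is treated as a fixed constant, but under the random-superpopulation asymptotics of assumptions \textit{A}2--\textit{A}3 the summands $\boldsymbol X_i$ are themselves random, so the calibration target $N^{-1}\sum_i\boldsymbol X_i$ contributes to the sampling variability. The key point I would emphasize is that this randomness is already correctly accounted for by retaining the $-\boldsymbol X$ term inside the estimating function $\boldsymbol h$, which is precisely why $\boldsymbol h(\boldsymbol\alpha^*)$ is mean zero and why its influence function enters $V$ through $\boldsymbol\Psi$. Beyond this, the only care needed is the usual regularity bookkeeping — justifying the interchange of expectation and differentiation, finiteness of the relevant second moments so the CLT applies, and invertibility of $G_{\boldsymbol\theta^*}$ and $H$ — which would be imposed by analogues of assumption \textit{A}8 and are otherwise routine.
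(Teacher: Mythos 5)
Your proposal is correct and follows essentially the same route as the paper's proof: both invoke \citet{tsiatis2006semiparametric}'s two-step Z-estimation argument to get the influence function $(G_{\boldsymbol\theta^*})^{-1}\{\boldsymbol g(\boldsymbol\theta^*,\boldsymbol\alpha^*)+G_{\boldsymbol\alpha^*}\boldsymbol\Psi(\boldsymbol\alpha^*)\}$ and then identify $G_{\boldsymbol\theta^*}$, $G_{\boldsymbol\alpha^*}$, $H$, and $\boldsymbol\Psi(\boldsymbol\alpha^*)=-H^{-1}\boldsymbol h(\boldsymbol\alpha^*)$ by direct differentiation, with your key computation $\partial\boldsymbol h/\partial\boldsymbol\alpha'=-\frac{S\{1-\pi(\boldsymbol X,\boldsymbol\alpha^*)\}}{\pi(\boldsymbol X,\boldsymbol\alpha^*)}\,\boldsymbol X\boldsymbol X'$ matching the paper's exactly. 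Your added observations, that $\boldsymbol g$, $G_{\boldsymbol\theta^*}$, $G_{\boldsymbol\alpha^*}$ carry over verbatim from the PL case and that retaining the $-\boldsymbol X$ term in $\boldsymbol h$ is what correctly absorbs the randomness of the calibration target under the superpopulation asymptotics of \textit{A}2--\textit{A}3, are accurate glosses on the same argument rather than a different method.
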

\begin{proof}
Let $\boldsymbol h(\boldsymbol \alpha^*)=\left\{\frac{S\boldsymbol X}{\pi(\boldsymbol X,\boldsymbol \alpha^*)}-\boldsymbol X\right\}$.
From \citet{tsiatis2006semiparametric} under assumptions of Theorem \ref{thm7} and consistency of $\widehat{\boldsymbol \theta}$ for a two step Z-estimation problem, we obtain that 
$$\sqrt{N}(\widehat{\boldsymbol \theta}-\boldsymbol\theta^*)\xrightarrow{d}\mathcal{N}(0,(G_{\boldsymbol \theta^*})^{-1}.\mathbb{E}[\{\boldsymbol g(\boldsymbol \theta^*,\boldsymbol \alpha^*)+G_{\boldsymbol \alpha^*}.\boldsymbol \Psi(\boldsymbol \alpha^*)\}\{\boldsymbol g(\boldsymbol \theta^*,\boldsymbol \alpha^*)+G_{\boldsymbol \alpha^*}.\boldsymbol \Psi(\boldsymbol \alpha^*)\}'].(G_{\boldsymbol \theta^*}^{-1})^{'}\cdot$$
We will derive the expression of each of the terms in the above expression.
\begin{align*}
    & G_{\boldsymbol \theta^*}=\left.\mathbb{E}\left\{\frac{\partial g(\boldsymbol \theta,\boldsymbol \alpha^*)}{\partial \boldsymbol \theta}\right\}\right|_{\boldsymbol\theta=\boldsymbol \theta^*}\hspace{1cm} G_{\boldsymbol \alpha}=\left.G_{\boldsymbol \alpha^*}=\mathbb{E}\left\{\frac{\partial g(\boldsymbol \theta^*,\boldsymbol \alpha)}{\partial \boldsymbol \alpha}\right\}\right|_{\boldsymbol\alpha=\boldsymbol \alpha^*}\\
    & H = \left.\mathbb{E}\left\{\frac{\partial \boldsymbol h(\boldsymbol \alpha)}{\partial \boldsymbol \alpha}\right\}\right|_{\boldsymbol\alpha=\boldsymbol \alpha^*}\hspace{1.9cm} \boldsymbol \Psi(\boldsymbol \alpha^*)=-H^{-1}\boldsymbol h(\boldsymbol \alpha^*) \cdot
\end{align*}
First we calculate $G_{\boldsymbol \theta^*}$.
\subsubsection*{Calculation for $G_{\boldsymbol \theta}$}
\begin{align*}
    & \left.\frac{\partial \boldsymbol g(\boldsymbol \theta,\boldsymbol \alpha^*)}{\partial \boldsymbol \theta}\right|_{\boldsymbol\theta=\boldsymbol \theta^*}= \left.\frac{\partial}{\partial \boldsymbol \theta}\left[\frac{S}{\pi(\boldsymbol X,\boldsymbol \alpha^*)}\left\{D\boldsymbol Z-\frac{e^{\boldsymbol \theta^{*'}\boldsymbol Z}}{(1+e^{\boldsymbol \theta^{*'}\boldsymbol Z})}\cdot \boldsymbol Z\right\}\right]\right|_{\boldsymbol\theta=\boldsymbol \theta^*}\\
    & = -\frac{S}{\pi(\boldsymbol X,\boldsymbol \alpha^*)} \cdot \frac{e^{\boldsymbol \theta^{*'}\boldsymbol Z}}{(1+e^{\boldsymbol \theta^{*'}\boldsymbol Z})^2}\cdot \boldsymbol Z\boldsymbol Z' \implies G_{\boldsymbol \theta^*} = \mathbb{E}\left[-\frac{S}{\pi(\boldsymbol X,\boldsymbol \alpha^*)} \cdot \frac{e^{\boldsymbol \theta^{*'}\boldsymbol Z}}{(1+e^{\boldsymbol \theta^{*'}\boldsymbol Z})^2}\cdot \boldsymbol Z \boldsymbol Z'\right]\cdot
\end{align*}
Next we calculate $G_{\boldsymbol \alpha^*}$.
\subsubsection*{Calculation for $G_{\boldsymbol \alpha^*}$}
\begin{align*}
    & \left.\frac{\partial \boldsymbol g(\boldsymbol \theta^*,\boldsymbol \alpha)}{\partial \boldsymbol \alpha}\right|_{\boldsymbol\alpha=\boldsymbol \alpha^*}= \left.\frac{\partial}{\partial \boldsymbol \alpha}\left[\frac{S}{\pi(\boldsymbol X,\boldsymbol \alpha)}\left\{D\boldsymbol Z-\frac{e^{\boldsymbol \theta^{*'}\boldsymbol Z}}{(1+e^{\boldsymbol \theta^{*'}\boldsymbol Z})}\cdot \boldsymbol Z\right\}\right]\right|_{\boldsymbol\alpha=\boldsymbol \alpha^*}\\
    & = -\frac{S}{\pi(\boldsymbol X,\boldsymbol \alpha^*)^2}\cdot\left\{D\boldsymbol Z-\frac{e^{\boldsymbol \theta^{*'}\boldsymbol Z}}{(1+e^{\boldsymbol \theta^{*'}\boldsymbol Z})}\cdot \boldsymbol Z\right\}\cdot \left.\frac{\partial \pi(\boldsymbol X,\boldsymbol \alpha)}{\partial \boldsymbol \alpha}\right|_{\boldsymbol\alpha=\boldsymbol \alpha^*}\\
    & =-\frac{S}{\pi(\boldsymbol X,\boldsymbol \alpha^*)}\{1-\pi(\boldsymbol X,\boldsymbol \alpha^*)\}\left\{D-\frac{e^{\boldsymbol \theta^{*'}\boldsymbol Z}}{(1+e^{\boldsymbol \theta^{*'}\boldsymbol Z})}\right\}\boldsymbol Z\boldsymbol X'
\end{align*}
Therefore we obtain
\begin{align*}
     G_{\boldsymbol \alpha^*} = \mathbb{E}\left[-\frac{S}{\pi(\boldsymbol X,\boldsymbol \alpha^*)}\{1-\pi(\boldsymbol X,\boldsymbol \alpha^*)\}\left\{D-\frac{e^{\boldsymbol \theta^{*'}\boldsymbol Z}}{(1+e^{\boldsymbol \theta^{*'}\boldsymbol Z})}\right\}\boldsymbol Z\boldsymbol X'\right] \cdot
\end{align*}
Next we calculate $\Psi(\boldsymbol \alpha^*)$.
\subsubsection*{Calculation for $\boldsymbol\Psi(\boldsymbol \alpha^*)$}
\begin{align*}
    & \left.\frac{ \partial \boldsymbol h(\boldsymbol \alpha)}{\partial \boldsymbol \alpha}\right|_{\boldsymbol \alpha=\boldsymbol \alpha^*}= \left.\frac{ \partial}{\partial \boldsymbol \alpha}\left\{\frac{S\boldsymbol X}{\pi(\boldsymbol X,\boldsymbol \alpha)}-\boldsymbol X\right\}\right|_{\boldsymbol \alpha=\boldsymbol \alpha^*}=-\frac{S\boldsymbol X}{\pi(\boldsymbol X,\boldsymbol \alpha^*)^2}\cdot \left.\frac{ \partial}{\partial \alpha}\pi(\boldsymbol X,\boldsymbol \alpha)\right|_{\boldsymbol \alpha=\boldsymbol \alpha^*}\\
    & = -\frac{S\boldsymbol X\boldsymbol X'}{\pi(\boldsymbol X,\boldsymbol \alpha^*)^2}\cdot \pi(\boldsymbol X,\boldsymbol \alpha^*)\{1- \pi(\boldsymbol X,\boldsymbol \alpha^*\}\cdot
\end{align*}
This implies
\begin{align*}
   & H = \mathbb{E}\left[-\frac{S\boldsymbol X\boldsymbol X'}{\pi(\boldsymbol X,\boldsymbol \alpha^*)}\cdot \{1- \pi(\boldsymbol X,\boldsymbol \alpha^*\}\right]\cdot
\end{align*}
Therefore we obtain 
\begin{align*}
    & \boldsymbol \Psi(\boldsymbol \alpha^*)= \mathbb{E}\left[\frac{S\boldsymbol X\boldsymbol X'}{\pi(\boldsymbol X,\boldsymbol \alpha^*)}\cdot \{1- \pi(\boldsymbol X,\boldsymbol \alpha^*\}\right]^{-1}\cdot \left\{\frac{S\boldsymbol X}{\pi(\boldsymbol X,\boldsymbol \alpha^*)}-\boldsymbol X\right\}\cdot
\end{align*}
This gives the asymptotic distribution of $\widehat{\boldsymbol \theta}$ for PL.
\end{proof}
\noindent
Next we derive a consistent estimator of asymptotic variance of $\widehat{\boldsymbol\theta}$. Apart from assumptions of Theorem \ref{thm7} and \ref{thm8} we make the following additional assumption to derive a consistent estimator of asymptotic variance of $\widehat{\boldsymbol\theta}$.\\

\noindent
Let $\eta=(\boldsymbol \theta,\boldsymbol \alpha)$ and $\mathcal{N}=\Theta \times \Theta_{\boldsymbol \alpha}$.\\

\noindent
\textit{A}9.  Each of the following expectations are finite.
\begin{align*}
    & \mathbb{E}[\sup_{\boldsymbol \eta \in \mathcal{N}}(G_{\boldsymbol \theta})]<\infty \hspace{0.8cm} 
    \mathbb{E}[\sup_{\boldsymbol \eta \in \mathcal{N}}(G_{\boldsymbol \alpha})]<\infty \hspace{0.8cm} \mathbb{E}[\sup_{\boldsymbol \alpha \in \Theta_{\boldsymbol \alpha}}(H)]<\infty\\
    & \mathbb{E}\left(\sup_{\boldsymbol \eta \in \mathcal{N}}\left[S \cdot \left\{\frac{1}{\pi(\boldsymbol X,\boldsymbol \alpha)}\right\}^2\left\{D-\frac{e^{\boldsymbol \theta' \boldsymbol Z}}{(1+e^{\boldsymbol \theta'\boldsymbol Z})}\right\}^2\cdot \boldsymbol Z \boldsymbol Z'\right]\right)<\infty\\
    & \mathbb{E}\left(\sup_{\boldsymbol \eta \in \mathcal{N}}\left[\frac{S}{\pi(\boldsymbol X,\boldsymbol \alpha)} \cdot \left\{\frac{S}{\pi(\boldsymbol X,\boldsymbol \alpha)}-1\right\} \left\{D-\frac{e^{\boldsymbol \theta'\boldsymbol Z}}{(1+e^{\boldsymbol \theta' \boldsymbol Z})}\right\}\cdot \boldsymbol X\boldsymbol Z'\right]\right)<\infty\\
    & \mathbb{E}\left(\sup_{\boldsymbol \alpha \in \Theta_{\boldsymbol \alpha}}\left[\frac{S}{\pi(\boldsymbol X,\boldsymbol \alpha)^2}\cdot \{1-\pi(\boldsymbol X,\boldsymbol \alpha)\}\cdot \boldsymbol X \boldsymbol X' \right]\right) <\infty \cdot
\end{align*}
\begin{theorem}
    Under all the assumptions of Theorems \ref{thm7}, \ref{thm8} and \textit{A}9, $\frac{1}{N} \cdot \widehat{G_{\boldsymbol \theta}}^{-1}\cdot \widehat{E}\cdot (\widehat{G_{\boldsymbol \theta}}^{-1})^{'}$ is a consistent estimator of the variance of $\widehat{\boldsymbol \theta}$ for CL where
\begin{align*}
   & \widehat{G_{\boldsymbol \theta}}=-\frac{1}{N}\sum_{i=1}^N S_i \cdot \frac{1}{\pi(\boldsymbol X_i,\widehat{\boldsymbol \alpha})}\cdot \frac{e^{\widehat{\boldsymbol \theta}'\boldsymbol Z_i}}{(1+e^{\widehat{\boldsymbol \theta}'\boldsymbol Z_i})^2}\cdot\boldsymbol Z_i \boldsymbol Z_i'\\ 
   &\widehat{H} = -\frac{1}{N}\sum_{i=1}^N \frac{S_i}{\pi(\boldsymbol X_i,\widehat{\boldsymbol \alpha})} \cdot \{1-\pi(\boldsymbol X_i,\widehat{\boldsymbol \alpha})\}\cdot \boldsymbol X_i \cdot \boldsymbol X_i'\\
   & \widehat{G_{\boldsymbol \alpha}}=-\frac{1}{N}\sum_{i=1}^N S_i \cdot \frac{\{1-\pi(\boldsymbol X_i,\widehat{\boldsymbol \alpha})\}}{\pi(\boldsymbol X_i,\widehat{\boldsymbol \alpha})}\cdot\left\{D_i-\frac{e^{\widehat{\boldsymbol \theta}'\boldsymbol Z_i}}{(1+e^{\widehat{\boldsymbol \theta}'\boldsymbol Z_i})}\right\}\boldsymbol Z_i \boldsymbol X_i'\\
   & \widehat{E}_1 =\frac{1}{N}\sum_{i =1}^N S_i \cdot \left\{\frac{1}{\pi(\boldsymbol X_i,\widehat{\boldsymbol \alpha})}\right\}^2\left\{D_i-\frac{e^{\widehat{\boldsymbol \theta}'\boldsymbol Z_i}}{(1+e^{\widehat{\boldsymbol\theta}'\boldsymbol Z_i})}\right\}^2\cdot \boldsymbol Z_i \boldsymbol Z_i'\\
    & \widehat{E}_2 =  \widehat{E}_3'=\frac{1}{N}\sum_{i=1}^N \frac{S_i}{\pi(\boldsymbol X_i,\widehat{\boldsymbol \alpha})} \cdot \widehat{G_{\boldsymbol \alpha}}\cdot \widehat{H}^{-1} \cdot \left\{\frac{S_i}{\pi(\boldsymbol X_i,\widehat{\boldsymbol \alpha})}-1\right\} \left\{D_i-\frac{e^{\widehat{\boldsymbol \theta}'\boldsymbol Z_i}}{(1+e^{\widehat{\boldsymbol \theta}' \boldsymbol Z_i})}\right\}\cdot \boldsymbol X_i\boldsymbol Z_i'\\
    & \widehat{E_4} = \frac{1}{N} \sum_{i=1}^N\widehat{G_{\boldsymbol \alpha}} \cdot \widehat{H}^{-1}\left[\frac{S_i}{\pi(\boldsymbol X_i,\widehat{\boldsymbol \alpha})^2}\cdot \{1-\pi(\boldsymbol X_i,\widehat{\boldsymbol \alpha})\}\right]\cdot \boldsymbol X_i \cdot \boldsymbol X_i' \cdot (\widehat{H}^{-1})'\cdot(\widehat{G_{\boldsymbol \alpha}})'\\
    &\widehat{E}=\widehat{E}_1-\widehat{E}_2-\widehat{E}_3 + \widehat{E}_4\cdot
\end{align*}\label{thm9}
\end{theorem}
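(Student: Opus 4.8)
The plan is to follow the template already established for the known-weight case in Theorem~\ref{thm3} and for the two-step PL estimator in the variance theorem preceding this one, since CL is likewise a two-step Z-estimator whose sandwich variance from Theorem~\ref{thm8} has the identical outer structure $V=(G_{\boldsymbol\theta^*})^{-1}\,\mathbb{E}[\{\boldsymbol g+G_{\boldsymbol\alpha^*}\boldsymbol\Psi\}\{\boldsymbol g+G_{\boldsymbol\alpha^*}\boldsymbol\Psi\}']\,(G_{\boldsymbol\theta^*}^{-1})'$. Accordingly, the goal reduces to showing that each plug-in sample matrix converges in probability to its population counterpart, namely $\widehat{G_{\boldsymbol\theta}}\xrightarrow{p}G_{\boldsymbol\theta^*}$, $\widehat{G_{\boldsymbol\alpha}}\xrightarrow{p}G_{\boldsymbol\alpha^*}$, $\widehat{H}\xrightarrow{p}H$, and $\widehat{E}\xrightarrow{p}\mathbb{E}[\{\boldsymbol g+G_{\boldsymbol\alpha^*}\boldsymbol\Psi\}\{\boldsymbol g+G_{\boldsymbol\alpha^*}\boldsymbol\Psi\}']$, after which the Continuous Mapping Theorem and one final division by $N$ deliver the claim exactly as in the closing display chain of Theorem~\ref{thm3}.

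First I would treat the joint parameter $\boldsymbol\eta=(\boldsymbol\theta,\boldsymbol\alpha)$ and invoke the Uniform Law of Large Numbers on the compact set $\mathcal{N}=\Theta\times\Theta_{\boldsymbol\alpha}$, where the integrability conditions of assumption \textit{A}9 are exactly what is needed to license the ULLN for each of the summands defining $\widehat{G_{\boldsymbol\theta}}$, $\widehat{G_{\boldsymbol\alpha}}$, $\widehat{H}$ and $\widehat{E}_1$. This yields uniform convergence over $\boldsymbol\eta\in\mathcal{N}$ of each empirical average to the corresponding expectation. I would then combine the consistency results $\widehat{\boldsymbol\theta}\xrightarrow{p}\boldsymbol\theta^*$ (Theorem~\ref{thm7}) and $\widehat{\boldsymbol\alpha}\xrightarrow{p}\boldsymbol\alpha^*$ with the triangle-inequality argument carried out verbatim in equations \eqref{eq:st12}--\eqref{eq:st14}: the deviation of each plug-in matrix evaluated at $\widehat{\boldsymbol\eta}$ from its target splits into a uniform-convergence piece (vanishing by the ULLN) and a piece that vanishes by the Continuous Mapping Theorem applied to the smooth limiting expectation evaluated at $\widehat{\boldsymbol\eta}$.

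The genuinely CL-specific content lives in the algebraic bookkeeping showing that $\widehat{E}=\widehat{E}_1-\widehat{E}_2-\widehat{E}_3+\widehat{E}_4$ reconstructs the correct quadratic form. Writing $\boldsymbol\Psi(\boldsymbol\alpha^*)=-H^{-1}\boldsymbol h(\boldsymbol\alpha^*)$ with the calibration influence $\boldsymbol h(\boldsymbol\alpha^*)=S\boldsymbol X/\pi(\boldsymbol X,\boldsymbol\alpha^*)-\boldsymbol X$, the outer product expands into four terms: $\boldsymbol g\boldsymbol g'$, the two mirror cross terms $-G_{\boldsymbol\alpha^*}H^{-1}\boldsymbol h\,\boldsymbol g'$ and its transpose, and $G_{\boldsymbol\alpha^*}H^{-1}\boldsymbol h\boldsymbol h'(H^{-1})'G_{\boldsymbol\alpha^*}'$. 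I would verify that $\widehat{E}_1$, $\widehat{E}_2$ (equal to $\widehat{E}_3'$), and $\widehat{E}_4$ are precisely the sample analogues of these expectations, using the CL forms of $G_{\boldsymbol\alpha}$ and $H$ from Theorem~\ref{thm8} together with the idempotency $S_i^2=S_i$ to line up the $S_i/\pi$ factors appearing in $\widehat{E}_2$ and $\widehat{E}_4$. Each sample analogue then converges by the ULLN-plus-CMT machinery of the preceding paragraph.

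The main obstacle is this matching step: unlike the PL case, the calibration influence $\boldsymbol h$ contains the bare population total $\boldsymbol X$ (the right-hand side of the calibration equation \eqref{eq:eq13}), so the cross and quadratic terms mix inverse-weighted and unweighted contributions, and one must check that the sample cross term $\widehat{E}_2$ with its factor $\{S_i/\pi-1\}$ indeed reproduces $\mathbb{E}[G_{\boldsymbol\alpha^*}H^{-1}\boldsymbol h\,\boldsymbol g']$ rather than a spurious term. Once the four-term identification is confirmed, consistency of $\widehat{E}$ follows by Slutsky's theorem from the componentwise convergences, and combining with $\widehat{G_{\boldsymbol\theta}}\xrightarrow{p}G_{\boldsymbol\theta^*}$ gives $\widehat{V}\xrightarrow{p}V$; the final passage to $\widehat{V}/N$ as a consistent estimator of $\mathrm{Var}(\widehat{\boldsymbol\theta})$ is identical to the end of Theorem~\ref{thm3}.
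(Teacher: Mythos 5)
Your proposal follows essentially the same route as the paper: a ULLN-plus-Continuous-Mapping argument on the joint parameter $\boldsymbol\eta=(\boldsymbol\theta,\boldsymbol\alpha)$ over $\Theta\times\Theta_{\boldsymbol\alpha}$ under \textit{A}9 (exactly the Theorem~\ref{thm3} machinery the paper invokes), componentwise consistency of $\widehat{G_{\boldsymbol\theta}}$, $\widehat{G_{\boldsymbol\alpha}}$, $\widehat{H}$, $\widehat{E}_1$--$\widehat{E}_4$ via the four-term expansion of $\mathbb{E}[\{\boldsymbol g+G_{\boldsymbol\alpha^*}\boldsymbol\Psi\}\{\boldsymbol g+G_{\boldsymbol\alpha^*}\boldsymbol\Psi\}']$ with $\boldsymbol\Psi=-H^{-1}\boldsymbol h$, and then Slutsky plus the closing step of Theorem~\ref{thm3}. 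The one bookkeeping point you flag as the ``main obstacle'' is resolved exactly as you anticipate: the cross term matches via $S_i^2=S_i$, and in $\widehat{E}_4$ the bare $\boldsymbol X_i\boldsymbol X_i'$ term of $\boldsymbol h\boldsymbol h'$ is replaced by its (equally consistent) inverse-weighted analogue $\frac{S_i}{\pi(\boldsymbol X_i,\widehat{\boldsymbol\alpha})}\boldsymbol X_i\boldsymbol X_i'$, which collapses the three terms to the stated $\frac{S_i}{\pi(\boldsymbol X_i,\widehat{\boldsymbol\alpha})^2}\{1-\pi(\boldsymbol X_i,\widehat{\boldsymbol\alpha})\}$ factor.
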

\begin{proof}
    Under all the assumptions of Theorems \ref{thm7}, \ref{thm8} and \textit{A}9 using ULLN and Continuous Mapping Theorem, the proof of consistency for each of the following sample quantities are exactly same as the approach in Theorem \ref{thm3}. Using the exact same steps on the joint parameters $\boldsymbol \eta$ instead of $\boldsymbol \theta$ (as in Theorem \ref{thm3}) we obtain
    \begin{align*}
     & \widehat{G_{\boldsymbol \theta}}=-\frac{1}{N}\sum_{i=1}^N S_i \cdot \frac{1}{\pi(\boldsymbol X_i,\hat{\boldsymbol \alpha})}\cdot \frac{e^{\widehat{\boldsymbol \theta}'\boldsymbol Z_i}}{(1+e^{\widehat{\boldsymbol \theta}'\boldsymbol Z_i})^2}\cdot\boldsymbol Z_i \boldsymbol Z_i' \\
     & \xrightarrow{p} G_{\theta} = \mathbb{E}\left\{-\frac{S}{\pi(\boldsymbol X,\boldsymbol\alpha^*)}\cdot \frac{e^{\boldsymbol \theta^{*'}\boldsymbol Z}}{(1+e^{\boldsymbol \theta^{*'}\boldsymbol Z})^2}.\boldsymbol Z\boldsymbol Z'\right\}\cdot
\end{align*}
Similarly we obtain
\begin{align*}
     & \widehat{G_{\boldsymbol \alpha}}=-\frac{1}{N}\sum_{i=1}^N S_i \cdot \frac{\{1-\pi(\boldsymbol X_i,\widehat{\boldsymbol \alpha})\}}{\pi(\boldsymbol X_i,\widehat{\boldsymbol \alpha})}\cdot\left\{D_i-\frac{e^{\widehat{\boldsymbol \theta}' \boldsymbol Z_i}}{(1+e^{\widehat{\boldsymbol \theta}'\boldsymbol Z_i})}\right\}\boldsymbol Z_i \boldsymbol X_i' \\
     & \xrightarrow{p} G_{\boldsymbol \alpha} = \mathbb{E}\left[-\frac{S}{\pi(\boldsymbol X,\boldsymbol \alpha^*)}\{1-\pi(\boldsymbol X,\boldsymbol \alpha^*)\}\left\{D-\frac{e^{\boldsymbol \theta^{*'}\boldsymbol Z}}{(1+e^{\boldsymbol \theta^{*'}\boldsymbol Z})}\right\}\boldsymbol Z\boldsymbol X'\right]\cdot
\end{align*}
\begin{align*}
     & \widehat{H}= -\frac{1}{N}\sum_{i=1}^N \frac{S_i}{\pi(\boldsymbol X_i,\hat{\boldsymbol \alpha})} \cdot \{1-\pi(\boldsymbol X_i,\hat{\boldsymbol \alpha})\}\cdot \boldsymbol X_i \cdot \boldsymbol X_i'\\
     & \xrightarrow{p} H = \mathbb{E}\left[\frac{S}{\pi(\boldsymbol X,\boldsymbol \alpha^*)}\cdot\{1-\pi(\boldsymbol X,\boldsymbol \alpha^*)\}\cdot\boldsymbol X\boldsymbol X'\right]\cdot
\end{align*}
\begin{align*}
     &  \widehat{E}_1 =\frac{1}{N}\sum_{i =1}^N S_i \cdot \left\{\frac{1}{\pi(\boldsymbol X_i,\hat{\boldsymbol \alpha})}\right\}^2\left\{D_i-\frac{e^{\widehat{\boldsymbol \theta}' \boldsymbol Z_i}}{(1+e^{\hat{\theta}'\boldsymbol Z_i})}\right\}^2\cdot \boldsymbol Z_i \boldsymbol Z_i'\\
     & \xrightarrow{p}  \mathbb{E}\left[S \cdot \left\{\frac{1}{\pi(\boldsymbol X,\boldsymbol \alpha^*)}\right\}^2\cdot \left\{D-\frac{e^{\boldsymbol \theta^{*'}\boldsymbol Z_i}}{(1+e^{\boldsymbol \theta^{*'}\boldsymbol Z})}\right\}^2\cdot \boldsymbol Z\boldsymbol Z'\right]\cdot
\end{align*}
\begin{align*}
     & \widehat{E}_2 =  \widehat{E}_3'=\frac{1}{N}\sum_{i=1}^N \frac{S_i}{\pi(\boldsymbol X_i,\widehat{\boldsymbol \alpha})} \cdot \widehat{G_{\boldsymbol \alpha}}\cdot \widehat{H}^{-1} \cdot \left\{\frac{S_i}{\pi(\boldsymbol X_i,\widehat{\boldsymbol \alpha})}-1\right\} \left\{D_i-\frac{e^{\widehat{\boldsymbol \theta}'\boldsymbol Z_i}}{(1+e^{\widehat{\boldsymbol \theta}'\boldsymbol Z_i})}\right\}\cdot \boldsymbol X_i\boldsymbol Z_i'\\
    &\xrightarrow{p} \mathbb{E}\{G_{\boldsymbol \alpha^*}\cdot H^{-1}\cdot \boldsymbol h(\boldsymbol \alpha^*)\cdot \boldsymbol g(\boldsymbol \theta^*,\boldsymbol \alpha^*)'\}\cdot
\end{align*}
\begin{align*}
    & \widehat{E_4} = \frac{1}{N} \cdot \widehat{G_{\boldsymbol \alpha}} \cdot \widehat{H}^{-1}\left[.\sum_{i=1}^N \frac{S_i}{\pi(\boldsymbol X_i,\widehat{\boldsymbol \alpha})^2}\cdot \boldsymbol X_i \cdot \boldsymbol X_i'\right.\\
    & \left. -2 \cdot \sum_{i=1}^ N\frac{S_i}{(\boldsymbol X_i,\widehat{\boldsymbol \alpha})}\cdot \boldsymbol X_i \cdot \boldsymbol X_i'+ \frac{S_i}{\pi(\boldsymbol X_i,\widehat{\boldsymbol \alpha})}\cdot \boldsymbol X_i \cdot \boldsymbol X_i'\right]\cdot(\widehat{H}^{-1})'\cdot(\widehat{G_{\boldsymbol \alpha}})' \\
    & = \frac{1}{N} \sum_{i=1}^N\widehat{G_{\boldsymbol \alpha}} \cdot \widehat{H}^{-1}\left[\frac{S_i}{\pi(\boldsymbol X_i,\hat{\boldsymbol \alpha})^2}\cdot \{1-\pi(\boldsymbol X_i,\widehat{\boldsymbol \alpha})\}\right]\cdot \boldsymbol X_i \cdot \boldsymbol X_i' \cdot (\widehat{H}^{-1})'\cdot(\widehat{G_{\boldsymbol \alpha}})'\\
    & \xrightarrow{p} E_4 =  \mathbb{E}\{\widehat{G_{\boldsymbol \alpha^*}}\cdot \widehat{H}^{-1}\cdot \boldsymbol h(\boldsymbol \alpha^*)\cdot \boldsymbol h(\boldsymbol \alpha^*)'\cdot (\widehat{H}^{-1})'\cdot (\widehat{G_{\boldsymbol \alpha^*}})'\}
\end{align*}
Therefore we obtain
\begin{align*}
    \widehat{E}=\widehat{E}_1-\widehat{E}_2-\widehat{E}_3 + \widehat{E}_4\xrightarrow{p} \mathbb{E}[\{\boldsymbol g(\boldsymbol \theta^*,\boldsymbol \alpha^*)+G_{\boldsymbol \alpha^*}\cdot \boldsymbol \Psi(\boldsymbol \alpha^*)\}\{\boldsymbol g(\boldsymbol \theta^*,\boldsymbol \alpha^*)+G_{\boldsymbol \alpha^*}.\boldsymbol \Psi(\boldsymbol \alpha^*)\}']\cdot
\end{align*}
From Theorem \ref{thm8} using the same approach used in the last step of Theorem \ref{thm3}, we obtain that $\frac{1}{N}\cdot\widehat{G_{\boldsymbol \theta}}^{-1}\cdot\widehat{E}\cdot(\widehat{G_{\boldsymbol \theta}}^{-1})^{'}$ is a consistent estimator of the asymptotic variance of $\widehat{\boldsymbol \theta}$ using CL.
\end{proof}
\subsection{Criteria for coarsening variables for PS method}\label{sec:coarse}
For PS method, the selection probability of $(D,Z_2',W^{'})$ are available where both $Z_2'$ and $W^{'}$ are the coarsened versions of $Z_2$ and $W$. For any continuous random variable say, $L$, a coarsened version $L'$ is defined as, 
\[   
L' = 
     \begin{cases}
      0 &\text{if}\hspace{0.2cm} L<\text{Cutoff}_1\\
       1 &\text{if}\hspace{0.2cm}<\text{Cutoff}_1<=L<=<\text{Cutoff}_2 \\
       \vdots & \vdots\\
      K-1 &\text{if}\hspace{0.2cm} L>\text{Cutoff}_K\\
     \end{cases}
\]
\doublespacing
In this simulation for both $Z_2$ and $W$, we chose $K=2$ and $\text{Cutoff}_1=\epsilon_{0.15}$, $\text{Cutoff}_2=\epsilon_{0.85}$, where $\epsilon_{0.15}$ and $\epsilon_{0.85}$ are the 15 and 85 percentile quartiles for both the continuous random variables respectively.

\subsection{Real Data Cancer Weights} \label{sec:realcancerweights}
As described in Section \ref{sec:realex}, we first estimated the weights for PL and SR without using cancer $w_o$ as a selection variable. Then we used equation \eqref{eq:eqcwt} to modify the weights. The proof of equation \eqref{eq:eqcwt} is presented in this section. At first we note that $1/w_0=P(S=1|\boldsymbol W,Z_2)$. Using Bayes theorem we obtain
\begin{align*}
    P(S=1|D,\boldsymbol W,Z_2)&=\frac{P(D|S=1,\boldsymbol W,Z_2)P(\boldsymbol W,Z_2,S=1)}{P(D|\boldsymbol W,Z_2)P(\boldsymbol W,Z_2)}\\
    &= \frac{P(D|S=1,\boldsymbol W,Z_2)P(S=1|\boldsymbol W,Z_2)P(\boldsymbol W,Z_2)}{P(D|\boldsymbol W,Z_2)P(\boldsymbol W,Z_2)}\\
    &= \frac{P(D|S=1,\boldsymbol W,Z_2)}{P(D|\boldsymbol W,Z_2)}\cdot P(S=1|\boldsymbol W,Z_2)\\
    &= \frac{P(D|S=1,\boldsymbol W,Z_2)}{P(D|\boldsymbol W,Z_2)}\cdot \frac{1}{w_0}\cdot
\end{align*}

\subsection{Details of different subsampling strategies} \label{sec:subdet}
$$\text{logit}(P(S_{\text{sub}}=1|\text{cancer}, \text{sex}, \text{diabetes}))=\mu_0+\mu_1\cdot\text{cancer} + \mu_2\cdot\text{diabetes} +\mu_3 \cdot \text{sex}\cdot$$
We selected $\text{diabetes}$ to represent a part of $\boldsymbol W$ which has a significant association with cancer. The four different subsampling strategies are given by
\begin{itemize}
    \item \textbf{Random Sampling:} $S_{\text{sub}}$ does not depend on any variables. In this case $\mu_0=-0.5$, $\mu_1=0$, $\mu_2=0$ and $\mu_3=0$. 
    \item \textbf{Only cancer:} $S_{\text{sub}}$ only depend on cancer. In this case $\mu_0=-1$, $\mu_1=1$, $\mu_2=0$ and $\mu_3=0$. 
    \item \textbf{Cancer and sex:} $S_{\text{sub}}$ depends on cancer and sex. In this case $\mu_0=-0.5$, $\mu_1=1$, $\mu_2=0$ and $\mu_3=-1$. 
    \item \textbf{Cancer, sex and diabetes:} $S_{\text{sub}}$ depends on cancer, sex and diabetes. In this case $\mu_0=-1$, $\mu_1=1$, $\mu_2=1$ and $\mu_3=-1$. 
\end{itemize}
\subsection*{Supplementary Figures}
\begin{figure}[H]
    \centering
    \includegraphics[scale=0.65]{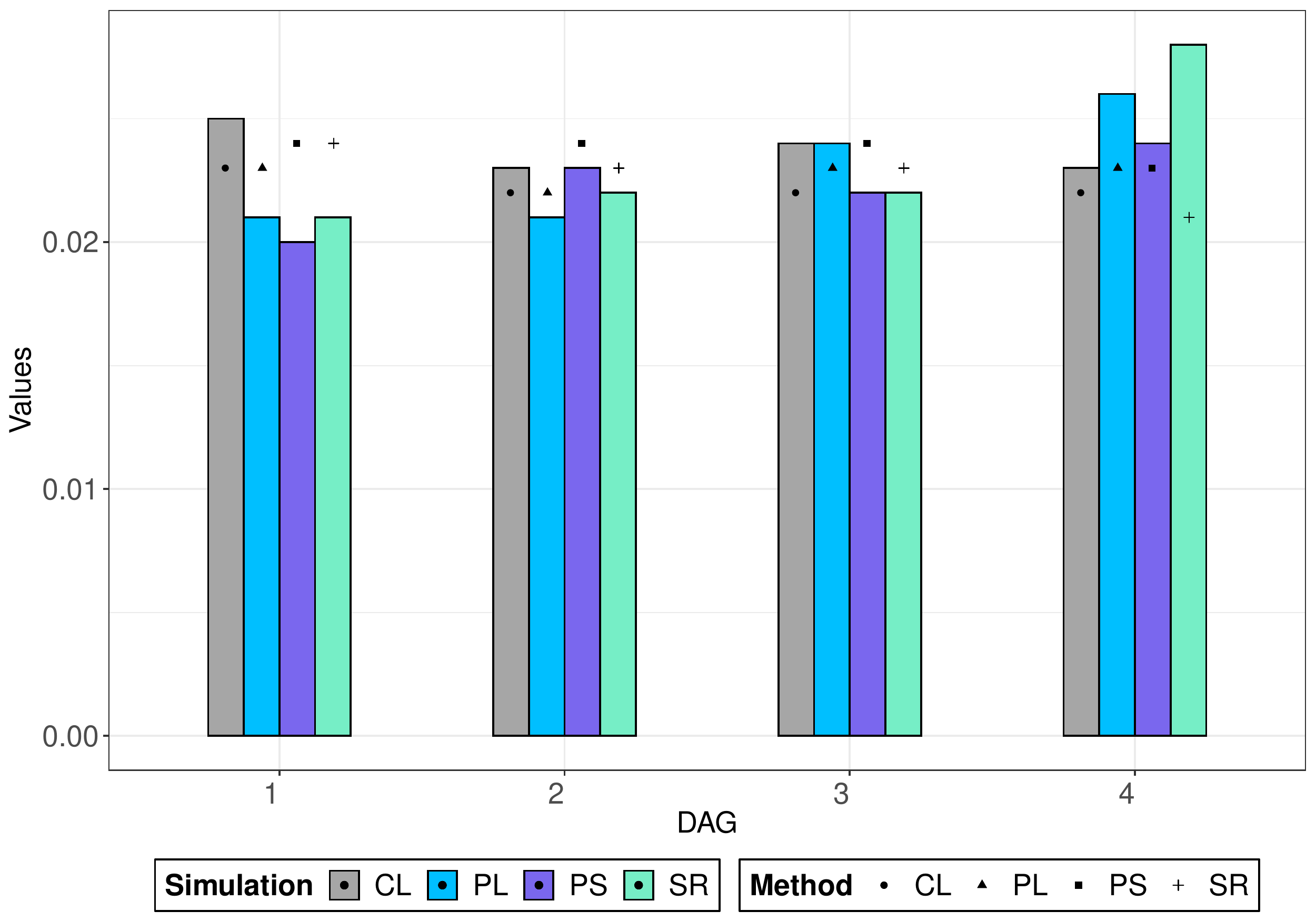}
    \caption{Comparison between empirical monte carlo variance and proposed variance estimators of $\widehat{\theta_2}$ using the four weighted methods  under the four DAGs in Setup 1. The colored bars denote the empirical variances and the dots represent the estimated ones. SR : Simplex regression, PL : Pseudolikelihood, PS : Post Stratification and CL : Calibration.}\label{fig:Var}
\end{figure}
\newpage
\begin{figure}[H]
    \centering
    \includegraphics[width=\linewidth]{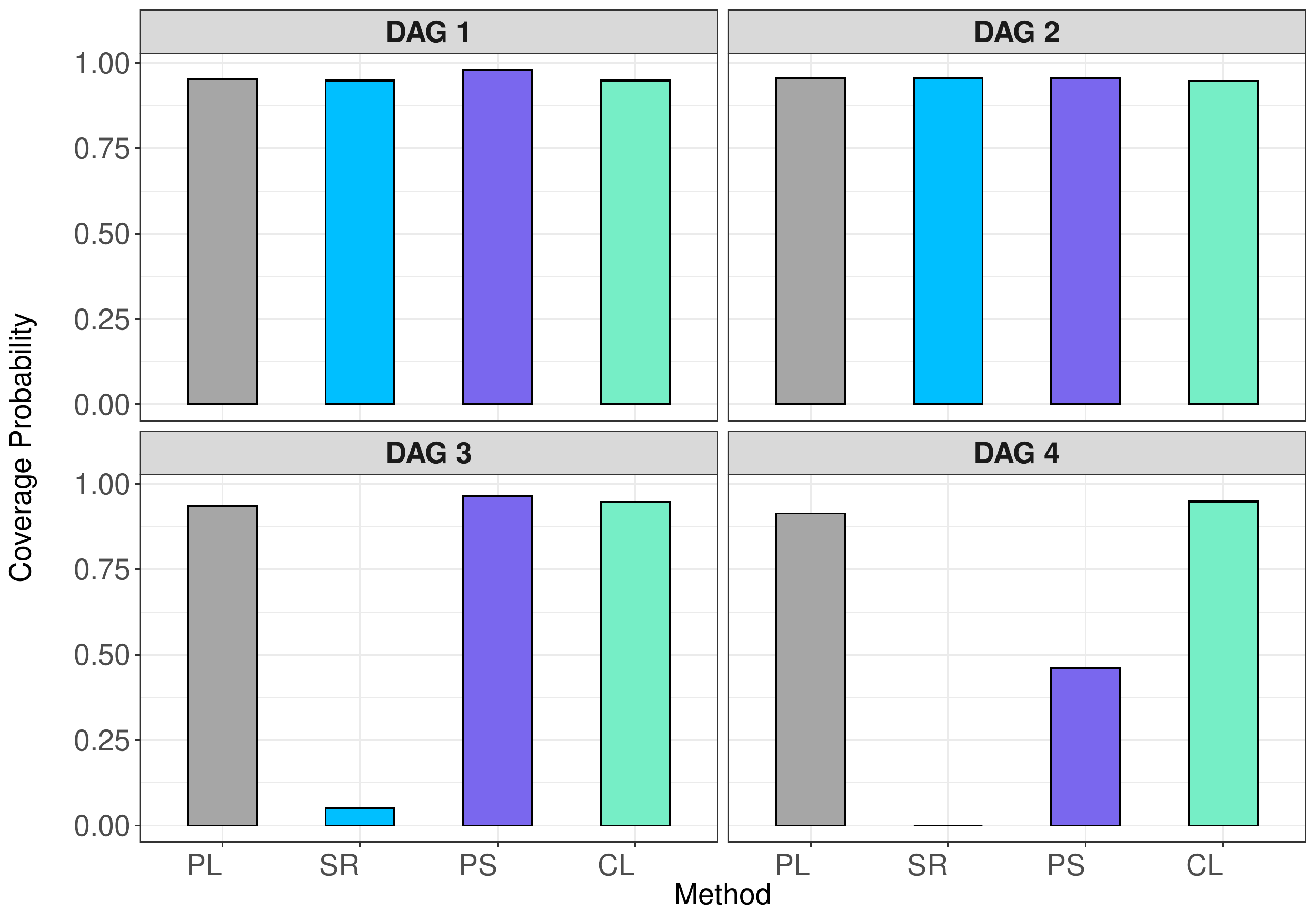}
    \caption{The coverage probabilities for the four weighted methods using the proposed variance estimators of $\widehat{\theta_2}$ under the four DAGs in Setup 1. SR : Simplex regression, PL : Pseudolikelihood, PS : Post Stratification and CL : Calibration.}\label{fig:cp}
\end{figure}

\end{document}